\newtheorem{theorem}{Theorem}[section]
\newtheorem{lemma}{Lemma}[section]
\newtheorem{corollary}{Corollary}[section]
\newtheorem{definition}{Definition}[section]
\newtheorem{example}{Example}[section]
\newcommand{\ie}{\emph{i.e. }}
\newcommand{\etal}{\textit{et al.}}
\newcommand{\m}[1]{\mathcal{#1}}
\def\defined{\: {\stackrel{\scriptscriptstyle \Delta}{=}} \: }
\date{}
\begin{document}

\title{Deterministic Network Model Revisited:\\An Algebraic Network Coding Approach}

\author{MinJi Kim, Elona Erez, Edmund M. Yeh, Muriel M\'edard
\thanks{Manuscript received on ..., and revised on .... The material in this paper was presented at the Information Theory and Applications Workshop (ITA), San Diego, CA, January 2010; and at the 2010 Allerton Conference on Communication, Control, and Computing, Urbana-Champaign, IL, September 2010.}
\thanks{M. Kim and M. M\'{e}dard are with the Department of Electrical Engineering and Computer Science, Massachusetts Institute of Technology, Cambridge, MA 02139 USA (e-mail: minjikim@mit.edu; medard@mit.edu).}
\thanks{E. Erez and E. M. Yeh are with the Department of Electrical Engineering, Yale University, New Haven, CT 06511 USA (e-mail: elona.erez@yale.edu; edmund.yeh@yale.edu).}
}
\maketitle

\begin{abstract}
The capacity of multiuser networks has been a long-standing problem in information theory. Recently, Avestimehr \etal\ have proposed a deterministic network model to approximate multiuser wireless networks. This model, known as the ADT network model, takes into account the broadcast nature of wireless medium and interference.

We show that the ADT network model can be described within the algebraic network coding framework introduced by Koetter and M\'{e}dard. We prove that the ADT network problem can be captured by a single matrix, and show that the min-cut of an ADT network is the rank of this matrix; thus, eliminating the need to optimize over exponential number of cuts between two nodes to compute the min-cut of an ADT network. We extend the capacity characterization for ADT networks to a more general set of connections, including single unicast/multicast connection and non-multicast connections such as multiple multicast, disjoint multicast, and two-level multicast. We also provide sufficiency conditions for achievability in ADT networks for any general connection set. In addition, we show that random linear network coding, a randomized distributed algorithm for network code construction, achieves the capacity for the connections listed above. Furthermore, we extend the ADT networks to those with random erasures and cycles (thus, allowing bi-directional links).

In addition, we propose an efficient linear code construction for the deterministic wireless multicast relay network model. Note that Avestimehr \etal's proposed code construction is not guaranteed to be efficient and may potentially involve an infinite block length. Unlike several previous coding schemes, we do not attempt to find flows in the network. Instead, for a layered network, we maintain an invariant where it is required that at each stage of the code construction, certain sets of codewords are linearly independent.
\end{abstract} 
\begin{keywords}
Network Coding, Deterministic Network, Algebraic Coding, Multicast, Non-multicast, Code Construction
\end{keywords}

\section{Introduction}\label{sec:Introduction}
Finding the capacity as well as the code construction for the multi-user wireless networks
are generally open problems. Even the relatively simple relay network with one source, one sink, and one relay, has not been fully characterized. There are two sources of disturbances in multi-user wireless networks -- channel noise and interference among users in the network. In order to better approximate the Gaussian multi-user wireless networks, \cite{adt1}\cite{adt2} proposed a binary linear deterministic network model (known as the ADT model), which takes into account the multi-user interference but not the noise. A node within the network receives the bit if the signal is above the noise level; multiple bits that simultaneously arrive at a node are superposed.

References \cite{adt1}\cite{adt2} showed that, for a multicast connection where a single source wishes to transmit the same data to a set of destinations, the achievable rate is equal to the minimal cut between the source and any of the destinations. Note that min-cut of an ADT network may not equal to the graph theoretical cut value, as we shall discuss in Section \ref{sec:mincut}. In addition, they showed that the minimal cut between the source and a destination is equal to the minimal rank of incidence matrices of all cuts between the two nodes. This can be viewed as the equivalent of the Min-cut Max-flow criterion in the network coding for wireline networks \cite{ahlswede}\cite{algebraic}. It has been shown that for several networks, the gap between the capacity of the deterministic ADT model and that of the corresponding Gaussian network is bounded by a constant number of bits, which does not depend on the specific channel fading parameters \cite{adt1}\cite{bresler08}\cite{avestimehr09}.

In this paper, we make a connection between the ADT network and network coding -- in particular, algebraic network coding introduced by Koetter and M\'{e}dard \cite{algebraic}. This paper is based on work from \cite{mm}\cite{ee0}\cite{ee}. Other approaches to operations in high SNR networks have been proposed \cite{analog_opt}, however, we do not compare these different approaches but build upon the given model proposed by \cite{adt1}\cite{adt2}. We show that
the ADT network problems, including that of computing the min-cut and constructing a code, can be captured by the algebraic network coding framework.

In the context of network coding, \cite{algebraic} showed that the solvability of the communication problem \cite{ahlswede} is equivalent to ensuring that a certain polynomial does \emph{not} evaluate to zero -- \ie avoid the roots of this certain polynomial. Furthermore, \cite{algebraic} showed that there are only \emph{a fixed finite number of roots} of the polynomial; thus, with large enough field size, decodability can be guaranteed even under randomized coding schemes as shown in \cite{rlc}. As we increase the field size $\mathbf{F}_q$, the space of feasible network codes increases exponentially; while the number of roots remain fixed.

We show that the solvability of ADT network problem can be characterized in a similar manner. The important difference between the algebraic network coding in \cite{algebraic} and the ADT network is that the broadcast as well as the interference constraints are embedded in the ADT network. Note that the interference constraint, represented by the additive multiple access channels (MAC), can be easily incorporated into the algebraic framework in \cite{algebraic} by pre-encoding at the transmitting nodes (\ie MAC users). This is due to the fact that the MAC is modeled using finite field additive channel; thus, the operations performed by the MAC can be ``canceled'' by the transmitter appropriately pre-encoding the packets.

On the other hand, the broadcast constraint may seem more difficult to incorporate, as the same code affects the outputs of the broadcast channel 
simultaneously, and the dependencies propagate through the network. Thus, in essence, this paper shows that this broadcast constraint is not problematic.

To briefly describe the intuition, consider an ADT network without the broadcast constraint -- \ie the broadcast edges do not need to carry the same information. Using this ``unconstrained'' version of the ADT network, the algebraic framework in \cite{algebraic} can be applied directly; thus, there is only a finite fixed number of roots that need to be avoided. Furthermore, as the field size increases, the probability of randomly selecting a root approaches zero. Now, we ``re-apply'' the broadcast constraints to this unconstrained ADT network. The broadcast constraint fixes the codes of the broadcast edges to be the same; this is equivalent to intersecting the space of network coding solutions with an hyperplane, which enforces the output ports of the broadcast to carry the same code. As shown in Figure \ref{fig:poly}, this operations does not change the polynomial whose root we have to avoid, but changes the hyperspace we operate in. As a result, this operation does not affect the roots of the polynomial; thus, there are still only a fixed finite number of roots that need to avoided, and with high enough field sizes, the probability of randomly selecting a root approaches zero. Note that intersecting the space of network coding solutions with an hyperplane may even ``remove'' some roots of the polynomial from consideration; therefore, we may effectively have fewer roots to avoid. By the same argument as \cite{algebraic}\cite{rlc}, we can then show that the solvability of an ADT network problem is equivalent to ensuring that a certain polynomial does not evaluate to zero within the space defined by the polynomial and the broadcast constraint hyperplane. As a result, we can describe the ADT network within the algebraic network coding framework and extend the random linear network coding results to the ADT networks.

\begin{figure}[tbp]
\begin{center}
\includegraphics[width=0.26\textwidth]{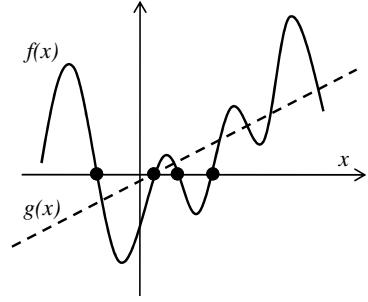}
\end{center}\vspace*{-.1cm}\caption{A polynomial $f(x)$ and a hyperplane $g(x)$ with one variable $x \in\mathbb{R}$. The black dots represent the roots of $f(x)$. When considering the space where $f(x)$ intersects $g(x)$, the hyperplane $g(x)$ limits the space in which $f(x)$ operates in; however, does not change the roots of $f(x)$. Some of the roots of $f(x)$ may no longer be ``feasible'' given the additional constraint; thus, this operation may reduce the number of roots that we have to consider.}\label{fig:poly}
\end{figure}

Using this insight, we prove that the ADT network problem can be captured by a single matrix, called the \emph{system matrix}. We show that the min-cut of an ADT network is
the rank of the system matrix; thus, eliminating the need to optimize over exponential number of cuts between two nodes to compute the min-cut of an ADT network. We extend the capacity characterization for ADT networks to a more general set of connections, including single unicast/multicast connection and non-multicast connections such as multiple multicast, disjoint multicast, and two-level multicast. We also provide sufficiency conditions for achievability in ADT networks for any general connection set. Furthermore, we extend the results on ADT networks to those with random erasures and cycles (thus, allowing bi-directional links).

We show that a direct consequence of this connection between ADT network problems and algebraic network coding is that random linear network coding, a randomized distributed algorithm for network code construction, achieves the capacity for the connections listed above. However, random linear network coding does not guarantee decodability; it allows decodability at all destinations with high probability.

Therefore, we propose an efficient linear code construction for multicasting in ADT networks that \emph{guarantees decodability}, if such code exists. Note that Avestimehr \etal's proposed code construction is not guaranteed to be efficient and may potentially involve an infinite block length. Unlike several previous coding schemes \cite{fragouli}\cite{sadegh09}\cite{goemans}, we do not attempt to find flows in the network. Instead, for a layered network, we maintain an invariant where it is required that at each stage of the code construction, certain sets of codewords are linearly independent. We assume that any node in the network can potentially be a destination. We design the code such that if the min-cut from the source to a certain node is at least the required rate, then the node will be able to reconstruct the data of the source using matrix inversion. In addition, when normalized by the number of sinks, our code construction has a complexity which is comparable to those of previous coding schemes for a single sink.

Our construction can be viewed as a non-straightforward generalization of the algorithm in \cite{jaggi05} for the construction of linear codes for multicast wireline networks. Each sink receives on its incoming edges a linear transformation of the source. The generalization of the code construction to the ADT network model is not straightforward, due to the broadcast constraint and the interference constraint, which are embedded into the ADT network model.

The paper is organized as follows. We present the network model in Section \ref{sec:model}, and an algebraic formulation of the ADT network in Section \ref{sec:algebraic}. Using this algebraic formulation, we provide a definition of the min-cut in ADT networks in Section \ref{sec:mincut}. In Sections \ref{sec:singlesource}, we restate the Min-cut Max-flow theorem using our algebraic formulation, and present new capacity characterizations for ADT networks to a more general set of traffic requirements in Section \ref{sec:general}. The results in Section \ref{sec:general} show the optimality of linear operations for non-multicast connections such as disjoint multicast and two-level multicast connections. In Section \ref{sec:robust}, we study ADT networks with link failures, and characterize the set of link failures such that the network solution is guaranteed to remain successful. Furthermore, in Sections \ref{sec:delay}, we extend the achievability results to ADT networks with delay. In Section \ref{sec:code}, we present our code construction algorithm for multicasting in ADT networks, and analyze its performance. Finally, we conclude in Section \ref{sec:conclusions}.

\section{Background}\label{sec:background}
Avestimeher \etal\ introduced the ADT network model to better approximate wireless networks \cite{adt1}\cite{adt2}. In the same work, they characterized the capacity of the ADT networks, and generalized the Min-cut Max-flow theorem for graphs to ADT networks for single unicast/multicast connections.

It has been shown that for several networks, the ADT network model approximates the capacity of the corresponding Gaussian network to within a constant number of bits. For instance, \cite{adt1} considered the single relay channel and the diamond network, and showed that the gap between the capacity of the ADT model and that of Gaussian network is within 1 bit and 2 bits, respectively. Reference \cite{bresler08} considered many-to-one and one-to-many Gaussian interference networks. The networks in \cite{bresler08} are special cases of interference network with multiple users, where the interference are either experienced (many-to-one) or caused by (one-to-many) a single user. It was shown that in these cases, the gap between the capacity of the Gaussian interference channel and the corresponding deterministic interference channel is again bounded by a constant number of bits. The work in \cite{bresler08} provided an alternative proof to \cite{etkin08} on the existence of a scheme that can achieve a constant gap from the capacity for all values of channel parameters. In \cite{avestimehr09}, the half-duplex butterfly network was considered. They showed that the deterministic model approximates the symmetric Gaussian butterfly network to within a constant.

As a result, there has been significant interest in finding an efficient code construction algorithm for the ADT network model.
In the case of unicast communication, a number of previous code constructions have been proposed for wireless relay networks. It is important to observe that in the code constructions for unicast communication, routing \cite{sadegh09} or one-bit operations \cite{fragouli} are sufficient for achieving the capacity of the deterministic model. Amaudruz and Fragouli \cite{amaudruz09} proposed an algorithm which can be viewed as an application of the Ford and Fulkerson flow construction to the deterministic model. The complexity of the algorithm was shown to be $O(|\m{V}||\m{E}|R^5)$, where $\m{V}$ is the set of nodes in the network, $\m{E}$ is the set of edges, and $R$ is the rate of the code. In \cite{sadegh09}, another algorithm for finding the flow for
unicast networks was developed. The algorithm is based on an extension of the Rado-Hall transversal theorem for matroids and on
Edmonds' theorem. The transmission scheme in \cite{sadegh09} extracts at each relay node a subset of the input vectors and sets
the outputs to the same values as that subset. In \cite{goemans}, it was shown that the deterministic model can be
viewed as a special case of a more abstract flow model, called \emph{linking network}, which is based on
linking systems and matroids. Using this approach, \cite{goemans} achieved a code complexity $O(\lambda N_{layer}^3\log N_{layer})$, where $\lambda$ is the number of layers in the layered network, and $N_{layer}$ is the maximal number of nodes in a layer.  Note that linear network coding is known to be matroidal \cite{matroids}; thus, the fact that ADT networks are matroidal \cite{goemans} is consistent with our result.

In the case for multicast communication, however, routing or one-bit operations may not be sufficient to achieve the capacity in the ADT model. This can be shown by considering the example in Figure \ref{example_network}, which is given in \cite{feder03}\cite{lehman04}\cite{fragouli04} for network coding. From the analysis for network coding, it follows that in the case of the deterministic model, the maximal rate $2$ can be achieved simultaneously for all sinks only with an alphabet size which is at least $3$. To see this, observe that to achieve rate $2$ the source has to transmit at its outputs two statistically independent symbols $x_1,x_2$. For node $v^2_i,1\leq i\leq 4$ at the second layer, the transmitted symbol is a certain function  of the symbols $x_1,x_2$, given by $y_i=f_i(x_1,x_2)$. Node $v^3_i,1\leq i\leq 4$ at the third layer transmits at its outputs two functions of $y_i$, given by $f^1_i(y_i)$, $f^2_i(y_i)$. Sink $t_i,1\leq i\leq 6$ receives at its two inputs symbols of the form $f^1_j(y_j),f^2_j(y_j)+f^1_k(y_k)$ for some $1\leq j\leq 4,1\leq k\leq 4,j\neq k$. It follows that without rate loss, we can always assume $f^1_j(y_j)=y_j$ for each $1\leq j\leq 4$. In that case, the sink $t_i$ receives $y_j$ at its upper input and can therefore find $ f^2_j(y_j)$ and eliminate it from its second received symbol. Thus, it is equivalent to the situation in which the sink receives $y_j,y_k$. This in turn is exactly the situation in \cite{lehman04} (Theorem 3.1) for network coding. Since the channels are all binary in the deterministic model, it follows that the minimal required alphabet size is in fact $2^2 = 4$, and therefore the minimal vector length is $\log_2(4)=2$. Thus, for multicasting in ADT networks, we need to either operate in a higher field size, $\mathbb{F}_q$, $q \geq 2$, or use vector coding (or both).

\begin{figure}[tbp]
\begin{center}
\includegraphics[width=0.4\textwidth]{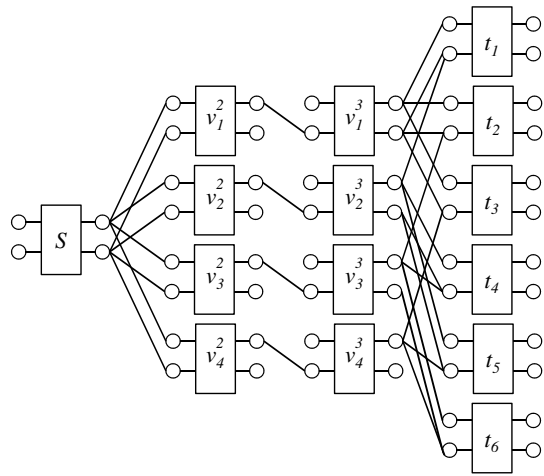}
\end{center}\vspace*{-.3cm}\caption{Example network for a non-binary code.}\label{example_network}
\end{figure}

References \cite{fragouli_itw}\cite{fragouli_vector}, independently from \cite{mm}\cite{ee0}\cite{ee}, proposed a polynomial time algorithm for multicasting in ADT networks. In particular, \cite{fragouli_vector} extended the algebraic network coding result \cite{algebraic} to vector network coding, and showed that constructing a valid vector code is equivalent to certain algebraic conditions. This result \cite{fragouli_vector} is supported by the result from \cite{jaggi}. Reference \cite{jaggi} introduced network codes, called \emph{permute-and-add}, that only require bit-wise vector operations to take advantage of low-complexity operations in $\mathbb{F}_2$. In addition, \cite{jaggi} showed that codes in higher field size $\mathbb{F}_q$ can be mapped to binary-vector codes without loss in performance. This insight, combined with that of \cite{algebraic}, suggests that an algebraic property of a scalar code may translate into another algebraic property of the corresponding vector code.


\section{Network Model}\label{sec:model}
\begin{figure}[tbp]
\begin{center}
\includegraphics[width=0.40\textwidth]{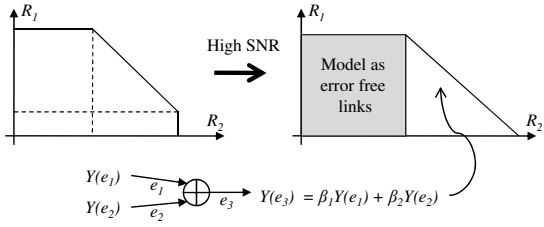}
\end{center}\vspace*{-.1cm}\caption{Additive MAC with two users, and the corresponding rate region. The triangular region is modeled as a set of finite field additive MACs.}\label{fig:mac}
\end{figure}

As in \cite{adt1}\cite{adt2}, we shall consider the high SNR regime, in which interference is the dominating factor. In high SNR, analog network coding, which allows/encourages strategic interference, is near optimal \cite{analog_opt}. Analog network coding is a physical layer coding technique, introduced by \cite{analogNC}, in which intermediate nodes amplify-and-forward the received signals without decoding. Thus, the nodes amplify not only the superposed signals from different transmitters but also the noise. Note that a network operating in high SNR regime is different from a network with high gain since a large gain amplifies the noise as well as the signal.


In the high SNR regime, the Cover-Wyner region may be well approximated by the combination of two regions, one square and one triangular, as in Figure \ref{fig:mac}. The square (shaded) part can be modeled as parallel links for the users, since they do not interfere. The triangular (unshaded) part can be considered as that of a time-division multiplexing (TDM), which is equivalent to using noiseless finite-field additive MAC \cite{mac}. This result holds not only for binary field additive MAC, but also for higher field size additive MAC \cite{mac}.

The ADT network model uses binary channels, and thus, binary additive MACs are used to model interference. Prior to \cite{adt1}\cite{adt2}, Effros \etal \ presented an additive MAC over a finite field $\mathbb{F}_q$ \cite{additive}. The Min-cut Max-flow theorem holds for all of the cases above. It may seem that the ADT network model differs greatly from that of \cite{additive} owing to the difference in field sizes used. In general, codes in $\mathbf{F}_q$ subsume binary codes, \ie \emph{binary-vector} codes in $(\mathbf{F}_2)^m$. However, for point-to-point links with memory (or equivalently by allowing nodes to code across time), we can convert a code in $(\mathbf{F}_2)^m$ to a code in higher field size $\mathbf{F}_q$ and vice versa by normalizing to an appropriate time unit.
 Note that ADT network model uses binary additive MACs and point-to-point links. Therefore, our work in part shows an equivalence of higher field size codes and binary-vector codes in ADT networks.

As noted in Section \ref{sec:background}, \cite{jaggi} presented a method of converting between binary-vector codes and higher field size codes. We can achieve a higher field size in ADT networks by combining multiple binary channels.
In other words, consider two nodes $V_1$ and $V_2$ with two binary channels connecting $V_1$ to $V_2$. Now, instead of considering them as two binary channels, we can ``combine'' the two channels as one with capacity of 2-bits. In this case, instead of using $\mathbb{F}_2$, we can use a larger field size of $\mathbb{F}_4$. Thus, selecting a larger field size $\mathbb{F}_q$, $q >2$ in ADT network results in fewer but higher capacity parallel channels. Reference \cite{jaggi} also provides a conversion from a code in a higher field $\mathbb{F}_q$ to a binary-vector scheme in $\mathbb{F}_{2^m}$ where $q\leq 2^m$. Therefore, a solution in $\mathbb{F}_q$ may be converted back to a binary-vector scheme, which may be more appropriate for the original ADT model. Furthermore, it is known that to achieve capacity for multicast connections, $\mathbb{F}_2$ is not sufficient \cite{edmund}; thus, we need to operate in a higher field size. Therefore, we shall not restrict ourselves to $\mathbb{F}_2$.

\begin{figure}[tbp]
\begin{center}
\includegraphics[width=0.45\textwidth]{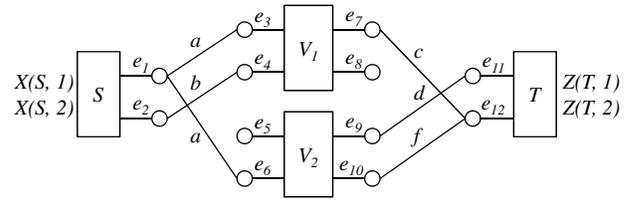}
\end{center}\vspace*{-.3cm}\caption{Example network. We omit $I(S)$ and $O(T)$ in this diagram as they do not participate in the communication.}\label{fig:network}
\end{figure}

\begin{figure}[tbp]
\begin{center}
\includegraphics[width=0.31\textwidth]{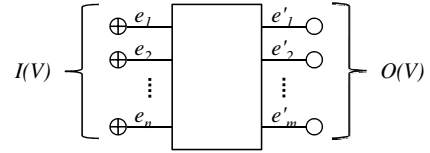}
\end{center}\vspace*{-.1cm}\caption{A supernode $V$.}\label{fig:supernode}
\end{figure}

\begin{figure}[tbp]
\begin{align*}
Y(e_1) & = \alpha_{(1, e_1)}X(S, 1) + \alpha_{(2, e_1)}X(S, 2)\\
Y(e_2) & = \alpha_{(1, e_2)}X(S, 1) + \alpha_{(2, e_2)}X(S, 2)\\
Y(e_3) &= Y(e_6) = Y(e_1)\\
Y(e_4) &= Y(e_2)\\
Y(e_5) & = Y(e_{8}) = 0\\
Y(e_{7}) & = \beta_{(e_3, e_{7})}Y(e_3) + \beta_{(e_4, e_{7})}Y(e_4)\\
Y(e_{9}) & = Y(e_{11}) =  \beta_{(e_6, e_{9})}Y(e_6)\\
Y(e_{10}) & = \beta_{(e_6, e_{10})}Y(e_6)\\
Y(e_{12}) & = Y(e_{7}) + Y(e_{10})\\
Z(T, 1) & = \epsilon_{(e_{11}, (T, 1))} Y(e_{11}) +  \epsilon_{(e_{12}, (T, 1))} Y(e_{12})\\
Z(T, 2) & = \epsilon_{(e_{11}, (T, 2))} Y(e_{11}) +  \epsilon_{(e_{12}, (T, 2))} Y(e_{12})
\end{align*}\vspace*{-.2cm}\caption{Equations relating the various processes of Figure \ref{fig:network}.}\label{fig:equations}
\end{figure}

\begin{figure*}[tbp]
\begin{center}
\includegraphics[width=0.70\textwidth]{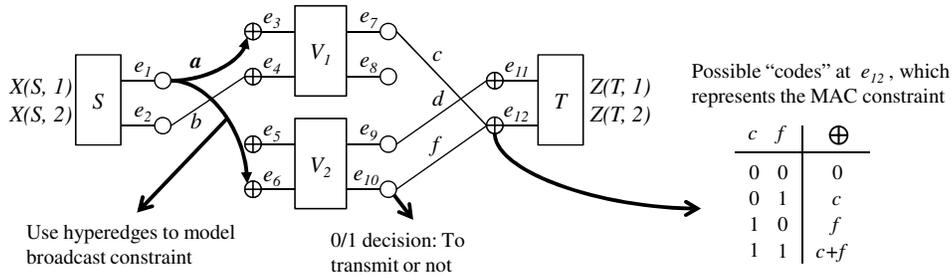}
\end{center}\vspace*{-.2cm}\caption{A new interpretation of the example network from Figure \ref{fig:network}. The broadcast channel is modeled using an \emph{hyperedge}. As a result, an output port's decision to transmit or not naturally affects all the input ports adjacent to it. Furthermore, interference is modeled using a finite field additive MAC, which provides a set of possible \emph{binary codes} at the input ports.
}\label{fig:newnetwork}
\end{figure*}
We now proceed to defining the network model precisely. A wireless network is modeled using a directed graph $G = (\m{V}, \m{E})$ with a \emph{supernode} set $\m{V}$ and an edge set $\m{E}$, as shown in Figure \ref{fig:network}. A supernode $V\in \m{V}$ is a node of the original network. We use the term supernode to emphasize the fact that supernode $V$ consists of \emph{input ports} $I(V)$ and \emph{output ports} $O(V)$, as shown in Figure \ref{fig:supernode}. Let $\m{S}, \m{T} \subseteq \m{V}$ be the set of source and destination supernodes. An edge $(e_1, e_2)$ may exist from an output port $e_1 \in O(V_1)$ to an input port $e_2 \in I(V_2)$, for any $V_1, V_2 \in \m{V}$. Let $\m{E}(V_1, V_2)$ be the set of edges from $O(V_1)$ to $I(V_2)$. All edges are of unit capacity, where capacity is normalized with respect to the symbol size of $\mathbb{F}_q$.

Noise is embedded, or \emph{hard-coded}, in the structure of the ADT network in the following way. Parallel links of $\m{E}(V_1, V_2)$ deterministically model noise between $V_1$ and $V_2$.
Let $SNR_{(V_i, V_j)}$ be the signal-to-noise ratio from supernode $V_i$ to supernode $V_j$. Then, $|\m{E}(V_1, V_2)| = \lceil \frac{1}{2} \log SNR_{(V_i, V_j)}\rceil$. Thus, the number of edges between two supernodes $V_i$ and $V_j$ represents the channel quality (equivalently, the noise) between the two supernodes.

Given such a wireless network $G = (\m{V}, \m{E})$, let $\m{S}$ be the set of sources. A source supernode $S \in \m{S}$ has \emph{independent} random processes $\m{X}(S) = [X(S, 1), X(S, 2), ..., X(S, \mu(S))]$, $\mu(S) \leq |O(S)|$, which it wishes to communicate to a set of destination supernodes $\m{T}(S) \subseteq \m{T}$. In other words, we want $T \in \m{T}(S)$ to replicate a subset of the random processes, denoted $\m{X}(S, T) \subseteq \m{X}(S)$, by the means of the network. Note that the algebraic formulation is not restricted to multicast connections; different sources may wish to communicate to different subsets of destinations. We define a \emph{connection} $c$ as a triple $(S, T, \m{X}(S, T))$, and the rate of $c$ is defined as $R(c) = \sum_{X(S, i) \in \m{X}(S, T)} H(X(S, i)) = |\m{X}(S,T)|$ (symbols).

Information is transmitted through the network in the following manner. A supernode $V$ sends information through $e \in O(V)$ at a rate at most one symbol per time unit. Let $Y(e)$ denote the random process at port $e$. In general, $Y(e)$, $e\in O(V)$, is a function of $Y(e')$, $e'\in I(V)$. In this paper, we consider only linear functions.
\begin{equation}\label{eq:y}
Y(e) = \sum_{e'\in I(V)} \beta_{(e', e)} Y(e'), \text{ for $e \in O(V)$.}
\end{equation}
For a source supernode $S$, and $e\in O(S)$,
\begin{equation}\label{eq:y-s}
Y(e) =\sum_{e'\in I(V)} \beta_{(e', e)} Y(e') + \sum_{X(S,i) \in \m{X}(S)} \alpha_{(i, e)} X(S, i).
\end{equation}
Finally, the destination $T$ receives a collection of input processes $Y(e')$, $e' \in I(T)$. Supernode $T$ generates a set of random processes $\m{Z}(T)= [Z(T, 1), Z(T, 2), ..., Z(T, \nu(T))]$ where
\begin{equation}\label{eq:z}
Z(T, i) = \sum_{e' \in I(T)} \epsilon_{(e', (T, i))} Y(e').
\end{equation}
A connection $c = (S, T, \m{X}(S, T))$ is established successfully if $\m{X}(S) = \m{Z}(T)$. A supernode $V$ is said to \emph{broadcast} to a set $\m{V'}\subseteq \m{V}$ if $\m{E}(V, V') \ne \emptyset$ for all $V'\in  \m{V'}$. In Figure \ref{fig:network}, $S$ broadcasts to supernodes $V_1$ and $V_2$. Superposition occurs at the input port $e' \in I(V)$, \ie $Y(e') = \sum_{(e, e') \in \m{E}} Y(e)$ over a finite field $\mathbb{F}_q$. We say there is a $|\m{V'}|$-user MAC channel if $\m{E}(V', V) \ne \emptyset$ for all $V' \in \m{V'}$. In Figure \ref{fig:network}, supernodes $V_1$ and $V_2$ are users, and $T$ the receiver in a 2-user MAC.

For a given network $G$ and a set of connections $\m{C}$, we say that $(G,\m{C})$ is \emph{solvable} if it is possible to establish successfully all connections $c\in \m{C}$. The broadcast and MAC constraints are given by the network; however, we are free to choose the variables $\alpha_{(i, e)}$, $\beta_{(e', e)}$, and $\epsilon_{(e',i)}$ from $\mathbb{F}_q$. Thus, the problem of checking whether a given $(G, \m{C})$ is solvable is equivalent to finding a feasible assignment to $\alpha_{(i,e)}, \beta_{(e', e)}$, and $\epsilon_{(e', (T, i))}$.

\begin{example}\label{ex:equations} The equations in Figure \ref{fig:equations} relate the various processes in the example network in Figure \ref{fig:network}. Note that in Figure \ref{fig:network}, we have set $Y(e_1) =a$, $Y(e_2)=b$, $Y(e_7)=c$, $Y(e_9) = d$, and $Y(e_{10})= f$ for notational simplicity.
\end{example}

\subsection{An Interpretation of the Network Model}\label{sec:interpretation}

\begin{figure}[tbp]
\begin{center}
\includegraphics[width=0.23\textwidth]{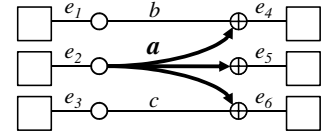}
\end{center}\vspace*{-.1cm}\caption{An example of finite field additive MAC.}\label{fig:constraint}
\end{figure}

The ADT network model uses multiple channels from an output port to model broadcast. In Figure \ref{fig:network}, there are two edges from output port $e_1$ to input ports $e_3$ and $e_6$; however, due to the broadcast constraint, the two edges $(e_1, e_3)$ and $(e_1, e_6)$ carry the same information $a$. This introduces considerable complexity in constructing a network code as well as computing the min-cut of the network \cite{adt1}\cite{adt2}\cite{fragouli}\cite{goemans}. This is because multiple edges from a port do not capture the broadcast dependencies. Furthermore, the broadcast dependencies have to be propagated through the network.

In our approach, we remedy this by introducing the use of hyperedges, as shown in Figure \ref{fig:newnetwork} and Section \ref{sec:algebraic}. An output port's decision to transmit affects the entire hyperedge; thus, the output port transmits to all the input ports connected to the hyperedge simultaneously. This removes the difficulties of computing the min-cut of ADT networks (Section \ref{sec:mincut}), as it naturally captures the broadcast dependencies.

The finite field additive MAC model can be viewed as a set of codes that an input port may receive. As shown in Figure \ref{fig:newnetwork}, input port $e_{12}$ receives one of the four possible codes. The code that $e_{12}$ receives depends on output ports $e_7$'s and $e_{10}$'s decision to transmit or not.

The difficulty in constructing a network code does not come from any single broadcast or MAC constraint. The difficulty in constructing a code is in satisfying multiple MAC and broadcast constraints simultaneously. For example, in Figure \ref{fig:constraint}, the fact that $e_4$ may receive $a+b$ does not constrain the choice of $a$ nor $b$. This is because we can choose any $a$ and $b$ such that $a+b \ne 0$, and ensure that both $a$ and $b$ are decoded as long as enough degrees of freedom are received by the destination node. The same argument applies to $e_6$ receiving $a+c$. However, the problem arises from the fact that a choice of value for $a$ at $e_4$ interacts both with $b$ and $c$. In such a case, we need to ensure that both $a + b \ne 0$ and $a + c \ne 0$; thus, our constraint is $(a+b)(a+c)\ne 0$. As the network grows in size, we will need to satisfy more constraints simultaneously. As we shall see in Section \ref{sec:mincut}, we eliminate this difficulty by allowing the use of a larger field, $\mathbb{F}_q$.

\section{Algebraic Network Coding Formulation}\label{sec:algebraic}

We provide an algebraic formulation for the ADT network problem $(G, \m{C})$, and present an algebraic condition under which the system $(G, \m{C})$ is solvable. We assume that $G$ is acyclic in this section; however, we shall extend the results in this section to ADT networks with cycles in Section \ref{sec:delay}. For simplicity, we describe the multicast problem with a single source $S$ and a set of destination supernodes $\m{T}$, as in Figure \ref{fig:multicast}. However, this formulation can be extended to multiple source $S_1, S_2, ... S_K$ by adding a super-source $S$ as in Figure \ref{fig:supersource}.

We define a system matrix $M$ to describe the relationship between the source's random processes $\m{X}(S)$ and the destinations' processes $\m{Z} = [\m{Z}(T_1), \m{Z}(T_2), ..., \m{Z}(T_{|\m{T}|})]$. Thus, we want to characterize $M$ where
\begin{equation}
\m{Z} = \m{X}(S) \cdot M.
\end{equation}
The matrix $M$ is composed of three matrices, $A$, $F$, and $B$.

\subsection{Adjacency matrix $F$}\label{sec:transfermatrices}

Given $G$, we define the adjacency matrix $F$ as follows:
\begin{equation}\label{eq:F}F_{i, j}= \begin{cases}
1 & \text{if $(e_i, e_j) \in \m{E}$,}\\
\beta_{(e_i, e_j)} & \text{if $e_i \in I(V)$, $e_j \in O(V)$ for $V \in \m{V}$,}\\
0 & \text{otherwise.}
\end{cases}
\end{equation}

\begin{figure}[tbp]
\begin{center}
\includegraphics[width=0.38\textwidth]{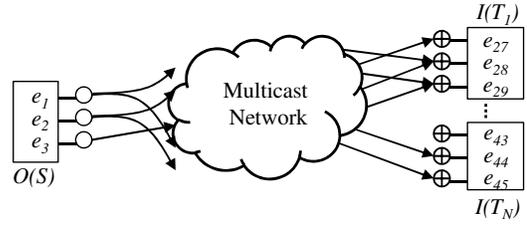}
\end{center}\vspace*{-.3cm}\caption{Single multicast network with source $S$ and receivers $T_1, ..., T_N$.}\label{fig:multicast}
\end{figure}

Matrix $F$ is defined on the ports, rather than on the supernodes. This is because, in the ADT model, each port is the basic receiver/transmitter unit. Each entry $F_{i,j}$ represents the input-output relationships of the ports. A zero entry indicates that the ports are not directly connected, while an entry of one represents that they are connected. The adjacency matrix $F$ naturally captures the physical structure of the ADT network. Note that a row with multiple entries of 1 represents the broadcast hyperedge; while a column with multiple entries of 1 represents the MAC constraint. Note that the 0-1 entries of $F$ represent the \emph{fixed} network topology as well as the broadcast and MAC constraints. On the other hand, $\beta_{(e_i, e_j)}$ are free variables, representing the coding coefficients used at $V$ to map the input port processes to the output port processes. This is the key difference between the work presented here and in \cite{algebraic} -- $F$ is partially fixed in the ADT network model due to network topology and broadcast/MAC constraints, while in \cite{algebraic}, only the network topology affects $F$.

In \cite{adt1}\cite{adt2}, the supernodes are allowed to perform any internal operations; while in \cite{fragouli}\cite{goemans}, only permutation matrices (\ie routing) are allowed. Note that \cite{fragouli}\cite{goemans} only consider a single unicast traffic, in which routing is known to be sufficient. References \cite{adt1}\cite{adt2} showed that linear operations are sufficient to achieve the capacity in ADT networks for a single multicast traffic. We consider a general setup in which $\beta_{(e_i, e_j)} \in \mathbb{F}_q$ -- thus, allowing any matrix operation, as in \cite{adt1}\cite{adt2}.

\begin{figure}[tbp]
\begin{center}
\includegraphics[width=.48\textwidth]{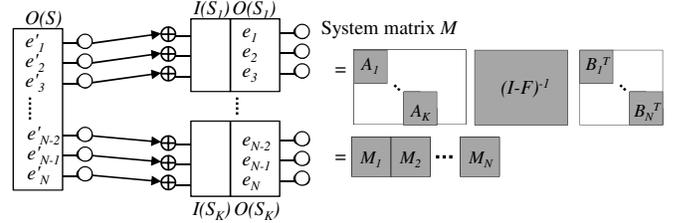}
\end{center}\vspace*{-.1cm}\caption{A network with multiple sources $S_1, S_2, ..., S_K$ can be converted to a single source problem by adding a super-source $S$ with $|O(S)| = \sum_{i=1}^K |O(S_i)|$. Each $e'_j \in O(S)$ has a ``one-to-one connection'' to a $e_j \in O(S_i)$, for $i \in [1, K]$. Matrix $A_i$ represents the encoding matrix for source $S_i$, while $B_j$ is the decoding matrix at destination $T_j$. The white area represents the zero elements, and the shaded area represents the coding coefficients.
}\label{fig:supersource}
\end{figure}

Note that $F^k$, the $k$-th power of an adjacency matrix of a graph $G$, shows the existence of paths of length $k$ between any two nodes in $G$. Therefore, the series $I + F + F^2 + F^3 + ...$ represents the connectivity of the network. It can be verified that $F$ is nilpotent, which means that there exists a $k$ such that $F^k$ is a zero matrix. As a result, $I + F + F^2 + F^3 + ...$ can be written as $(I-F)^{-1}$. Thus, $(I-F)^{-1}$ represents the impulse response of the network. Note that, $(I-F)^{-1}$ exists for all acyclic network since $I-F$ is an upper-triangle matrix with all diagonal entries equal to 1; thus, $\det(I-F) = 1$.


\begin{example}\label{ex:F}
In Figure \ref{fig:F}, we provide the $12 \times 12$ adjacency matrix $F$ for the example network in Figures \ref{fig:network} and \ref{fig:newnetwork}. Note that the first row (with two entries of 1) represents the broadcast hyperedge, $e_1$ connected to both $e_3$ and $e_6$. The last column with two entries equal to 1 represents the MAC constraint, both $e_7$ and $e_{10}$ transmitting to $e_{12}$.
The highlighted elements in $F$ represent the coding variables, $\beta_{(e', e)}$, of $V_1$ and $V_2$. 
For some $(e', e)$, $\beta_{(e',e)} =0$ since these ports of $V_1$ and $V_2$ are not used.
\end{example}

\begin{figure}[tbp]
\[\scriptsize
\left(
\begin{array}{cccccccccccc}
0 & 0 & 1 & 0 & 0 & 1 & 0 & 0 & 0 & 0 & 0 & 0\\
0 & 0 & 0 & 1 & 0 & 0 & 0 & 0 & 0 & 0 & 0 & 0\\
0 & 0 & 0 & 0 & 0 & 0 & \cellcolor[gray]{.8} \beta_{(e_3, e_7)} & \cellcolor[gray]{.8} 0 & 0 & 0 & 0 & 0\\
0 & 0 & 0 & 0 & 0 & 0 & \cellcolor[gray]{.8} \beta_{(e_4, e_7)} & \cellcolor[gray]{.8} 0 & 0 & 0 & 0 & 0\\
0 & 0 & 0 & 0 & 0 & 0 & 0 & 0 & \cellcolor[gray]{.8} 0 & \cellcolor[gray]{.8} 0 & 0 & 0\\
0 & 0 & 0 & 0 & 0 & 0 & 0 & 0 & \cellcolor[gray]{.8} \beta_{(e_6, e_{9})} & \cellcolor[gray]{.8} \beta_{(e_6, e_{10})}& 0 & 0\\
0 & 0 & 0 & 0 & 0 & 0 & 0 & 0 & 0 & 0 & 0 & 1\\
0 & 0 & 0 & 0 & 0 & 0 & 0 & 0 & 0 & 0 & 0 & 0\\
0 & 0 & 0 & 0 & 0 & 0 & 0 & 0 & 0 & 0 & 1 & 0\\
0 & 0 & 0 & 0 & 0 & 0 & 0 & 0 & 0 & 0 & 0 & 1\\
0 & 0 & 0 & 0 & 0 & 0 & 0 & 0 & 0 & 0 & 0 & 0\\
0 & 0 & 0 & 0 & 0 & 0 & 0 & 0 & 0 & 0 & 0 & 0\\
\end{array}
\right)
\]\vspace*{-.1cm}\caption{$12 \times 12$ adjacency matrix $F$ for network in Figure \ref{fig:network}.}\label{fig:F}
\end{figure}

\subsection{Encoding matrix $A$}\label{sec:encodingmatrix}

Matrix $A$ represents the encoding operations performed at $S$. We define a $|\m{X}(S)| \times |\m{E}|$ encoding matrix $A$ as follows:
\begin{equation}
A_{i,j} = \begin{cases}
\alpha_{(i, e_j)}  & \text{if $e_j\in O(S)$ and $X(S,i) \in \m{X}(S)$,}\\
0& \text{otherwise}.
\end{cases}
\end{equation}

\begin{example}\label{ex:A}
We provide the $2 \times 12$ encoding matrix $A$ for the network in Figure \ref{fig:network}.
\[
A = \begin{pmatrix}
\alpha_{1, e_1} & \alpha_{1, e_2} & 0 & \dotsb & 0\\
\alpha_{2, e_1} & \alpha_{2, e_2} & 0 & \dotsb & 0\\
\end{pmatrix}.
\]
\end{example}

\subsection{Decoding matrix $B$}\label{sec:decodingmatrix}

Matrix $B$ represents the decoding operations performed at the destination $T \in \m{T}$. Since there are $|\m{T}|$ destination nodes, $B$ is a matrix of size $|\m{Z}| \times |\m{E}|$ where $\m{Z}$ is the set of random processes derived at the destination supernodes. We define the decoding matrix $B$ as follows:
\begin{equation}
B_{i,(T_j, k)} = \begin{cases}
\epsilon_{(e_i, (T_j, k))} & \text{if $e_i \in I(T_j), Z(T_j, k) \in \m{Z}(T_j)$},\\
0 & \text{otherwise.}
\end{cases}
\end{equation}

\begin{example}\label{ex:B}
We provide the $2 \times 12$ decoding matrix $B$ for the example network in Figure \ref{fig:network}.
\[
B = \begin{pmatrix}
0 & \dotsb & 0 & \epsilon_{(e_{11}, (T, 1))}  & \epsilon_{(e_{12}, (T, 1))} \\
0 & \dotsb & 0 & \epsilon_{(e_{11}, (T, 2))}  & \epsilon_{(e_{12}, (T, 2))} \\
\end{pmatrix}.
\]
\end{example}

\subsection{System matrix $M$}\label{sec:systemmatrix}

\begin{theorem}\label{thm:m}
Given a network $G = (\m{V}, \m{E})$, let $A$, $B$, and $F$ be the encoding, decoding, and adjacency matrices, respectively. Then, the system matrix $M$ is given by
\begin{equation}
M = A (1-F)^{-1} B^T.
\end{equation}
\end{theorem}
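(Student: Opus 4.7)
The plan is to assemble a single linear equation relating the source vector $\m{X}(S)$, the port-process vector $Y$, and the sink vector $\m{Z}$, and then invert it. Collect the edge processes $Y(e)$ into a row vector $Y$ indexed by $\m{E}$, and similarly view $\m{X}(S)$ and $\m{Z}$ as row vectors indexed by source and destination processes. The claim to prove first is the identity
\begin{equation*}
Y \;=\; \m{X}(S)\, A \;+\; Y\, F.
\end{equation*}
I would verify this entry-wise, splitting on the type of port corresponding to coordinate $j$. If $e_j \in O(S)$, then $(\m{X}(S)\,A)_j = \sum_i \alpha_{(i,e_j)} X(S,i)$ and $(YF)_j = \sum_{e_i \in I(S)} \beta_{(e_i,e_j)} Y(e_i)$, which together reproduce \eqref{eq:y-s}. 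If $e_j \in O(V)$ for a non-source supernode, the $A$-row vanishes and the $F$-column only has the $\beta_{(e_i,e_j)}$ entries for $e_i\in I(V)$, reproducing \eqref{eq:y}. If $e_j\in I(V)$, the only nonzero entries of the $F$-column come from output ports $e_i$ with $(e_i,e_j)\in \m{E}$, each contributing a $1$, which reproduces the MAC superposition $Y(e_j)=\sum_{(e_i,e_j)\in \m{E}}Y(e_i)$.

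Next I would solve the identity for $Y$. Rearranging gives $Y(I-F)=\m{X}(S)\,A$. Since the paper already notes that $F$ is nilpotent on an acyclic ADT network (because $I-F$ is upper triangular with unit diagonal after an appropriate topological port ordering, so $\det(I-F)=1$), the matrix $I-F$ is invertible over $\mathbb{F}_q$ and
\begin{equation*}
Y \;=\; \m{X}(S)\,A\,(I-F)^{-1}.
\end{equation*}
Expanding $(I-F)^{-1}=\sum_{k\ge 0} F^k$ gives the natural interpretation: the entry from port $e_i$ to port $e_j$ sums over all directed walks in the port-level graph, with each step weighted either by a coding coefficient $\beta$ (within a supernode) or by $1$ (across a broadcast hyperedge or MAC edge), which is exactly how information propagates through the network.

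Finally I would apply the decoding rule \eqref{eq:z} at the sinks. Reading off the definition of $B$, equation \eqref{eq:z} is $Z(T_j,k)=\sum_{e_i\in I(T_j)}\epsilon_{(e_i,(T_j,k))} Y(e_i)$, which in the chosen indexing is the identity $\m{Z}=Y\,B^{T}$. Substituting the expression for $Y$ yields
\begin{equation*}
\m{Z} \;=\; \m{X}(S)\,A\,(I-F)^{-1}\,B^{T},
\end{equation*}
and comparing with $\m{Z}=\m{X}(S)\cdot M$ identifies $M=A(I-F)^{-1}B^{T}$, as required.

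The main obstacle is not any deep step but bookkeeping: one has to be careful that the port-indexing convention is consistent across $A$, $F$, and $B$, that source output ports correctly absorb both the injected $\alpha$-terms and any internal $\beta$-terms, and that the decoding matrix is transposed in the right direction (since $B$ is stored with destination-process rows and port columns). Once the indexing is nailed down, the three ingredients — casewise verification of $Y=\m{X}(S)A+YF$, invertibility of $I-F$ from acyclicity, and the substitution $\m{Z}=YB^{T}$ — combine directly to give the theorem.
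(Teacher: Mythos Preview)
Your argument is correct and is essentially the explicit version of what the paper does: the paper's proof simply invokes Theorem~3 of \cite{algebraic} together with the observation that $(I-F)^{-1}=\sum_{k\ge 0}F^k$ exists for an acyclic network, and your entry-wise verification of $Y=\m{X}(S)A+YF$, inversion of $I-F$, and substitution $\m{Z}=YB^{T}$ is precisely how that cited theorem is established. There is no substantive difference in approach.
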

\begin{proof}
The proof of this theorem is similar to that of Theorem 3 in \cite{algebraic}. As previously mentioned, $(I-F)^{-1} = (I + F + F^2 + ...)$ always exists for an acyclic network $G$.
\end{proof}

Note that the algebraic framework shows a clear separation between the given physical constraints (fixed 0-1 entries of $F$ showing the topology and the broadcast/MAC constraints), and the coding decisions. As mentioned previously, we can freely choose the coding variables $\alpha_{(i, e_j)}$, $\epsilon_{(e_i, (T_j, k))}$, and $\beta_{(e_i, e_j)}$. Thus, solvability of $(G, \m{C})$ is equivalent to assigning values to $\alpha_{(i, e_j)}$, $\epsilon_{(e_i, (T_j, k))}$, and $\beta_{(e_i, e_j)}$ such that each receiver $T \in \m{T}$ is able to decode the data it is intended to receive.

\begin{example}
We can combine the matrices $F$, $A$, and $B$ from Examples \ref{ex:F}, \ref{ex:A}, and \ref{ex:B} respectively to obtain the system matrix $M = A(I-F)^{-1}B^T$ for the network in Figure \ref{fig:network}. We show a schematic of the system matrix $M$ in Figure \ref{fig:systemmatrix}.
\end{example}

\begin{figure}[tbp]
\begin{center}
\includegraphics[width=0.45\textwidth]{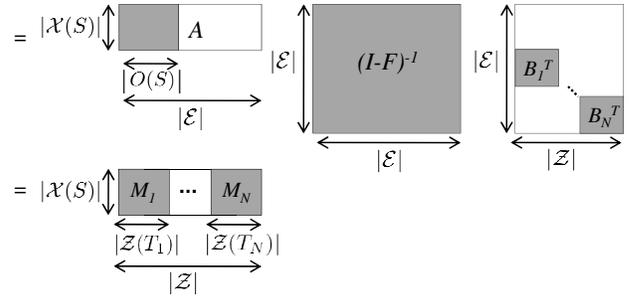}\vspace*{-.15cm}
\caption{The system matrix $M$ and its components $A$, $(I-F)^{-1}$, and $B$ for a single multicast connection with source $S$ and destinations $T_i$, $i\in [1, N]$.
}\label{fig:systemmatrix}
\end{center}
\end{figure}

\section{Definition of Min-cut}\label{sec:mincut}

Consider a source $S$ and a destination $T$. Reference \cite{adt1} proves the maximal achievable rate to be the minimum value of all $S$-$T$ cuts, denoted $mincut(S,T)$, which we reproduce below in Definition \ref{def:mincut}.

\begin{definition}[\textbf{Min-cut }\emph{\cite{adt1}\cite{adt2}}]\label{def:mincut} A cut $\Omega$ between a source $S$ and a destination $T$ is a partition of the supernodes into two disjoint sets $\Omega$ and $\Omega^c$ such that $S \in \Omega$ and $T \in \Omega^c$. For any cut, $G_{\Omega}$ is the incidence matrix associated with the bipartite graph with ports in $\Omega$ and $\Omega^c$. Then, the capacity of the given ADT network (equivalently, $mincut(S,T)$) is defined as
\[
mincut(S,T) = \min_{\Omega} rank(G_{\Omega}).
\]
This rate of $mincut(S,T)$ can be achieved using linear operations for a single unicast/multicast connection.\hspace*{1cm} $\blacksquare$
\end{definition}

Note that, with the above definition, in order to compute $mincut(S,T)$, we need to optimize over all cuts between $S$ and $T$. In addition, the proof of achievability in \cite{adt1} is not constructive, as it assumes infinite block length and does not consider the details of internal supernode operations.

We introduce a new algebraic definition of the min-cut, and show that it is equivalent to that of Definition \ref{def:mincut}.

\begin{theorem}\label{thm:mincut}
The capacity of the given ADT, equivalently the minimum value of all $S-T$ cuts $mincut(S,T)$, is
\begin{align*}
mincut(S,T) &= \min_{\Omega} \text{rank}(G_{\Omega})\\
&= \max_{\alpha_{(i, e)}, \beta_{(e', e)},\epsilon_{(e',i)}} \text{rank}(M).
\end{align*}
\end{theorem}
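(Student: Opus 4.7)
The plan is to prove $\max_{\alpha,\beta,\epsilon} \text{rank}(M) \leq \min_\Omega \text{rank}(G_\Omega)$ and $\max_{\alpha,\beta,\epsilon} \text{rank}(M) \geq \min_\Omega \text{rank}(G_\Omega)$ separately; combined with the first equality in the theorem (the combinatorial result of \cite{adt1,adt2} recorded in Definition \ref{def:mincut}), this yields the claim.

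For the upper bound, fix any cut $\Omega$ and any values of the free variables; I will show $\text{rank}(M) \leq \text{rank}(G_\Omega)$. Since $G$ is acyclic, one can order the ports topologically and then group ports by their supernode's side of the cut, so that (after discarding cut-backward edges, which do not contribute to $G_\Omega$) $F$ has the block form
\[
F = \begin{pmatrix} F_\Omega & H \\ 0 & F_{\Omega^c}\end{pmatrix}.
\]
The off-diagonal block $H$ contains exactly the entries of $F$ corresponding to cut-crossing edges. Because the variable $\beta$-entries arise only between input and output ports of the \emph{same} supernode, and supernodes are not split by the cut, no $\beta$ appears in $H$; thus $H$ is $\{0,1\}$-valued and, after pruning zero rows and columns, is precisely the bipartite incidence matrix $G_\Omega$. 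The block-inverse identity then gives
\[
(I-F)^{-1} = \begin{pmatrix} (I-F_\Omega)^{-1} & (I-F_\Omega)^{-1} H (I-F_{\Omega^c})^{-1} \\ 0 & (I-F_{\Omega^c})^{-1}\end{pmatrix},
\]
and since $A$ is supported on $O(S)\subset \Omega$ while $B^T$ is supported on $I(T)\subset \Omega^c$, only the upper-right block contributes to $M$. Hence $M = A (I-F_\Omega)^{-1} H (I-F_{\Omega^c})^{-1} B^T$ and $\text{rank}(M) \leq \text{rank}(H) = \text{rank}(G_\Omega)$; minimizing over $\Omega$ completes this direction.

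For the lower bound, I invoke the achievability part of Definition \ref{def:mincut}: by \cite{adt1,adt2}, the rate $R = \min_\Omega \text{rank}(G_\Omega)$ is attainable using linear operations at every supernode. Any such linear scheme is precisely an assignment of values to $\alpha_{(i,e)},\beta_{(e',e)},\epsilon_{(e',i)}$, and attaining rate $R$ means that for this assignment $T$ recovers $R$ independent source symbols through the relation $\m{Z}(T) = \m{X}(S) M$, forcing the relevant $R\times R$ submatrix of $M$ to be invertible and thus $\text{rank}(M) \geq R$. Therefore $\max_{\alpha,\beta,\epsilon} \text{rank}(M) \geq R$, completing the proof.

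The main obstacle is making the cut factorization airtight. One has to verify carefully that cut-backward edges genuinely do not contribute (consistent with the forward-only bipartite incidence used in Definition \ref{def:mincut}), that every $\beta$-entry lies inside a diagonal block, and that the surviving submatrix $H$ is exactly the $G_\Omega$ used in Definition \ref{def:mincut} rather than merely a related matrix. Once this port-level bookkeeping is settled, the upper bound is pure block linear algebra, and the lower bound is a direct appeal to the already-established ADT achievability result.
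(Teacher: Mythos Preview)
Your lower bound is exactly the paper's argument: invoke the linear achievability of \cite{adt1,adt2} to exhibit an assignment with $\text{rank}(M)\ge \min_\Omega\text{rank}(G_\Omega)$.

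Your upper bound, however, is genuinely different from the paper's. The paper does not factor $M$ through the cut at all; it simply observes that $\text{rank}(M)$ is an achievable rate and that $mincut(S,T)$ is (by the ADT converse) an upper bound on any achievable rate, hence $\text{rank}(M)\le mincut(S,T)$. Your algebraic route is more self-contained in that it does not lean on the information-theoretic converse of \cite{adt1,adt2}, only on the achievability part; it also yields the per-cut inequality $\text{rank}(M)\le\text{rank}(G_\Omega)$ for \emph{every} $\Omega$, not just the minimizing one.

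That said, there is a real gap in your cut factorization. You write $F$ in block-upper-triangular form ``after discarding cut-backward edges.'' But you cannot discard edges from $F$ without changing $M$: backward edges from $\Omega^c$ to $\Omega$ certainly do not contribute to $G_\Omega$, yet they do contribute to $(I-F)^{-1}$ and hence potentially to $M$. For a general cut in an acyclic network the block $F_{21}$ need not vanish. The fix is immediate: keep all four blocks and use the Schur-complement form of the block inverse. Since $I-F_{11}$ is invertible (the restriction to $\Omega$ is still acyclic, so $F_{11}$ is nilpotent), the $(1,2)$ block of $(I-F)^{-1}$ equals
\[
(I-F_{11})^{-1}\,F_{12}\,\bigl[(I-F_{22})-F_{21}(I-F_{11})^{-1}F_{12}\bigr]^{-1},
\]
which still factors through $F_{12}$. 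Because $\beta$-entries join ports of a single supernode and supernodes are not split by $\Omega$, the cross block $F_{12}$ is $\{0,1\}$-valued and, after deleting zero rows and columns, coincides with $G_\Omega$; hence $\text{rank}(M)\le\text{rank}(F_{12})=\text{rank}(G_\Omega)$ as you claim. With this correction your argument goes through and gives a cleaner, purely algebraic alternative to the paper's appeal to the ADT converse.
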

\begin{proof}
By \cite{adt1}, we know that $mincut(S,T) = \min_{\Omega} \text{rank}(G_{\Omega})$. Therefore, we show that $\max_{\alpha, \beta,\epsilon} \text{rank}(M)$ is equivalent to the maximal achievable rate in an ADT network.

First, we show that $mincut(S,T) \geq  \max_{\alpha, \beta,\epsilon} \text{rank}(M)$. In our algebraic formulation, $\m{Z}(T) = \m{X}(S)M$; thus, the rank of $M$ represents the rate achieved. Let $R = \max_{\alpha, \beta,\epsilon} \text{rank}(M)$. Then, there exists an assignment of $\alpha_{(i, e)}, \beta_{(e', e)},$ and $\epsilon_{(e',i)}$ such that the network achieves a rate of $R$. By the definition of min-cut, it must be the case that $mincut(S,T) \geq R$.

Second, we show that $mincut(S,T) \leq  \max_{\alpha, \beta,\epsilon} \text{rank}(M)$. Assume that $R = mincut(S,T)$. Then, by \cite{adt1}\cite{adt2}, there exists a linear configuration of the network such that we can achieve a rate of $R$ such that the destination $T$ is able to reproduce $\m{X}(S,T)$. This configuration of the network provides a linear relationship of the source-destination processes (actually, the resulting system matrix is an identity matrix); thus, an assignment of the variables $\alpha_{(i, e)}, \beta_{(e', e)}$, and $\epsilon_{(e',i)}$ for our algebraic framework. We denote $M'$ to be the system matrix corresponding to this assignment. Note that, by the definition, $M'$ is an $R\times R$ matrix with a rank of $R$. Therefore, $\max_{\alpha, \beta,\epsilon} \text{rank}(M) \geq \text{rank}(M') = mincut(S,T)$.
\end{proof}

The system matrix $M$ (thus, the network and decodability at the destinations)
depends not only on the structure of the ADT network, but also on the field size used, supernodes' internal operations, transmission rate, and connectivity. For example, the network topology may change with a choice of larger field size, since larger field sizes result in fewer parallel edges/channels. Another example, if we adjust the rate such that $|\m{X}(S)| \leq mincut(S,T)$, then $M$ is full rank. However, if $|\m{X}(S)|> mincut(S,T)$, then $M$ may have rank of $mincut(S,T)$ but not be full-rank. It is important to note that the cut value in the ADT network may not equal to the graph theoretical cut value (see Figure 2 in \cite{fragouli}).

\section{Min-cut Max-flow Theorem}\label{sec:singlesource}

In this section, we provide an algebraic interpretation of the Min-cut Max-flow theorem for a single unicast connection and a single multicast connection \cite{adt1}\cite{adt2}. This result is a direct consequence of \cite{algebraic} when applied to the algebraic formulation for the ADT network. We also show that a distributed randomized coding scheme achieves the capacity for these connections.

\begin{theorem}[Min-cut Max-flow Theorem] Given an acyclic network $G$ with a single connection $c = (S, T, \m{X}(S, T))$ of rate $R(c) = |\m{X}(S, T)|$, the following are equivalent:
\begin{enumerate}
\item A unicast connection $c$ is feasible.
\item $mincut(S,T) \geq R(c)$.
\item There exists an assignment of $\alpha_{(i, e_j)}$, $\epsilon_{(e_i, (T_j, k))}$, and $\beta_{(e_i, e_j)}$ such that the $R(c) \times R(c)$ system matrix $M$ is invertible in $\mathbb{F}_q$ (\ie $\det(M) \ne 0$).
\end{enumerate}\label{thm:mincut_maxflow}
\end{theorem}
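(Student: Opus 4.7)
The plan is to prove the three statements are equivalent by establishing the cycle $(3) \Rightarrow (1) \Rightarrow (2) \Rightarrow (3)$, leveraging Theorem \ref{thm:mincut} (the algebraic min-cut characterization) to close the loop.

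For $(3) \Rightarrow (1)$, I would argue directly from the definition of the system matrix. If there is an assignment making the $R(c) \times R(c)$ matrix $M$ invertible over $\mathbb{F}_q$, then since $\m{Z}(T) = \m{X}(S) \cdot M$ by Theorem \ref{thm:m}, the destination $T$ can recover $\m{X}(S)$ by computing $\m{Z}(T) \cdot M^{-1}$. This explicitly realizes the connection $c$.

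For $(1) \Rightarrow (2)$, I would invoke the converse argument of \cite{adt1}\cite{adt2}: any $S$-$T$ cut $\Omega$ bounds the information flow from $S$ to $T$ by $\mathrm{rank}(G_\Omega)$, so feasibility of $c$ at rate $R(c)$ forces $\min_\Omega \mathrm{rank}(G_\Omega) \geq R(c)$, which by Definition \ref{def:mincut} is exactly $mincut(S,T) \geq R(c)$.

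The main work is $(2) \Rightarrow (3)$, and this is where Theorem \ref{thm:mincut} does the heavy lifting. By that theorem, $mincut(S,T) = \max_{\alpha,\beta,\epsilon} \mathrm{rank}(M) \geq R(c)$. Setting $|\m{Z}(T)| = R(c)$ so that $M$ is $R(c) \times R(c)$ and square, the existence of some assignment attaining rank $R(c)$ means that $\det(M)$, viewed as a polynomial in the free variables $\alpha_{(i,e)}, \beta_{(e',e)}, \epsilon_{(e',i)}$, is not identically zero (otherwise every specialization would have rank $< R(c)$, contradicting the lower bound). The hard part to handle cleanly will be justifying that one may indeed pick $|\m{Z}(T)| = R(c)$ and a suitable row-selection by $B$ so that the $R(c) \times R(c)$ $\det(M)$ is non-trivial as a polynomial; this amounts to choosing any $R(c)$ linearly independent rows of the ``wider'' matrix $A(I-F)^{-1}$ that is guaranteed by the rank bound, and encoding that selection in $B$. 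Once $\det(M)$ is a non-zero polynomial, a Schwartz-Zippel-type argument (as in \cite{algebraic}\cite{rlc}) yields an assignment in a sufficiently large $\mathbb{F}_q$ making $\det(M) \neq 0$, i.e., $M$ invertible, completing the cycle.
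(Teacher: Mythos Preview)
Your proposal is correct and close in spirit to the paper's proof, but the paper takes a shorter path. The paper cites \cite{adt1}\cite{fragouli}\cite{goemans} for $(1)\Leftrightarrow(2)$ and then proves $(1)\Leftrightarrow(3)$ directly: for $(3)\Rightarrow(1)$ it argues exactly as you do, and for $(1)\Rightarrow(3)$ it simply observes that feasibility (by the ADT achievability result) yields a linear configuration whose system matrix is the identity, hence invertible. Your route $(2)\Rightarrow(3)$ via Theorem~\ref{thm:mincut} is valid but indirect, since Theorem~\ref{thm:mincut} itself already invokes the ADT achievability to produce an assignment with $\mathrm{rank}(M')=mincut(S,T)$; once you have that assignment, you are done, and your subsequent Schwartz--Zippel step is superfluous (you already hold a witness with $\det(M)\neq 0$, so there is no need to argue that the polynomial is nonzero and then search for a good evaluation point). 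Your concern about arranging $|\m{Z}(T)|=R(c)$ is handled in the paper implicitly by the same observation: the achieving configuration makes $M$ the $R(c)\times R(c)$ identity.
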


\begin{proof}
Statements 1) and 2) have been shown to be equivalent in ADT networks \cite{adt1}\cite{fragouli}\cite{goemans}. We now show that 1) and 3) are equivalent. Assume that there exists an assignment such that $\det(M) \ne 0$ in $\mathbb{F}_q$. Then, the system matrix $M$ is invertible; thus, there exists $M^{-1}$ such that $\m{X}(S) = \m{Z}M^{-1}$, and a connection of rate $R(c) = |\m{X}(S, T)|$ is established. Conversely, if connection $c$ is feasible, there exists a solution to the ADT network $G$ that achieves a rate of $R(c)$. When using this ADT network solution, the destination $T$ is able to reproduce $\m{X}(S, T)$; thus the resulting system matrix is an identity matrix, $M=I$. Therefore, $M$ is invertible.
\end{proof}

\begin{corollary}[Random Coding for Unicast]Consider an ADT network problem with a single connection $c = (S, T, \m{X}(S, T))$ of rate $R(c) = |\m{X}(S, T)| \leq mincut(S,T)$. Then, random linear network coding, where some or all code  variables $\alpha_{(i, e_j)}$, $\epsilon_{(e_i, (T_j, k))}$, and $\beta_{(e_i, e_j)}$ are chosen independently and uniformly over all elements of $\mathbb{F}_q$, guarantees decodability at destination $T$ with high probability at least $(1-\frac{1}{q})^{\eta}$, where $\eta$ is the number of links carrying random combinations of the source processes.
\end{corollary}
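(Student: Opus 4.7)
The plan is to reduce the statement to a polynomial non-vanishing argument and then invoke the random-coding tail bound from \cite{rlc}. By Theorem~\ref{thm:mincut_maxflow}, establishing the connection $c$ at rate $R(c)$ is equivalent to $\det(M)\neq 0$, where $M = A(I-F)^{-1}B^T$ is the $R(c)\times R(c)$ system matrix and the free variables $\alpha_{(i,e_j)}$, $\beta_{(e_i,e_j)}$, $\epsilon_{(e_i,(T,k))}$ take values in $\mathbb{F}_q$. Because $R(c)\le mincut(S,T)$, Theorem~\ref{thm:mincut} guarantees some assignment of the free variables realizing $\mathrm{rank}(M)=R(c)$, so $p \defined \det(M)$, viewed as a multivariate polynomial over $\mathbb{F}_q$ in those coefficients, is not identically zero.

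The next step is to bound $\Pr[p=0]$ when the $\eta$ coefficients that are actually randomized are drawn independently and uniformly from $\mathbb{F}_q$. Here I would follow the Ho \etal\ argument \cite{rlc}: because $G$ is acyclic, $F$ is nilpotent and $(I-F)^{-1}=\sum_{k\ge 0}F^k$ is exactly the path-sum matrix over $G$, so each random $\beta_{(e_i,e_j)}$ appears at most linearly along any directed path. Expanding $\det(M)$ by Leibniz then shows that, in the unicast case, every monomial of $p$ has degree at most one in each random indeterminate, i.e., $p$ is multilinear in the randomized coefficients. A short induction on $\eta$, conditioning on one random variable at a time and using the fact that a multilinear polynomial that is non-identically-zero in a variable vanishes with probability at most $1/q$ under a uniform draw, yields $\Pr[p\neq 0]\ge(1-1/q)^\eta$. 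Combined with Theorem~\ref{thm:mincut_maxflow}, this gives the claimed lower bound on the decodability probability.

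The main obstacle, relative to the original Koetter--M\'edard/Ho \etal\ setting, is the embedding of the broadcast and MAC constraints into the $0$--$1$ pattern of $F$: rows of $F$ can contain multiple $1$'s (broadcast hyperedges) and columns can contain multiple $1$'s (MAC superposition). I would handle this by observing that neither constraint introduces any new indeterminates -- both are encoded purely by the fixed $0$--$1$ entries of $F$ -- so the per-variable degree of $p$ in each random coefficient is untouched and multilinearity is preserved. Equivalently, as illustrated in Figure~\ref{fig:poly}, intersecting the algebraic solution space with the broadcast hyperplane can only \emph{remove} roots of $p$, never create new ones, so the non-vanishing established above together with the per-variable degree bound required for the $(1-1/q)^\eta$ estimate are inherited directly from the unconstrained algebraic network coding framework of \cite{algebraic}.
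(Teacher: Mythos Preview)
Your proposal is correct and follows essentially the same route as the paper: establish feasibility via Theorem~\ref{thm:mincut_maxflow} so that $\det(M)$ is a nonzero polynomial, and then invoke the Ho \etal\ bound from \cite{rlc} to obtain the $(1-1/q)^\eta$ success probability. The paper's own proof is in fact more terse than yours --- it simply cites \cite{rlc} once feasibility is established --- whereas you unpack the multilinearity argument and the role of the fixed $0$--$1$ broadcast/MAC entries in $F$; that additional justification is sound and mirrors the informal discussion around Figure~\ref{fig:poly} in the introduction, but it is not something the paper's proof of this corollary spells out.
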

\begin{proof}From Theorem \ref{thm:mincut_maxflow}, there exists an assignment of $\alpha_{(i, e_j)}$, $\epsilon_{(e_i, (T_j, k))}$, and $\beta_{(e_i, e_j)}$ such that $\det(M)\ne 0$, which gives a capacity-achieving network code for the given $(G, \m{C})$. Thus, this connection $c$ is feasible for the given network. Reference \cite{rlc} proves that random linear network coding is capacity-achieving and guarantees decodability with high probability $(1-\frac{1}{q})^{\eta}$ for such a feasible unicast connection $c$.
\end{proof}

\begin{theorem}[Single Multicast Theorem] Given an acyclic network $G$ and connections $\m{C}= \{(S, T_1, \m{X}(S)),$ $(S, T_2,$ $\m{X}(S)),$ $..., (S, T_N, \m{X}(S))\}$, $(G, \m{C})$ is solvable if and only if $mincut(S, T_i) \geq |\m{X}(S)|$ for all $i$. \label{thm:singlemulticast}
\end{theorem}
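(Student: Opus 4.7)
The plan is to prove the two directions of the equivalence separately, using Theorem \ref{thm:mincut_maxflow} as the unicast building block and then combining the per-receiver conditions through a product-polynomial argument in the style of \cite{algebraic}.

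For necessity, I would observe that if $(G, \m{C})$ is solvable then, in particular, each sub-problem consisting only of the connection $(S, T_i, \m{X}(S))$ is solvable. Applying Theorem \ref{thm:mincut_maxflow} to each such unicast sub-problem yields $mincut(S, T_i) \geq |\m{X}(S)|$ for every $i$. This direction is essentially bookkeeping and should require no additional machinery.

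For sufficiency, set $R = |\m{X}(S)|$ and suppose $mincut(S, T_i) \geq R$ for all $i$. By Theorem \ref{thm:mincut_maxflow} applied separately to each receiver, for every $i$ there exists \emph{some} assignment of the code variables $\alpha, \beta, \epsilon$ making the $R \times R$ system matrix $M_i = A(I-F)^{-1}B_i^T$ invertible. Equivalently, $\det(M_i)$ is a nonzero polynomial in the variables $\alpha, \beta, \epsilon$ (viewed as formal indeterminates). The key step is then to form the product polynomial $P(\alpha, \beta, \epsilon) = \prod_{i=1}^{N} \det(M_i)$: since each factor is a nonzero polynomial, $P$ itself is a nonzero polynomial. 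Invoking the Schwartz--Zippel lemma (equivalently, the sparse-zeros/finite-root argument of \cite{algebraic}) guarantees that over a sufficiently large field $\mathbb{F}_q$ there exists an assignment of the variables making $P \ne 0$, hence making \emph{all} $M_i$ simultaneously invertible, so that each $T_i$ recovers $\m{X}(S)$ via $\m{X}(S) = \m{Z}(T_i) M_i^{-1}$.

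The subtle point I expect to be the main obstacle is that the ``shared'' coding variables $\alpha$ and the interior variables $\beta$ appear in every $M_i$ simultaneously (only $B_i$ is receiver-specific), and moreover the broadcast constraint forces outgoing edges of the same port to carry identical symbols. Both issues, however, are already absorbed into the fixed 0-1 entries of the adjacency matrix $F$ (rows with multiple ones encode hyperedges, columns with multiple ones encode MAC superposition), so no new combinatorial argument is needed beyond verifying that each $\det(M_i)$ remains a legitimate polynomial in the free variables and is not identically zero --- which is precisely the content of Theorem \ref{thm:mincut_maxflow}. The conceptual leap that the product-polynomial argument must bridge is going from ``each $\det(M_i)$ is individually nonzero under \emph{some} (possibly different) assignment'' to ``all $\det(M_i)$ are simultaneously nonzero under a \emph{common} assignment,'' and this is exactly what the Schwartz--Zippel step provides, completing the proof. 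As a small bonus, the same argument immediately yields a random-coding corollary: a uniformly random assignment over $\mathbb{F}_q$ succeeds with probability bounded below by an expression of the form $(1-N/q)^{\eta}$ for appropriate $\eta$.
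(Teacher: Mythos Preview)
Your proposal is correct and follows essentially the same route as the paper: reduce to Theorem \ref{thm:mincut_maxflow} for each receiver to conclude that each $\det(M_i)$ is a nonzero polynomial in the free code variables, form the product $\prod_i \det(M_i)$, and invoke a sparse-zeros/Schwartz--Zippel argument (the paper cites \cite{rlc} for the field-size bound $q>N$) to obtain a common assignment. Your discussion of the broadcast/MAC constraints being absorbed into the fixed $0$-$1$ entries of $F$ is also consistent with the paper's viewpoint, and you correctly anticipate the random-coding corollary.
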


\begin{proof} If $(G, \m{C})$ is solvable, then $mincut(S, T_i) \geq |\m{X}(S)|$. Therefore, we only have to show the converse. Assume $mincut(S, T_i) \geq |\m{X}(S)|$ for all $i \in [1, N]$. The system matrix $M = \{M_i\}$ is a concatenation of $|\m{X}(S)| \times |\m{X}(S)|$ matrices where $\m{Z}(T_i) = \m{X}(S) M_i$, as shown in Figure \ref{fig:systemmatrix}. We can write $M = [M_1, M_2, ..., M_N] = A (I-F)^{-1} B^T = A(I-F)^{-1}[B_1, B_2, ..., B_N]$. Thus, $M_i = A(I-F)^{-1}B_i$. Note that $A$ and $B_i$'s do not substantially contribute to the system matrix $M_i$ since $A$ and $B_i$ only perform linear encoding and decoding at the source and destinations, respectively.

By Theorem \ref{thm:mincut_maxflow}, there exists an assignment of $\alpha_{(i, e_j)}$, $\epsilon_{(e_i, (T_j, k))}$, and $\beta_{(e_i, e_j)}$ such that each individual system submatrix $M_i$ is invertible, \ie $\det{(M_i)} \ne 0$. However, an assignment that makes $\det{(M_i)} \ne 0$ may lead to $\det{(M_j)} = 0$ for $i\ne j$. Thus, we need to show that it is possible to achieve \emph{simultaneously} $\det{(M_i)} \ne 0$ for all $i$ (equivalently $\prod_{i} \det{(M_i)} \ne 0$). By \cite{rlc}, we know that if the field size is larger than the number of receivers ($q > N$), then there exists an assignment of $\alpha_{(i, e_j)}$, $\epsilon_{(e_i, (T_j, k))}$, and $\beta_{(e_i, e_j)}$ such that $\det{(M_i)} \ne 0$ for all $i$.
\end{proof}

\begin{corollary}[Random Coding for Multicast] Consider an ADT network problem with a single multicast connection $\m{C}= \{(S,T_1, \m{X}(S)), (S, T_2, \m{X}(S)), ..., (S, T_N, \m{X}(S))\}$ with $mincut(S,T_i) \geq |\m{X}(S)|$ for all $i$.  Then, random linear network coding, where some or all code variables $\alpha_{(i, e_j)}$, $\epsilon_{(e_i, (T_j, k))}$, and $\beta_{(e_i, e_j)}$ are chosen independently and uniformly over all elements of $\mathbb{F}_q$, guarantees decodability at destination $T_i$ for all $i$ simultaneously with high probability at least $(1-\frac{N}{q})^{\eta}$, where $\eta$ is the number of links carrying random combinations of the source processes; thus, $\eta \leq |\m{E}|$.\label{thm:coding_multicast}
\end{corollary}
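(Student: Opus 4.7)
The plan is a direct two-step reduction that parallels the proof of the analogous unicast corollary. First, I would invoke Theorem \ref{thm:singlemulticast}: the hypothesis $mincut(S,T_i) \geq |\m{X}(S)|$ for every $i$ guarantees the existence of an assignment of the coding variables $\alpha_{(i,e_j)}, \beta_{(e_i,e_j)}, \epsilon_{(e_i,(T_j,k))}$ under which all $N$ destination submatrices $M_i$ are simultaneously invertible. Equivalently, the polynomial
\[
P(\alpha,\beta,\epsilon) \;=\; \prod_{i=1}^{N} \det(M_i)
\]
over $\mathbb{F}_q$ is not identically zero, which places the ADT multicast instance inside the class of feasible algebraic multicast problems treated in \cite{rlc}.

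Second, I would appeal to the random-coding theorem of \cite{rlc}, which establishes that for any non-vanishing determinantal polynomial arising from the Koetter--M\'{e}dard algebraic formulation, drawing each of the $\eta$ free coding coefficients independently and uniformly from $\mathbb{F}_q$ yields a successful code with probability at least $(1-N/q)^\eta$. Since $M_i = A(I-F)^{-1}B_i^T$ already has the product form analyzed in \cite{rlc}---with random coefficients entering only through the free entries of $A$, $F$, and $B$---the bound applies essentially verbatim, and the event $\{P \neq 0\}$ coincides with simultaneous decodability at all $N$ destinations (since $\det(M_i) \neq 0$ lets $T_i$ recover $\m{X}(S)$ via $\m{X}(S) = \m{Z}(T_i) M_i^{-1}$).

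The main obstacle is ensuring that the constrained structure of $F$ in the ADT model does not invalidate the \cite{rlc} analysis. Recall from Section \ref{sec:transfermatrices} that broadcast hyperedges and MAC constraints are encoded as fixed $0/1$ entries of $F$ rather than as additional algebraic equations on the random variables. Thus $\alpha, \beta, \epsilon$ remain algebraically independent indeterminates, and each appears with degree at most one in each factor $\det(M_i)$---the structural property on which the tight $(1-N/q)^\eta$ bound of \cite{rlc} depends. Fixing entries of $F$ can only delete monomials from $P$, never introduce new couplings among the random variables, so the hypotheses of \cite{rlc} are preserved and no reanalysis is required.
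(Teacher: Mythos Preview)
Your proposal is correct and follows essentially the same two-step approach as the paper: invoke Theorem \ref{thm:singlemulticast} to establish feasibility of the multicast connection, then appeal directly to \cite{rlc} for the $(1-N/q)^\eta$ success probability under random coding. Your third paragraph, arguing that the fixed $0/1$ entries of $F$ leave the $\alpha,\beta,\epsilon$ as independent degree-one indeterminates so that the \cite{rlc} hypotheses remain intact, is a useful elaboration that the paper leaves implicit (it is essentially the content of the discussion surrounding Figure~\ref{fig:poly} in Section~\ref{sec:Introduction}), but it does not constitute a different route.
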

\begin{proof}Given that the multicast connection is feasible (which is true by Theorem \ref{thm:singlemulticast}), reference \cite{rlc} shows that random linear network coding achieves the capacity for multicast connections, and allows all destination supernodes to decode the source processes $\m{X}(S)$ with high probability of at least $(1-\frac{N}{q})^{\eta}$.
\end{proof}

Theorem \ref{thm:mincut_maxflow} and Theorem \ref{thm:singlemulticast} provide an alternate proof of sufficiency of linear operations for unicast and multicast in ADT networks, which was first shown in \cite{adt1}.

\section{Extensions to other connections}\label{sec:general}

In this section, we extend the ADT network results to a more general set of traffic requirements. We use the algebraic formulation and the results from \cite{algebraic} to characterize the feasibility conditions for a given problem $(G, \m{C})$.

\subsection{Multiple Multicast}

\begin{theorem}[Multiple Multicast Theorem] Given a network $G$ and a set of connections $\m{C} = \{(S_i, T_j, \m{X}(S_i))\ |\ S_i \in \m{S}, T_j \in \m{T}\}$, $(G, \m{C})$ is solvable if and only if Min-cut Max-flow bound is satisfied for any cut that separates the source supernodes $\m{S}$ and a destination $T_j$, for all $T_j \in \m{T}$.\label{thm:multiplemulticast}
\end{theorem}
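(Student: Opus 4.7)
The plan is to reduce the multi-source multicast problem to the single-source multicast problem already handled by Theorem \ref{thm:singlemulticast}, using the super-source construction depicted in Figure \ref{fig:supersource}. The necessity direction is essentially the information-theoretic cut-set bound and is straightforward, so the work lies in establishing sufficiency.

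For necessity, if $(G, \m{C})$ is solvable then every $T_j$ must reproduce $\m{X}(S_i)$ for every $S_i$, so the aggregate information rate crossing any cut $\Omega$ with $\m{S} \subseteq \Omega$ and $T_j \in \Omega^c$ must be at least $\sum_i |\m{X}(S_i)|$. By Definition \ref{def:mincut} this means $\text{rank}(G_\Omega) \geq \sum_i |\m{X}(S_i)|$, giving the claimed bound.

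For sufficiency, I construct an augmented network $G'$ by adjoining a single super-source $S$ with $|O(S)| = \sum_i |O(S_i)|$ output ports, each connected by a dedicated unit-capacity edge in one-to-one fashion to a distinct output port of some original $S_i$, as in Figure \ref{fig:supersource}. Let $\m{X}(S)$ be the concatenation of all the $\m{X}(S_i)$, so $|\m{X}(S)| = \sum_i |\m{X}(S_i)|$, and let $\m{C}' = \{(S, T_j, \m{X}(S)) : T_j \in \m{T}\}$, which is a single-source multicast on $G'$. The encoding matrix $A$ of $G'$ decomposes blockwise into the individual encoding matrices $A_i$ of the original sources (as in Figure \ref{fig:supersource}), so any valid scalar linear code on $(G', \m{C}')$ immediately yields one on $(G, \m{C})$ by reading off the coefficients assigned on the $O(S_i)$ side.

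The key step is to show that the min-cut hypothesis on $G$ transfers to the corresponding hypothesis on $G'$. Take any cut $\Omega'$ in $G'$ separating $S$ from some $T_j$. Without loss of generality I may assume $S_i \in \Omega'$ for every $i$, since moving any $S_i$ from $\Omega'^c$ into $\Omega'$ can only decrease (or leave unchanged) $\text{rank}(G'_{\Omega'})$ — the one-to-one edges from $S$ into $O(S_i)$ are not cut by such a move, while edges formerly contributing to $G'_{\Omega'}$ through $S_i$ are merely relocated. The resulting cut restricted to the original ports is exactly an $(\m{S}, T_j)$-cut $\Omega$ in $G$, and the incidence matrices satisfy $\text{rank}(G'_{\Omega'}) = \text{rank}(G_\Omega)$ because the super-source edges are never cut. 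The hypothesis then gives $\text{mincut}_{G'}(S, T_j) \geq \sum_i |\m{X}(S_i)| = |\m{X}(S)|$ for every $T_j$. Applying Theorem \ref{thm:singlemulticast} to $(G', \m{C}')$, for sufficiently large $\mathbb{F}_q$ there exist assignments of the free variables making each system submatrix $M_j = A(I-F')^{-1} B_j^T$ invertible simultaneously, and this assignment projects back to a valid solution on $(G, \m{C})$.

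The main obstacle I anticipate is the cut-correspondence argument in the previous paragraph: one must carefully verify that the super-source reduction neither creates new bottlenecks nor destroys existing ones, i.e.\ that every min-cut in $G'$ is essentially inherited from a min-cut in $G$ of the same rank, so that the Min-cut Max-flow hypothesis in the original network is strong enough to invoke Theorem \ref{thm:singlemulticast} in the augmented one. Once this correspondence is in hand, the remainder is a direct application of the single-multicast result combined with the separation between coding variables and fixed topology entries in the system matrix $M = A(I-F)^{-1}B^T$.
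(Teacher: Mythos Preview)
Your proposal is correct and takes essentially the same approach as the paper: introduce a super-source $S$ as in Figure \ref{fig:supersource} and reduce to Theorem \ref{thm:singlemulticast}. In fact you supply considerably more detail than the paper's three-line proof, which simply asserts the reduction and invokes the single-multicast theorem without spelling out the necessity direction or the cut-correspondence argument you (rightly) flag as the main thing to check.
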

\begin{proof} We first introduce a super-source $S$ with $|O(S)| = \sum_{S_i \in \m{S}} |O(S_i)|$, and connect each $e'_j \in O(S)$ to an input of $S_i$ such that $e_j \in O(S_i)$ as shown in Figure \ref{fig:supersource}. Then, we apply Theorem \ref{thm:singlemulticast}, which proves the statement.
\end{proof}

\begin{corollary}[Random Coding for Multiple Multicast] Consider an ADT network problem with multiple multicast connections $\m{C}= \{(S_i,T_j, \m{X}(S_i)) | S_i \in \m{S}, T_j \in \m{T} \}$ with $mincut(\m{S},T_j) \geq \sum_{i} |\m{X}(S_i)|$ for all $i$.  Then, random linear network coding, where some or all code variables $\alpha_{(i, e_j)}$, $\epsilon_{(e_i, (T_j, k))}$, and $\beta_{(e_i, e_j)}$ are chosen independently and uniformly over all elements of $\mathbb{F}_q$, guarantees decodability at destination $T_i$ for all $i$ simultaneously with high probability at least $(1-\frac{N}{q})^{\eta}$, where $\eta$ is the number of links carrying random combinations of the source processes; thus, $\eta \leq |\m{E}|$.\label{thm:coding_multi-multicast}
\end{corollary}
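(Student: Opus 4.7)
The plan is to reduce the multiple multicast setting to the single multicast setting already handled by Corollary~\ref{thm:coding_multicast}, and then invoke the random linear network coding guarantee from \cite{rlc}. Since Theorem~\ref{thm:multiplemulticast} already established feasibility of multiple multicast by introducing a super-source $S$ with $|O(S)| = \sum_{S_i \in \m{S}} |O(S_i)|$ connected one-to-one into the output ports of each original source (as depicted in Figure~\ref{fig:supersource}), I would reuse exactly the same construction here and argue that the random coding analysis passes through this reduction transparently.

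First, I would form the augmented network $G'$ obtained by attaching the super-source $S$, so that the combined set of source processes $\m{X}(S) = \bigcup_i \m{X}(S_i)$ is emitted from a single node, and each destination $T_j$ must reproduce all of $\m{X}(S)$. The hypothesis $mincut(\m{S}, T_j) \geq \sum_i |\m{X}(S_i)|$ in $G$ translates directly into $mincut(S, T_j) \geq |\m{X}(S)|$ in $G'$, because the added edges between $S$ and the original sources are one-to-one (they contribute no additional bottleneck beyond the original $\m{S}$--to--$T_j$ cut). Thus $(G', \m{C}')$ with $\m{C}' = \{(S, T_j, \m{X}(S)) : T_j \in \m{T}\}$ is a valid single multicast instance satisfying the hypotheses of Corollary~\ref{thm:coding_multicast}.

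Next, I would apply Corollary~\ref{thm:coding_multicast} to $(G', \m{C}')$: when the coding variables $\alpha_{(i,e_j)}$, $\beta_{(e_i,e_j)}$, $\epsilon_{(e_i,(T_j,k))}$ are chosen independently and uniformly in $\mathbb{F}_q$, decodability holds simultaneously at every $T_j$ with probability at least $(1 - N/q)^{\eta}$, where $\eta$ is the number of links carrying random combinations and satisfies $\eta \leq |\m{E}|$. The coding coefficients on the auxiliary one-to-one edges added to form $S$ may be fixed (say, to $1$) without loss of generality, so they contribute nothing to $\eta$ and do not alter the probability bound; equivalently, the encoding matrix of the super-source just partitions into the block-diagonal structure $\mathrm{diag}(A_1, \dots, A_K)$ shown in Figure~\ref{fig:supersource}, and randomizing within each $A_i$ gives the same distribution as randomizing the single encoding matrix $A$ of $S$.

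The only subtle step, and the one I would be most careful about, is verifying that the min-cut hypothesis in $G$ really lifts to the corresponding min-cut hypothesis in $G'$, since min-cut in the ADT model is not the naive graph-theoretic cut (cf.\ Section~\ref{sec:mincut} and Figure~2 of \cite{fragouli}). Using the algebraic characterization $mincut = \max_{\alpha,\beta,\epsilon} \mathrm{rank}(M)$ from Theorem~\ref{thm:mincut}, this is essentially immediate: the rank of the system matrix associated with $T_j$ in $G'$ equals the rank achieved in $G$ when one bundles the sources, because the appended one-to-one edges are invertible linear maps from $O(S)$ into $\bigcup_i O(S_i)$ and can be absorbed into the choice of $A$. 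Once this reduction is justified, the probability bound transfers verbatim and the proof concludes.
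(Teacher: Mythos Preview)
Your proposal is correct and follows essentially the same route the paper takes: the paper does not spell out a separate proof for this corollary, but it is meant to follow immediately from the super-source reduction of Theorem~\ref{thm:multiplemulticast} combined with Corollary~\ref{thm:coding_multicast} (and the remark after the corollary that random coding is allowed to mix across the different sources' flows). Your additional care in verifying that the min-cut hypothesis lifts through the super-source construction via Theorem~\ref{thm:mincut} is more than the paper itself provides, but it is consistent with the intended argument.
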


The optimality of random coding in Corollary \ref{thm:coding_multi-multicast} comes from the fact that we allow coding across multicast connections $(S_i, T_j, \m{X}(S_i))$'s -- \ie, the source supernodes and the intermediate supernodes can randomly and uniformly select the coding coefficients. Thus, intermediate nodes within the network do not distinguish the flow from source $S_i$ from that of $S_j$, and are allowed to encode them together randomly.

\subsection{Disjoint Multicast}\label{sec:disjoint}

\begin{theorem}[Disjoint Multicast Theorem] Given an acyclic network $G$ with a set of connections $\m{C} = $ $ \{(S, T_i, $ $ \m{X}(S, T_i))$ $\ | \ i = 1,2, ...,K\}$ is called a \emph{disjoint multicast} if $\m{X}(S, T_i) \cap \m{X}(S, T_j) = \emptyset$ for all $i\ne j$. Then, $(G, \m{C})$ is solvable if and only if the min-cut between $S$ and any subset of destinations $\m{T}' \subseteq \m{T}$ is at least $\sum_{T_i \in \m{T'}} |\m{X}(S, T_i)|$, \ie $mincut(S, \m{T}') \geq \sum_{T_i \in \m{T'}} |\m{X}(S, T_i)|$ for any $\m{T}' \subseteq \m{T}$.\label{thm:disjoint}
\end{theorem}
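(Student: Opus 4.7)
The plan is to establish necessity by a direct cut bound, and sufficiency by reducing to the Single Multicast Theorem (Theorem \ref{thm:singlemulticast}) on an augmented network with a virtual super-destination.

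For necessity, suppose $(G, \m{C})$ is solvable and fix $\m{T}' \subseteq \m{T}$. Any cut $\Omega$ with $S \in \Omega$ and $\m{T}' \subseteq \Omega^c$ must simultaneously convey every process in $\bigcup_{T_i \in \m{T}'} \m{X}(S, T_i)$, and by disjointness this aggregate rate equals $\sum_{T_i \in \m{T}'} |\m{X}(S, T_i)|$. Hence, by Definition \ref{def:mincut}, $mincut(S, \m{T}') \geq \sum_{T_i \in \m{T}'} |\m{X}(S, T_i)|$.

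For sufficiency, I would construct an augmented network $G'$ by adjoining a super-destination supernode $T^*$ and, for each $T_i$, appending $r_i := |\m{X}(S, T_i)|$ unit-capacity virtual edges from fresh output ports of $T_i$ to fresh input ports of $T^*$. The crucial claim is $mincut(S, T^*) \geq \sum_i r_i = |\m{X}(S)|$ in $G'$. Given any cut $\Omega'$ of $G'$ with $S \in \Omega'$ and $T^* \in \Omega'^c$, partition $\m{T}$ into $\m{T}_{\text{in}} := \m{T} \cap \Omega'$ and $\m{T}_{\text{out}} := \m{T} \setminus \Omega'$. Because the virtual edges use fresh ports on both endpoints, the bipartite incidence matrix $G'_{\Omega'}$ decomposes into a block coming from the original network (of rank at least $mincut(S, \m{T}_{\text{out}}) \geq \sum_{T_i \in \m{T}_{\text{out}}} r_i$ by hypothesis) and a block of $\sum_{T_i \in \m{T}_{\text{in}}} r_i$ independent virtual edges that all cross the cut; these contributions sum to $\sum_i r_i$.

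Applying Theorem \ref{thm:singlemulticast} in $G'$ with unique destination $T^*$ now yields coding coefficients under which $T^*$ recovers all of $\m{X}(S)$ through the $\sum_i r_i$ virtual edges, which therefore carry a basis of $\m{X}(S)$ at $T^*$. I would then post-compose the source encoding matrix $A$ with the inverse of this basis (still a valid linear choice at $S$) and permute source processes so the partition $\{\m{X}(S, T_i)\}$ is respected, thereby arranging that the $r_i$ virtual edges attached to $T_i$ carry exactly the processes in $\m{X}(S, T_i)$. Since supernode $T_i$ itself computes these symbols before placing them on its fresh virtual edges, $T_i$ has decoded $\m{X}(S, T_i)$, proving solvability of $(G, \m{C})$. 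The main obstacle I anticipate is the min-cut verification in $G'$: the full subset hypothesis is essential here, because a weaker pairwise or individual-destination bound would not control the cut for arbitrary partitions $(\m{T}_{\text{in}}, \m{T}_{\text{out}})$ of $\m{T}$. A secondary subtlety is that the basis-alignment at the source must respect the partition without disturbing the invertibility guaranteed by Theorem \ref{thm:singlemulticast}; this relies on the algebraic framework's treatment of $A$ as a free matrix, as emphasized in Section \ref{sec:encodingmatrix}.
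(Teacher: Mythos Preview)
Your proposal is correct and follows essentially the same super-destination reduction as the paper's proof, which adjoins a super-sink $T$, applies the unicast Min-cut Max-flow theorem (Theorem~\ref{thm:mincut_maxflow}), and then exploits the freedom in the source encoding matrix $A$ to force the system matrix at $T$ to be the identity so that each $T_i$ sees exactly its intended processes. Your treatment is in fact more complete than the paper's: the paper simply asserts the min-cut bound in the augmented network and omits the necessity direction, whereas you supply both and correctly identify that the full subset hypothesis is exactly what is needed to control arbitrary partitions of $\m{T}$ across the cut.
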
%
\begin{proof}
Create a super-destination supernode $T$ with $|I(T)| = \sum_{i=1}^K |I(T_i)|$, and an edge $(e, e')$ from $e\in O(T_i)$, $i\in [1, K]$ to $e'\in I(T)$, as in Figure \ref{fig:superdestin}. This converts the problem of disjoint multicast to a single-source $S$, single-destination $T$ problem with rate $\m{X}(S, T) = \sum_{T' \in \m{T}} |\m{X}(S, T)|$. The $mincut(S, T) \geq |\m{X}(S, T)|$; so, Theorem \ref{thm:mincut_maxflow} applies. Thus, it is possible to achieve a communication of rate $\m{X}(S,T)$ between $S$ and $T$. Now, we have to guarantee that the receiver $T_i$ is able to receive the exact subset of processes $\m{X}(S,T_i)$. Since the system matrix to $T$ is full rank, it is possible to carefully choose the encoding matrix $A$ such that the system matrix $M$ at super-destination supernode $T$ is an identity matrix. This implies that for each edge from the output ports of $T_i$ (for all $i$) to input ports of $T$ is carrying a distinct symbol, disjoint from all the other symbols carried by those  edges from output ports of $T_j$, for all $i \ne j$. Thus, by appropriately permuting the symbols at the source, $S$ can deliver the desired processes to the intended $T_i$ as shown in Figure \ref{fig:superdestin}.
\end{proof}

\begin{figure}
\begin{center}
\includegraphics[width=0.45\textwidth]{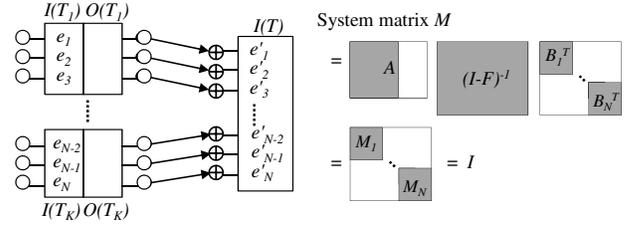}
\end{center}\vspace*{-.1cm}\caption{Disjoint multicast problem can be converted into a single destination problem by adding a super-destination $T$. The system matrix $M$ for the disjoint multicast problem is shown as well. Note that unlike the multicast problem in Figure \ref{fig:multicast} where $M = [M_1, M_2, ...,M_N]$, the system matrix $M$ is a diagonal concatenation of $M_i$'s.
}\label{fig:superdestin}
\end{figure}

Random linear network coding with a minor modification achieves the capacity for disjoint multicast. We note that only the source's encoding matrix $A$ needs to be modified. The intermediate supernodes can randomly and uniformly select coding coefficients $\epsilon_{(e_i, (T_j, k))}$ and $\beta_{(e_i, e_j)}$ over all elements of $\mathbb{F}_q$. Once these coding coefficients at the intermediate supernodes are selected, $S$ carefully chooses the encoding matrix $A$ such that the system matrix corresponding to the receivers of the disjoint multicast is an identity matrix, as shown in Figure \ref{fig:superdestin}. To be more precise, when $\epsilon_{(e_i, (T_j, k))}$ and $\beta_{(e_i, e_j)}$ are randomly selected over elements of $\mathbb{F}_q$, with high probability, $(I-F)^{-1}B^T$ is full rank. Thus, there exists a matrix $A$ such that $A (I-F)^{-1}B^T$ is an identity matrix $I$. Note that $A(I-F)^{-1}B^T$ does not need to be an identity matrix -- it only needs to have a diagonal structure as shown in Figure \ref{fig:superdestin}; however, being an identity matrix is sufficient for proof of optimality.

We note another subtlety here. Theorem \ref{thm:disjoint} holds precisely because we allow the intermediate nodes to code across all source processes, even they are destined for different receivers. This takes advantage of the fact that the single source can cleverly pre-code the data.

\subsection{Two-level Multicast}\label{sec:twolevel}

\begin{theorem}[Two-level Multicast Theorem] Given an acyclic network $G$ with a set of connections $\m{C} =\m{C}_{d} \cup \m{C}_{m}$ where $\m{C}_{d} = \{(S, T_i, \m{X}(S, T_i)) | \m{X}(S, T_i) \cap \m{X}(S, T_j) = \emptyset,$ $ i\ne j$, $i, j \in [1,K]\}$ is a disjoint multicast connection, and $\m{C}_{m} = \{(S, T_i, \m{X}(S))\ | \ i \in [K+1, N]\}$ is a single source multicast connection. Then, $(G, \m{C})$ is solvable if and only if the min-cut between $S$ and any $\m{T}' \subseteq \{T_1, ..., T_K\}$ is at least $\sum_{T_i \in \m{T}'} |\m{X}(S, T_i)|$, and the min-cut between $S$ and $T_j$ is at least $|\m{X}(S)|$ for $j \in [K+1, N]$.\label{thm:twolevel}
\end{theorem}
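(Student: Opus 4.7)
The \emph{only if} direction follows from elementary cut arguments. If $(G,\m{C})$ is solvable, then any cut separating $S$ from a subset $\m{T}' \subseteq \{T_1,\ldots,T_K\}$ must support all the disjoint subsets $\{\m{X}(S,T_i)\}_{T_i\in\m{T}'}$, which jointly carry $\sum_{T_i\in\m{T}'}|\m{X}(S,T_i)|$ independent symbols; any cut separating $S$ from a multicast sink $T_j$, $j\geq K+1$, must support all of $\m{X}(S)$. Definition~\ref{def:mincut} then yields both min-cut bounds.

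For the \emph{if} direction the plan is to combine the super-destination construction of Theorem~\ref{thm:disjoint} with the single multicast argument of Theorem~\ref{thm:singlemulticast}. First, I would introduce a super-destination supernode $T$ exactly as in Figure~\ref{fig:superdestin}: for each $i\in[1,K]$ and each output port $e\in O(T_i)$ add a unit-capacity edge $(e,e')$ with $e'\in I(T)$, so that $|I(T)| = \sum_{i=1}^K|\m{X}(S,T_i)| = |\m{X}(S)| =: R$ (we may assume $\bigcup_i \m{X}(S,T_i) = \m{X}(S)$ without loss of generality, augmenting with dummy disjoint sinks for any processes not otherwise claimed). Taking $\m{T}' = \{T_1,\ldots,T_K\}$ in the disjoint-multicast hypothesis gives $\text{mincut}(S,T)\geq R$ in the augmented network, and by assumption $\text{mincut}(S,T_j)\geq R$ for every $j\geq K+1$. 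The resulting auxiliary problem is therefore a single multicast from $S$ to the destinations $\{T,T_{K+1},\ldots,T_N\}$ at rate $R$.

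Second, I would apply Theorem~\ref{thm:singlemulticast} to this auxiliary problem: over a field $\mathbb{F}_q$ with $q>N-K+1$ there exists an assignment of the coding variables $\alpha,\beta,\epsilon$ for which the $R\times R$ system matrices $\tilde M_T,\tilde M_{K+1},\ldots,\tilde M_N$ are all simultaneously invertible. Next comes the key step: replace the source encoding matrix $A$ by
\[
A' \; := \; \tilde M_T^{-1}\,A.
\]
Because $\tilde M_T^{-1}$ is $R\times R$ and $A$ is zero outside the columns indexed by $O(S)$, the matrix $A'$ still satisfies the structural zero pattern required of an encoding matrix. With this replacement the system matrices become $M_T = \tilde M_T^{-1}\tilde M_T = I$ and $M_j = \tilde M_T^{-1}\tilde M_j$ for $j\geq K+1$, which remain invertible. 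Since $M_T = I$, the $R$ incoming edges of $T$ carry $R$ distinct symbols of $\m{X}(S)$, grouped by the sink $T_i$ of origin; after a suitable permutation of the rows of $A'$ (equivalently, relabelling the source processes) the edges from $O(T_i)$ to $T$ carry exactly the processes in $\m{X}(S,T_i)$, so each $T_i$, $i\in[1,K]$, already holds $\m{X}(S,T_i)$ internally. Each multicast destination $T_j$, $j\geq K+1$, recovers $\m{X}(S)$ by multiplying its received vector by $M_j^{-1}$.

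The main obstacle is the compatibility step: one must see that the source-side change of basis by $\tilde M_T^{-1}$, which is what makes the super-destination decode the correct \emph{partition} of $\m{X}(S)$, does not destroy the invertibility already secured for the multicast sinks. This works precisely because the substitution acts on the left of the factorisation $M_j = A(I-F)^{-1}B_j^T$, multiplying each $\tilde M_j$ by a common invertible matrix, and because the structural constraint on $A$ (zeros outside $O(S)$) is closed under left multiplication by any $R\times R$ matrix. A secondary subtlety, that random linear coding suffices, would follow as in Corollary~\ref{thm:coding_multicast} applied to the auxiliary multicast problem, with the deterministic pre-coding $\tilde M_T^{-1}$ performed only at the source after $\beta$ and $\epsilon$ are drawn at random.
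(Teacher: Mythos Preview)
Your argument is correct and follows essentially the same route as the paper: create a super-destination $T$ for the disjoint-multicast sinks, reduce to a single multicast with receivers $\{T,T_{K+1},\dots,T_N\}$, invoke Theorem~\ref{thm:singlemulticast}, and then adjust the encoding matrix $A$ so that the block for $T$ becomes the identity while the multicast blocks stay invertible. You have made explicit what the paper leaves to ``choosing the appropriate matrix $A$'', namely the left-multiplication $A' = \tilde M_T^{-1}A$ and the observation that this preserves the invertibility of the remaining $M_j$ because it acts as a common left factor; this is exactly the mechanism underlying the paper's Figure~\ref{fig:twolevel}. The only cosmetic slip is the count $|I(T)| = \sum_i |\m{X}(S,T_i)|$, which should match the number of output ports you actually wire in, but this does not affect the argument.
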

\begin{proof}
We create a super-destination $T$ for the disjoint multicast destinations as in the proof for Theorem \ref{thm:disjoint}. Then, we have a single multicast problem with receivers $T$ and $\{T_i | i\in [K+1, N]\}$. Therefore, Theorem \ref{thm:singlemulticast} applies. By choosing the appropriate matrix $A$, $S$ can satisfy both the disjoint multicast and the single multicast requirements, as shown in Figure \ref{fig:twolevel}.
\end{proof}

\begin{figure}[tbp]
\begin{center}
\includegraphics[width=0.49\textwidth]{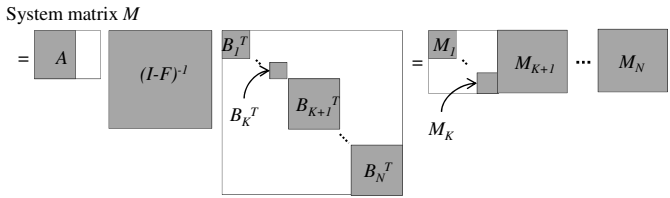}
\end{center}\vspace*{-.1cm}\caption{The system matrix $M$ for the two-level multicast problem. The structure of the system matrix $M$ is a ``concatenation'' of the disjoint multicast problem (Figure \ref{fig:superdestin}) and the single multicast problem (Figure \ref{fig:multicast}).}\label{fig:twolevel}
\end{figure}

As in the disjoint multicast case, random linear network coding with a minor modification at the source achieves the capacity for two-level multicast. Note that, receivers $T_i$, $i \in [K+1, N]$ are of no concern --  the source $S$ can randomly choose coding coefficients $\alpha_{(i, e_j)}$ to achieve a full-rank system matrix $M_{i}$. Thus, $S$ needs to carefully choose the encoding matrix $A$ to satisfy the disjoint multicast constraint, which can be done as shown in Section \ref{sec:disjoint}.

Theorem \ref{thm:twolevel} does not extend to a three-level multicast. Three-level multicast, in its simplest form, consists of connections $\{(S, T_i, \m{X}(S, T_i))|\ i \in [1,3]\}$ where $\m{X}(S, T_1) \subset \m{X}(S, T_2) \subset \m{X}(S, T_3)$.

\subsection{General Connection Set}

In the theorem below, we present sufficient conditions for solvability of a general connection set. This theorem does not provide necessary conditions, as shown in \cite{insufficiency}.

\begin{theorem}[Generalized Min-cut Max-flow Theorem] Given an acyclic network $G$ with a connection set $\m{C}$, let $M = \{M_{i,j}\}$ where $M_{i,j}$ is the system matrix for source processes $\m{X}(S_i)$ to destination processes $\m{Z}(T_j)$. Then, $(G, \m{C})$ is solvable if there exists an assignment of $\alpha_{(i, e_j)}$, $\epsilon_{(e_i, (T_j, k))}$, and $\beta_{(e_i, e_j)}$ such that
\begin{enumerate}
\item $M_{i,j} = 0$ for all $(S_i, T_j, \m{X}(S_i, T_j)) \notin \m{C}$,
\item Let $(S_{\sigma(i)}, T_j, \m{X}(S_{\sigma(i)}, T_j)) \in \m{C}$ for $i \in [1, K(j)]$. Thus, this is the set of connections with $T_j$ as a receiver. Then, $[M_{\sigma(1),j}^T, M_{\sigma(2), j}^T, ...,$ $M_{\sigma(K_j), j}^T]$ is a $|\m{Z}(T_j))| \times |\m{Z}(T_j)|$ is a \emph{nonsingular}
    system matrix.
\end{enumerate}
\end{theorem}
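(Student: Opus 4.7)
The plan is to show that Conditions 1 and 2 immediately force decodability at every destination via matrix inversion, so the connection set is realized. Since this is a pure sufficiency statement and an assignment realizing both conditions is assumed, the argument is essentially algebraic bookkeeping on top of the identity $\m{Z} = \m{X}(S)\cdot M$ from Section \ref{sec:systemmatrix}.

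First I would expand the received processes at an arbitrary destination. Writing out the block structure of $M$ by source and destination gives
\[
\m{Z}(T_j) \;=\; \sum_{i} \m{X}(S_i)\, M_{i,j},
\]
where the sum is over all source supernodes $S_i \in \m{S}$. Applying Condition 1, every term with $(S_i, T_j, \m{X}(S_i, T_j)) \notin \m{C}$ drops out, and the sum collapses to a sum over exactly those $K_j$ sources $S_{\sigma(1)}, \ldots, S_{\sigma(K_j)}$ that have $T_j$ as an intended receiver:
\[
\m{Z}(T_j) \;=\; \sum_{i=1}^{K_j} \m{X}(S_{\sigma(i)})\, M_{\sigma(i), j}.
\]

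Next I would repackage this as a single matrix equation. Let $\widetilde{\m{X}}_j = [\m{X}(S_{\sigma(1)}), \ldots, \m{X}(S_{\sigma(K_j)})]$ be the horizontal concatenation of the relevant source processes, and let $\widetilde{M}_j$ be the vertical stacking of the $M_{\sigma(i),j}$, so that $\widetilde{M}_j^T = [M_{\sigma(1),j}^T, \ldots, M_{\sigma(K_j),j}^T]$ is precisely the matrix appearing in Condition 2. Then $\m{Z}(T_j) = \widetilde{\m{X}}_j \, \widetilde{M}_j$. By Condition 2, $\widetilde{M}_j$ is a $|\m{Z}(T_j)|\times|\m{Z}(T_j)|$ nonsingular matrix over $\mathbb{F}_q$, hence invertible. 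Therefore $T_j$ computes
\[
\widetilde{\m{X}}_j \;=\; \m{Z}(T_j)\, \widetilde{M}_j^{-1},
\]
recovering every process in $\m{X}(S_{\sigma(i)})$ for each $i \in [1, K_j]$, and in particular the required subset $\m{X}(S_{\sigma(i)}, T_j) \subseteq \m{X}(S_{\sigma(i)})$. Since this holds for every $T_j \in \m{T}$, all connections in $\m{C}$ are established simultaneously and $(G,\m{C})$ is solvable.

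The proof itself has no real obstacle beyond dimensional bookkeeping: one must verify that $\widetilde{M}_j$ is actually square, which is guaranteed by Condition 2's assertion that the concatenation has dimensions $|\m{Z}(T_j)|\times|\m{Z}(T_j)|$ (implicitly requiring $\sum_i |\m{X}(S_{\sigma(i)})| = |\m{Z}(T_j)|$, i.e.\ that $T_j$ generates enough output processes to resolve all of its incoming connections). The genuinely hard work — exhibiting a simultaneous assignment of $\alpha_{(i,e_j)}$, $\beta_{(e_i,e_j)}$, $\epsilon_{(e_i,(T_j,k))}$ that achieves both conditions at once for a \emph{particular} $\m{C}$ — is pushed outside the theorem; this is why the statement is only sufficient and, as noted via \cite{insufficiency}, cannot be upgraded to a necessary condition in general.
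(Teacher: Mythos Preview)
Your argument is correct and is essentially the same approach the paper takes: the paper interprets Condition 1 as ensuring $T_j$ can separate intended from unintended information and Condition 2 as the Min-cut Max-flow (invertibility) condition, then defers the details to Theorem 6 of \cite{algebraic}. You have simply written out explicitly the block decomposition $\m{Z}(T_j)=\sum_i \m{X}(S_i)M_{i,j}$, the collapse under Condition 1, and the inversion under Condition 2 that the paper leaves implicit.
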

\begin{proof}
Note that $[M_{\sigma(1),j}^T, M_{\sigma(2), j}^T, ...,$ $M_{\sigma(K_j), j}^T]$  is a system matrix for source processes $\m{X}(S_{\sigma(i)})$, $i \in [1, K(j)]$, to destination processes $\m{Z}(T_j)$.

Condition 2) states the Min-cut Max-flow condition; thus, is necessary to establish the connections. Condition 1) states that the destination supernode $T_j$ should be able to distinguish the information it is intended to receive from the information that may have been mixed into the flow it receives. These two conditions are sufficient to establish all connections in $\m{C}$. The proof is similar to that of Theorem 6 in \cite{algebraic}.
\end{proof}


\section{Network with Random Erasures}\label{sec:robust}

We consider the algebraic ADT problem where links may fail randomly, and cause erasures. Wireless networks are stochastic in nature, and random erasures occur dynamically over time. However, the original ADT network models noise deterministically with parallel noise-free bit-pipes. As a result, the min-cut (Definition \ref{def:mincut}) and the network code \cite{fragouli}\cite{goemans}\cite{edmund}, which depend on the hard-coded representation of noise, have to be recomputed every time the network changes.

We show that the algebraic framework for the ADT network is robust against random erasures and failures. First, we show that for some set of link failures, the network code remains successful. This translates to whether the system matrix $M$ preserves its full rank even after a subset of variables $\alpha_{(i, e_j)}, \epsilon_{(e_i, (D_j, k))}$, and $\beta_{(e_i, e_j)}$ associated with the failed links is set to zero. Second, we show that the specific instance of the system matrix $M$ and its rank are not as important as the \emph{average} $\text{rank}(M)$ when computing the time average min-cut. Note that the original min-cut definition (Definition \ref{def:mincut}) requires an optimization over an exponential number of cuts for every time step to find the average min-cut. 
We shall use the results from \cite{reliable} to show that random linear network coding achieves the time-average min-cut; thus, is capacity-achieving.

We assume that any link within the network may fail. Given an ADT network $G$ and a set of link failures $f$, $G_f$ represents the network $G$ experiencing failures $f$. This can be achieved by deleting the failing links from $G$, which is equivalent to setting the coding variables in $B(f)$ to zero, where $B(f)$ is the set of coding variables associated with the failing links. We denote $M$ be the system matrix for network $G$. Let $M_{f}$ be the system matrix for the network $G_f$. We do not assume that the link failures are static; thus, we can consider a static link failure patterns, a distribution over link failures patterns, or a sequence $f_1, f_2, f_3 ...$ of link failures.

\subsection{Robust against Random Erasures}\label{sec:erasures}

Given an ADT network problem $(G, \m{C})$, let $\m{F}$ be the set of \emph{all} link failures such that, for any $f\in \m{F}$, the problem $(G_f, \m{C})$ is solvable. The solvability of a given $(G_f, \m{C})$ can be verified using resulting in Sections \ref{sec:singlesource} and \ref{sec:general}.  We are interested in static solutions, where the network is oblivious of $f$.
In other words, we are interested in finding the set of link failures such that the network code is still successful in delivering the source processes to the destinations.
For a multicast connection, we show the following surprising result.

\begin{theorem}[Static Solution for Random Erasures]\label{thm:static}
Given an ADT network problem $(G, \m{C})$ with a multicast connection $\m{C} = \{(S, T_1, \m{X}(S)),$ $(S, T_2, \m{X}(S)), ..., (S, T_N, \m{X}(S))\}$, there exists a \emph{static} solution to the problem $(G_f, \m{C})$ for all $f\in \m{F}$. In other words, there exists a \emph{fixed} network code that achieves the multicast rate despite any failures $f \in \m{F}$.
\end{theorem}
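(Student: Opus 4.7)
The plan is to exhibit a single assignment of the coding variables $\alpha_{(i,e_j)}$, $\beta_{(e_i,e_j)}$, and $\epsilon_{(e_i,(T_j,k))}$ over a sufficiently large field $\mathbb{F}_q$ which simultaneously satisfies the multicast requirement in every network $G_f$ for $f \in \m{F}$. The key observation is that each failure pattern can be encoded by zeroing out a subset of the coding variables of the original network $G$, so the system matrices for all $G_f$ can be treated as polynomials in the same underlying set of indeterminates.

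First, I would note that $\m{F}$ is finite, since $|\m{F}| \le 2^{|\m{E}|}$. For each $f \in \m{F}$ and each destination $T_i$, let $M_{f,i}$ denote the system submatrix $A(I-F)^{-1}B_i^T$ evaluated after setting $\beta_{(e',e)} = 0$ for every link in $f$. Each entry of $M_{f,i}$ is a polynomial in the coding variables of $G$, so $\det(M_{f,i})$ is itself a polynomial in those variables. Since by assumption $(G_f, \m{C})$ is solvable, Theorem \ref{thm:singlemulticast} guarantees that there exists an assignment of the (remaining) variables making $\det(M_{f,i}) \ne 0$ for every $i$. Hence each $\det(M_{f,i})$ is a \emph{nonzero} polynomial in the indeterminates of the original network.

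Next, I would form the product polynomial
\[
P \;=\; \prod_{f \in \m{F}} \prod_{i=1}^{N} \det(M_{f,i}).
\]
A finite product of nonzero polynomials over an integral domain is nonzero, so $P \not\equiv 0$. Invoking the Schwartz--Zippel lemma, equivalently Lemma 4 of \cite{rlc}, and choosing $q$ larger than the total degree of $P$ (a bound of the form $q > N |\m{F}|$ will suffice after applying the Koetter--M\'edard degree analysis to each factor), there exists an assignment of $\alpha, \beta, \epsilon$ in $\mathbb{F}_q$ at which $P$ does not vanish. For this assignment, every $M_{f,i}$ is invertible, so by Theorem \ref{thm:singlemulticast} applied to each $G_f$ the multicast rate $|\m{X}(S)|$ is achieved in every failure scenario $f \in \m{F}$; this is the claimed static solution.

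The only subtle point is the bookkeeping that lets one view the system matrices for all $G_f$ as polynomials in a common set of indeterminates; once this is set up, nonzero-ness of the product polynomial reduces to nonzero-ness of each factor, which is exactly what solvability of $(G_f, \m{C})$ delivers. The main ``obstacle'' is therefore not conceptual but quantitative: ensuring that the field size is large enough that a single assignment avoids the union of varieties defined by the factors $\det(M_{f,i})$. Because $|\m{F}|$ and $N$ are both finite, this is always achievable, and in fact the same polynomial-product argument shows that random linear network coding will find such an assignment with high probability.
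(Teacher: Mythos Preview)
Your proposal is correct and follows essentially the same route as the paper: for each $f\in\m{F}$ use solvability of $(G_f,\m{C})$ (via Theorem~\ref{thm:singlemulticast}) to conclude that each $\det(M_{f,i})$ is a nonzero polynomial in the common coding variables, form the product $\prod_{f}\prod_{i}\det(M_{f,i})$, and invoke \cite{rlc} with field size $q>N|\m{F}|$ to obtain a single assignment that avoids all roots. The paper's proof is terser (it writes the product as $\prod_{f}\det(M_f)$ without unpacking the per-destination factors), but the argument and the resulting field-size bound are identical.
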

\begin{proof}
By Theorem \ref{thm:singlemulticast}, we know that for any given $f\in \m{F}$, the problem $(G_f, \m{C})$ is solvable; thus, there exists a code $\det{(M_f)}\ne 0$. Now, we need to show that there exists a code such that $\det{(M_f)}\ne 0$ for all $f\in \m{F}$ simultaneously. This is equivalent to finding a non-zero solution to the following polynomial: $\prod_{f\in \m{F}} \det{(M_f)} \ne 0$. Reference \cite{rlc} showed that if the field size is large enough ($q > |\m{F}||\m{T}| = |\m{F}|N$), then there exists an assignment of $\alpha_{(i, e_j)}, \epsilon_{(e_i, (D_j, k))}$, and $\beta_{(e_i, e_j)}$ such that $\det{(M_f)} \ne 0$ for all $ f \in \m{F}$.
\end{proof}

\begin{corollary}[Random Coding against Random Erasures]\label{thm:static_coding} Consider an ADT network problem with a multicast connection $\m{C} = \{(S, T_1, \m{X}(S)),$ $(S, T_2, \m{X}(S)), ..., (S, T_N, \m{X}(S))\}$, which is solvable under link failures $f$, for all $f\in \m{F}$. Then, random linear network coding, where some or all code variables $\alpha_{(i, e_j)}, \epsilon_{(e_i, (D_j, k))}$, and $\beta_{(e_i, e_j)}$ are chosen independently and uniformly over all elements of $\mathbb{F}_q$ guarantees decodability at destination supernodes $T_i$ for all $i$ simultaneously and remains successful regardless of the failure pattern $f \in \m{F}$ with high probability at least $(1- \frac{N|\m{F}|}{q})^\eta$, where $\eta$ is the number of links carrying random combinations of the source processes.
\end{corollary}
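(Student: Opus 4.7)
The plan is to derive this corollary by combining Theorem \ref{thm:static}, which asserts the existence of a simultaneously feasible code across all $f \in \m{F}$, with the Schwartz--Zippel style probabilistic analysis of random linear network coding from \cite{rlc}. The argument parallels the proof of Corollary \ref{thm:coding_multicast}, but expands the set of ``bad events'' from one system matrix to the family $\{M_{f,i}\}_{f \in \m{F}, i \in [1,N]}$, thereby covering every failure pattern and every destination simultaneously.

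First, I would set up the target polynomial. For each $f \in \m{F}$ and each receiver $T_i$, let $M_{f,i}$ denote the $|\m{X}(S)| \times |\m{X}(S)|$ system submatrix from $S$ to $T_i$ in the failed network $G_f$, and define
\[
P(\alpha, \beta, \epsilon) \defined \prod_{f \in \m{F}} \prod_{i=1}^{N} \det(M_{f,i}),
\]
viewed as a formal polynomial in the free coding variables $\alpha_{(i,e_j)}$, $\beta_{(e_i,e_j)}$, and $\epsilon_{(e_i,(T_j,k))}$. By Theorem \ref{thm:static}, for sufficiently large $q$ there exists an assignment of these variables making every factor $\det(M_{f,i})$ nonzero over $\mathbb{F}_q$, so $P$ is not identically zero as a formal polynomial.

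Second, I would bound the per-variable degree of $P$. By the transfer-matrix identity $M_f = A(I - F_f)^{-1} B^T$ and the standard argument of \cite{algebraic}, each coding variable appears with degree at most $1$ in any single $\det(M_{f,i})$; consequently $P$ has degree at most $N|\m{F}|$ in each of the $\eta$ randomly assigned variables. Applying the Schwartz--Zippel bound in the multilinear form used in \cite{rlc} --- namely, that a nonzero polynomial of per-variable degree at most $d$ in $\eta$ variables evaluates to a nonzero element of $\mathbb{F}_q$ at a uniformly random point with probability at least $(1 - d/q)^\eta$ --- with $d = N|\m{F}|$ yields the claimed bound $(1 - N|\m{F}|/q)^\eta$. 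Since $P(\alpha,\beta,\epsilon) \ne 0$ implies each individual $\det(M_{f,i}) \ne 0$, every destination $T_i$ decodes successfully under every failure pattern $f \in \m{F}$ simultaneously.

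The only delicate point is the per-variable linearity used in the second step. The ADT setting partially fixes entries of $F$ to $0$ or $1$ due to the broadcast and MAC topology, but this only \emph{removes} variables or pins them to constants rather than introducing new nonlinearities; hence the degree-one property of each determinant entry in the wireline setting of \cite{algebraic} carries over unchanged to the ADT system matrix. With this observation verified, the product polynomial bound and the \cite{rlc} evaluation argument combine to give the corollary.
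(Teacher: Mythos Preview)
Your proposal is correct and follows essentially the same route as the paper: both reduce the claim to the random linear network coding success bound of \cite{rlc} applied to the family of system matrices $\{M_{f,i}\}$ indexed by failure patterns $f\in\m{F}$ and receivers $T_i$. The paper's proof is a one-line citation of \cite{rlc} (feasibility of $(G_f,\m{C})$ for each $f$ is given by hypothesis, and \cite{rlc} directly supplies the $(1-N|\m{F}|/q)^\eta$ bound), whereas you explicitly unpack that citation by forming the product polynomial, verifying per-variable degree at most $N|\m{F}|$, and invoking the Schwartz--Zippel variant; this is the same argument at a finer level of detail. One minor simplification: you need not route through Theorem~\ref{thm:static} to show $P\not\equiv 0$, since the hypothesis already gives that each $(G_f,\m{C})$ is solvable, so each factor $\det(M_{f,i})$ is a nonzero polynomial by Theorem~\ref{thm:singlemulticast}, and hence so is their product.
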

\begin{proof}
Given a multicast connection that is feasible under any link failures $f \in \m{F}$, \cite{rlc} showed that random linear network coding achieves the capacity for multicast connections, and is robust against any link failures $f \in \m{F}$ with high probability $(1 - \frac{N |\m{F}|}{q})^\eta$.
\end{proof}

We note that it is unclear whether this can be extended to the non-multicast connections, as noted in \cite{algebraic}. Reference \cite{algebraic} shows a simple example network in which no static solution is available for a set of feasible failure patterns.

\subsection{Time-average Min-cut}\label{sec:average}

In this section, we study the time-average behavior of the ADT network, given random erasures. We use techniques from \cite{reliable}, which studies reliable communication over lossy networks with network coding.

Consider an ADT network $G$. In order to study the time-average steady state behavior, we introduce erasure distributions. Let $\m{F}'$ be a set of link failure patterns in $G$. A set of link failures $f \in \m{F'}$ may occur with probability $p_f$.

\begin{theorem}[Min-cut for Time-varying Network]\label{thm:average_mincut}
Assume an ADT network $G$ in which link failure pattern $f \in \m{F}'$ occurs with probability $p_f$. Then, the average min-cut between two supernodes $S$ and $T$ in $G$, $mincut_{\m{F}'}(S,T)$ is
\[
mincut_{\m{F}'} (S,T) = \sum_{f\in \m{F}'} p_f \left(\max_{\alpha_{(i, e)}, \beta_{(e', e)}, \epsilon_{(e', i)}} \text{rank}(M_f)\right).
\]
\end{theorem}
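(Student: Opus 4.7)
The plan is to reduce the theorem to a pointwise application of Theorem \ref{thm:mincut} followed by an averaging argument over the failure distribution. For each fixed failure pattern $f \in \m{F}'$, the network $G_f$ obtained from $G$ by deleting the failed links is itself a (static) ADT network. Its system matrix is exactly $M_f$, namely $M$ with the coding variables in $B(f)$ forced to zero; equivalently, $M_f = A(I-F_f)^{-1}B^T$ where $F_f$ is the adjacency matrix of $G_f$. So Theorem \ref{thm:mincut} applies verbatim to $G_f$ and gives
\[
mincut(S,T;\,G_f) \;=\; \min_{\Omega}\,\text{rank}(G_{\Omega,f}) \;=\; \max_{\alpha_{(i,e)},\beta_{(e',e)},\epsilon_{(e',i)}}\text{rank}(M_f).
\]

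Next I would define the time-average min-cut as the expectation of the instantaneous min-cut under the distribution $\{p_f\}_{f\in\m{F}'}$, following the framework of \cite{reliable}. Concretely, if at each time step an i.i.d.\ failure pattern $f_t$ is drawn with $\Pr[f_t=f]=p_f$, then the long-run time-average of $mincut(S,T;G_{f_t})$ converges by the strong law of large numbers to
\[
\sum_{f\in\m{F}'} p_f\, mincut(S,T;\,G_f).
\]
Substituting the pointwise identity from the previous paragraph into this expectation yields the claimed formula. This is the standard way in which \cite{reliable} interprets the min-cut of a lossy network as an expected rank, and the same bookkeeping carries over here because the algebraic formulation of Section \ref{sec:algebraic} encodes each realization $G_f$ into the single matrix $M_f$.

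The main obstacle is justifying that the quantity on the left-hand side is legitimately \emph{the} average min-cut in the operational sense, rather than just a naive expectation. To address this, I would invoke the argument of \cite{reliable}: by the ergodicity of the i.i.d.\ failure process (or, more generally, any stationary ergodic process on $\m{F}'$ with marginal $p_f$), the empirical frequency of each pattern $f$ converges to $p_f$, so the time-averaged instantaneous capacity converges almost surely to the expectation. Combined with the fact that random linear network coding achieves the instantaneous capacity $\max\text{rank}(M_{f_t})$ at each time step with high probability (Corollary \ref{thm:static_coding} applied instance-by-instance), the expected-rank expression is both an upper bound (no scheme can exceed instantaneous min-cut on average) and an achievable rate, completing the proof.
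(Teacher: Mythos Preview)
Your proposal is correct and follows essentially the same approach as the paper: apply Theorem~\ref{thm:mincut} pointwise to each failure pattern $f$ to identify the instantaneous min-cut with $\max_{\alpha,\beta,\epsilon}\text{rank}(M_f)$, then average over the distribution $\{p_f\}$. The paper's proof is in fact just these two steps stated in two sentences; your additional discussion of the strong law of large numbers and of operational achievability goes beyond what the paper supplies (the paper defers the operational justification to the subsequent Corollary~\ref{thm:average_multicast} and \cite{reliable}), but the core argument is identical.
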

\begin{proof}
By Theorem \ref{thm:mincut}, we know that at any given time instance with failure pattern $f$, the min-cut between $S$ and $T$ is given by $\max_{\alpha_{(i, e)}, \beta_{(e', e)}, \epsilon_{(e', i)}} \text{rank}(M_f)$. Then, the above statement follows naturally by taking a time average of the min-cut values between $S$ and $T$.
\end{proof}

The key difference between Theorems \ref{thm:static} and \ref{thm:average_mincut} is that in Theorem \ref{thm:static}, any $f\in \m{F}$ may change the network topology as well as min-cut but $mincut(S,T) \geq |\m{X}(S)|$ holds for all $f\in \m{F}$ -- \ie $(G_f, \m{C})$ is assumed to be solvable. However, in Theorem \ref{thm:average_mincut}, we make no assumption about the connection as we are evaluating the average value of the min-cut.

Unlike the case of static ADT networks, with random erasures, it is necessary to maintain a queue at each supernode in the ADT network. This is because, if a link fails when a supernode has data to transmit on it, then it will have to wait until the link recovers. In addition, a transmitting supernode needs to be able to learn whether a packet has been received by the next hop supernode, and whether it was innovative -- this can be achieved using channel estimation, feedback and/or redundancy. In the original ADT network, the issue of feedback was removed by assuming that the links are noiseless bit-pipes. We present the following corollaries under these assumptions.

\begin{corollary}[Multicast in Time-varying Network]\label{thm:average_multicast} Consider an ADT network $G$ and a multicast connection $\m{C} = \{(S, T_1, \m{X}(S)), ..., (S, T_N, \m{X}(S))\}$. Assume that failures occur where failure patten $f \in \m{F}'$ occurs with probability $p_f$. Then, the multicast connection is feasible if and only if $mincut_{\m{F}'}(S,T_i) \geq |\m{X}(S)|$ for all $i$.
\end{corollary}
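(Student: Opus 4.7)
The plan is to prove both directions, with sufficiency being the substantive part. For necessity, I would apply a standard cut-based converse: fix a receiver $T_i$ and an arbitrary time horizon $T$. At each time step $t$ with realized failure pattern $f_t$, the amount of information that can cross any $S$--$T_i$ cut is bounded by $\max_{\alpha,\beta,\epsilon}\text{rank}(M_{f_t})$ by Theorem \ref{thm:mincut}. If the multicast is feasible at rate $R=|\m{X}(S)|$, then $RT$ bits of source information must have reached $T_i$ by time $T$, so dividing by $T$ and using the strong law of large numbers on the i.i.d.\ (or ergodic) failure sequence, combined with Theorem \ref{thm:average_mincut}, yields $mincut_{\m{F}'}(S,T_i)\ge R$.

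For sufficiency, assume $mincut_{\m{F}'}(S,T_i)\ge |\m{X}(S)|$ for every $i$. As noted in the discussion preceding the corollary, the model must be augmented with a queue at each supernode port (to absorb transmissions while outgoing links are failed) and a feedback/innovation-tracking mechanism so that the queue state evolves only when a received symbol is linearly independent of previously received ones. The scheme is then the natural one: the source repeatedly injects random linear combinations of $\m{X}(S)$ into its output-port queues, and every non-failed output port transmits a fresh random linear combination over $\mathbb{F}_q$ of the contents of its queue, with the coding coefficients $\alpha_{(i,e_j)}$, $\beta_{(e_i,e_j)}$, $\epsilon_{(e_i,(T_j,k))}$ redrawn each time step.

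The algebraic framework of Section \ref{sec:algebraic} then applies per time step. Over a horizon $T$, the cumulative rank of what $T_i$ receives is the rank of the block matrix $[M_{f_1}\,M_{f_2}\,\cdots\,M_{f_T}]$, whose expected per-step increment equals $mincut_{\m{F}'}(S,T_i)$ by Theorem \ref{thm:average_mincut}. The arguments of \cite{reliable}, which treat exactly this situation of random linear network coding over lossy packet networks with queues, show that this expected increment is actually realized in the long-run throughput with probability one, so $T_i$ accumulates $|\m{X}(S)|$ linearly independent combinations in asymptotically $|\m{X}(S)|/mincut_{\m{F}'}(S,T_i)$ time units per source symbol and can decode by matrix inversion.

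The main obstacle is the simultaneous-multicast aspect: a random code that achieves rank $|\m{X}(S)|$ at $T_i$ need not do so at $T_j$. This is the same difficulty resolved in Theorem \ref{thm:singlemulticast} and Theorem \ref{thm:static}, and the remedy is identical, namely to work over a sufficiently large field $\mathbb{F}_q$. A union bound over the $N$ receivers (and, if desired, over the set of relevant failure realizations up to time $T$) on the event $\prod_i \det(M_i^{(T)})\neq 0$ drives the failure probability to zero, so a single random code serves all receivers at once and the time-average min-cut is achieved at each $T_i$ simultaneously.
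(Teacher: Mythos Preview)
Your overall strategy---defer to \cite{reliable} after observing that the ADT-with-erasures model fits its lossy-packet-network framework---is exactly what the paper does, and its proof is in fact just that citation plus one sentence identifying the key mechanism. But in fleshing out the sketch you introduce two inaccuracies worth flagging.

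First, the claim that ``the cumulative rank of what $T_i$ receives is the rank of the block matrix $[M_{f_1}\,M_{f_2}\,\cdots\,M_{f_T}]$'' is not correct in the queued, multi-hop model you describe. Each $M_{f_t}$ is the \emph{instantaneous} end-to-end transfer matrix of the network snapshot at time $t$; but a symbol the source emits at time $t$ does not traverse the whole network and arrive at $T_i$ within that same step. With queues, what $T_i$ sees at time $t$ is a linear combination of source symbols injected at various earlier times, mixed by a sequence of per-hop codings across different failure realizations. The rank of the received block is therefore not governed by the concatenation of the $M_{f_t}$, and its expected per-step increment is not directly $mincut_{\m{F}'}(S,T_i)$ via Theorem~\ref{thm:average_mincut}. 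The actual engine in \cite{reliable}, which the paper singles out, is a queueing-stability argument: the number of innovative packets waiting at each supernode behaves like a stable $M/M/1$ queue whenever the injection rate is below the downstream min-cut, and finiteness of the mean queue length is what converts the average min-cut into achieved throughput.

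Second, your final paragraph imports the static-code device $\prod_i\det(M_i)\neq 0$ from Theorems~\ref{thm:singlemulticast} and~\ref{thm:static}, but that device is for a \emph{fixed} assignment of coding coefficients. In the time-varying scheme you (and \cite{reliable}) use, coefficients are redrawn at every step, so there is no single determinant that must be nonzero; each receiver $T_i$ independently accumulates innovative packets at its own time-average min-cut rate, and the field-size issue is absorbed into the per-step innovation probability. The simultaneity of multicast comes for free from the dynamic randomization, not from a static polynomial-nonvanishing argument.
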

\begin{proof} Reference \cite{reliable} shows that the multicast connection is feasible if and only $mincut_{\m{F}'}(S,T_i) \geq |\m{X}(S)|$ for all $i$. The proof in \cite{reliable} relies on the fact that every supernode behaves like a stable $M/M/1$ queuing system in steady-state, and thus, the queues (or the number of innovative packets to be sent to the next hop supernode) has a finite mean if the network is run for sufficiently long period of time.
\end{proof}
\begin{corollary}[Random Coding for Time-varying Network] Consider $(G, \m{C})$ where $\m{C}$ is a multicast connection. Assume failure pattern $f \in \m{F}'$ occurs with probability $p_f$. Then, random linear coding, where some or all code variables $\alpha_{(i, e_j)}, \beta_{(e_i, e_j)}, \epsilon_{(e_i, (D_j, k))}$ are chosen over all elements of $\mathbb{F}_q$ guarantees decodability at destination nodes $T_i$ for all $i$ simultaneously with arbitrary small error probability.
\end{corollary}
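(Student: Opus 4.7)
The plan is to combine the feasibility characterization in Corollary \ref{thm:average_multicast} with the probabilistic algebraic success bound used in Corollary \ref{thm:coding_multicast} and Corollary \ref{thm:static_coding}. First I would set up a block-length-$n$ formulation: over $n$ consecutive time slots the network realizes a sequence of failure patterns $f_1, \dots, f_n$ drawn i.i.d.\ according to $\{p_f\}_{f \in \m{F}'}$, and at each slot the active ports transmit random linear combinations with coefficients $\alpha, \beta, \epsilon$ drawn uniformly from $\mathbb{F}_q$. By Corollary \ref{thm:average_multicast}, the hypothesis underlying a meaningful multicast corresponds exactly to $mincut_{\m{F}'}(S,T_i) \geq |\m{X}(S)|$ for every $i$, so each receiver has, in expectation, enough innovative dimensions per unit time.

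Next I would invoke the queueing argument of \cite{reliable}: under per-hop feedback and stable $M/M/1$ queues at every intermediate supernode, each receiver $T_i$ accumulates innovative packets at the time-averaged rate $mincut_{\m{F}'}(S,T_i)$. Thus for any $\delta > 0$ and sufficiently large $n$, the event that $T_i$ receives at least $n\,(|\m{X}(S)| - \delta)$ innovative dimensions for every $i$ has probability approaching one as $n \to \infty$. Conditioned on this high-probability event, the aggregated transfer matrix from $S$ to each $T_i$ has entries that are polynomials of bounded degree in the random coding variables, and by the Schwartz-Zippel bound underlying \cite{rlc} it has full rank on the $|\m{X}(S)|$-dimensional source subspace with probability at least $(1 - N/q)^{\eta}$, where $\eta$ is the total number of random-coded links. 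Either taking $q$ large or letting $n \to \infty$ then drives the overall decoding error at every destination to zero.

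The main obstacle will be reconciling the two independent randomness sources: the failure process determines which algebraic variables are effectively zeroed out on a given slot, and the Schwartz-Zippel bound must be applied to the \emph{restricted} system matrix corresponding to the realized failure trajectory, in the same conditioning style used for Corollary \ref{thm:static_coding}. The time-varying element adds the wrinkle that one must simultaneously control the rare event in which the random failure trajectory itself starves some receiver of dimensions; this requires a careful union bound over the typical set of failure sequences of length $n$ before invoking the algebraic argument on the conditional system matrix.
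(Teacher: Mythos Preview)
Your proposal is correct and follows essentially the same route as the paper: the paper's proof is a one-line citation of Corollary \ref{thm:average_multicast} together with the results of \cite{rlc} and \cite{reliable}, and you are simply unpacking those citations into the queueing-plus-Schwartz--Zippel argument they contain. The extra care you take in separating the failure-process randomness from the coding randomness is already absorbed into \cite{reliable}, so no new work is needed there.
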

\begin{proof} This is a direct consequence of Corollary \ref{thm:average_multicast} and results in \cite{rlc}\cite{reliable}.
\end{proof}

\section{Network with Cycles}\label{sec:delay}

\begin{figure*}[tbp]
\[\scriptsize
\left(
\begin{array}{cccccccccccc}
1 & 0 & D & 0 & 0 & D & D^2\beta_{(e_3, e_7)} & 0 & D^2 \beta_{(e_6, e_9)} & D^2\beta_{(e_6, e_{10})} & D^3 \beta_{(e_6, e_9)} & D^3\beta_{(e_3, e_7)} + D^3 \beta_{(e_6, e_{10})}\\
0 & 1 & 0 & D & 0 & 0 & D^2 \beta_{(e_4, e_7)} & 0 & 0 & 0 & 0 & D^3 \beta_{(e_4, e_7)}\\
0 & 0 & 1 & 0 & 0 & 0 & D\beta_{(e_3, e_7)} &  0 & 0 & 0 & 0 & D^2 \beta_{(e_3, e_7)}\\
0 & 0 & 0 & 1 & 0 & 0 & D \beta_{(e_4, e_7)} & 0 & 0 & 0 & 0 & D^2 \beta_{(e_4, e_7)}\\
0 & 0 & 0 & 0 & 1 & 0 & 0 & 0 & 0 & 0 & 0 & 0\\
0 & 0 & 0 & 0 & 0 & 1 & 0 & 0 & D\beta_{(e_6, e_{9})} & D\beta_{(e_6, e_{10})}& D^2 \beta_{(e_6, e_9)} & D^2 \beta_{(e_6, e_{10})}\\
0 & 0 & 0 & 0 & 0 & 0 & 1 & 0 & 0 & 0 & 0 & D\\
0 & 0 & 0 & 0 & 0 & 0 & 0 & 1 & 0 & 0 & 0 & 0\\
0 & 0 & 0 & 0 & 0 & 0 & 0 & 0 & 1 & 0 & D & 0\\
0 & 0 & 0 & 0 & 0 & 0 & 0 & 0 & 0 & 1 & 0 & D\\
0 & 0 & 0 & 0 & 0 & 0 & 0 & 0 & 0 & 0 & 1 & 0\\
0 & 0 & 0 & 0 & 0 & 0 & 0 & 0 & 0 & 0 & 0 & 1\\
\end{array}
\right)
\]\vspace*{-.1cm}\caption{$12 \times 12$ matrix $(I-DF)^{-1}$ for network in Figure \ref{fig:network}. The matrix $F$ can be found in Figure \ref{fig:F}.}\label{fig:DF}
\end{figure*}

ADT networks are acyclic, with links directed from the source supernodes to the destination supernodes. However, wireless networks intrinsically have cycles as wireless links are bi-directional by nature. In this section, we extend the ADT network model to networks with cycles. In order to incorporate cycles, we need to introduce the notion of time -- since, without the notion of time, the network with cycles may not be casual. To do so, we introduce delay on the links. As in \cite{algebraic}, we model each link to have the same delay, and express the network random processes in the delay variable $D$.

We define $X_t(S, i)$ and $Z_t(T, j)$ to be the $i$-th and $j$-th binary random process generated at source $S$ and received at destination $T$ at time $t$, for $t = 1, 2,...$. We define $Y_t(e)$ to be the process on edge $e$ at time $t = 1, 2, ...$, respectively. We express the source processes as a power series in $D$, $\m{X}(S, D) = [X(S, 1, D), X(S, 2, D), ..., X(S, \mu(S), D)]$ where $X(S, i, D) = \sum_{t=0}^\infty X_t(S, i)D^t$. Similarly, we express the destination random processes $\m{Z}(T, D) = [Z(T, 1, D),$ $...,Z(T, \nu(Z), D)]$ where $Z(T, i, D) = \sum_{t=0}^\infty Z_t(T, i)D^t$. In addition, we express the edge random processes as $Y_t (e, D) = \sum_{t=0}^\infty Y_t(e)D^t$. Then, we can rewrite Equations (\ref{eq:y}) and (\ref{eq:y-s}) as
\begin{equation*}\label{eq:time_y-s}
Y_{t+1}(e) =\sum_{e'\in I(V)} \beta_{(e', e)} Y_t(e') + \sum_{X_t(S,i) \in \m{X}(S)} \alpha_{(i, e)} X_t(S, i).
\end{equation*}
Furthermore, the output processes $Z_t(T,i)$ can be rewritten as
\begin{align*}\label{eq:time_z}
Z_{t+1}(T, i)  =  \sum_{e' \in I(T)}\epsilon_{e',(T, i)} Y_{t}(e').
\end{align*}
%
Using this formulation, we can extend the results from \cite{algebraic} to ADT networks with cycles. We show that a system matrix $M(D)$ captures the input-output relationships of the ADT networks with delay and/or cycles.

\begin{theorem}\label{thm:delay}
Given a network $G = (\m{V}, \m{E})$, let $A(D)$, $B(D)$, and $F$ be the encoding, decoding, and adjacency matrices, as defined here:
\begin{align*}
A_{i,j} &=
\begin{cases}
\alpha_{(i, e_j)}(D) &\text{if $e_j \in O(S)$ and $X(S, i) \in \m{X}(S)$,}\\
0 &\text{otherwise.}
\end{cases}\\
B_{i, (T_j, k)} &=
\begin{cases}
\epsilon_{(e_i, (T_j, k))}(D) &\text{if $e_i \in I(T_j)$, $Z(T_j, k) \in \m{Z}(T_j)$,}\\
0 &\text{otherwise}.
\end{cases}
\end{align*}
and $F$ as in Equation (\ref{eq:F}). The variables $\alpha_{(i, e_j)} (D)$ and $\epsilon_{(e_i, (T_j, k))}(D)$ can either be constants or rational functions in $D$. Then, the system matrix of the ADT network with delay (and thus, may include cycles) is given by
\begin{equation}
M (D) = A(D) \cdot (I - DF)^{-1} \cdot B(D)^T.
\end{equation}
\end{theorem}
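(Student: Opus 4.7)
The plan is to mirror the proof of Theorem 3 in \cite{algebraic}, adapting it to the ADT model by reinterpreting the entries of $F$ (which now encode the fixed broadcast/MAC topology in addition to the free coding coefficients) and by handling the cycles through a formal power series in $D$. The core idea is to pass from the time-domain recursions into the $D$-transform domain, where the unit delay on every link becomes multiplication by $D$, collect all edge processes into a single vector equation, and then solve it by inverting $I-DF$.

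First I would introduce the edge process vector $\mathbf{Y}(D)=(Y(e_1,D),\dots,Y(e_{|\m{E}|},D))$ and rewrite the recursions for $Y_{t+1}(e)$ and $Z_{t+1}(T,i)$ given just before the theorem in the $D$-domain. The recursion for a non-source port gives $Y(e,D)=D\sum_{e'}\beta_{(e',e)}Y(e',D)$, while for $e\in O(S)$ an additional source term $\sum_i\alpha_{(i,e)}(D)X(S,i,D)$ appears; folding the 0-1 topology entries of $F$ (which arise from broadcast/MAC hyperedges and carry the same unit delay) into the same matrix $F$ as the $\beta$-entries, this collapses into the single matrix equation
\begin{equation*}
\mathbf{Y}(D)=\mathbf{X}(S,D)\,A(D)+D\,\mathbf{Y}(D)\,F,
\end{equation*}
which rearranges to $\mathbf{Y}(D)(I-DF)=\mathbf{X}(S,D)A(D)$. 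Applying the decoding relation $\mathbf{Z}(T,D)=\mathbf{Y}(D)B(D)^{T}$ and substituting yields $\mathbf{Z}(T,D)=\mathbf{X}(S,D)\,A(D)(I-DF)^{-1}B(D)^{T}$, which is exactly $M(D)=A(D)(I-DF)^{-1}B(D)^{T}$.

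The main obstacle is justifying the existence of $(I-DF)^{-1}$ when $G$ has cycles, since then $F$ need not be nilpotent and the argument used in Theorem~\ref{thm:m} (based on $\det(I-F)=1$ for acyclic $F$) fails. The resolution is to work over the ring of formal power series $\mathbb{F}_{q}[[D]]$ (or the Laurent series field $\mathbb{F}_{q}((D))$) rather than over $\mathbb{F}_{q}$: the Neumann-type expansion
\begin{equation*}
(I-DF)^{-1}=\sum_{k\ge 0}D^{k}F^{k}
\end{equation*}
is entry-wise a well-defined formal power series, because the coefficient of $D^{k}$ in the $(i,j)$ entry is the sum over length-$k$ walks from $e_{i}$ to $e_{j}$ weighted by the corresponding $\beta$ products, which is a finite sum in the finite port set even when cycles allow arbitrarily long walks. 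Equivalently $\det(I-DF)$ is a polynomial in $D$ with constant term $1$, hence a unit in $\mathbb{F}_{q}[[D]]$, so the inverse exists; the causality built into the delay model (the factor of $D$ on every transition) is exactly what prevents an instantaneous algebraic loop and makes the inversion legitimate. With this in hand, the derivation in the previous paragraph goes through verbatim, and the acyclic case of Theorem~\ref{thm:m} is recovered as the specialization in which $F$ is nilpotent so that the series terminates and one can set $D=1$.
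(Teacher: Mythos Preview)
Your proposal is correct and follows essentially the same approach as the paper: the paper's proof is a one-line reference back to Theorem~\ref{thm:m} (which in turn defers to Theorem~3 of \cite{algebraic}), and the surrounding text invokes the same Neumann series $(I-DF)^{-1}=I+DF+D^{2}F^{2}+\cdots$ that you spell out. You have simply supplied the details the paper omits, including the formal-power-series justification for invertibility in the cyclic case.
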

\begin{proof}The proof for this is similar to that of Theorem \ref{thm:m}.
\end{proof}

Similar to Section \ref{sec:algebraic}, $(I-DF)^{-1}$ represents the impulse response of the network with delay. This is because the series $I + DF + D^2F^2 + D^3F^3 + ...$ represents the connectivity of the network while taking delay into account. For example, $F^k$ has a non-zero entry if there exists a path of length $k$ between two port. Now, since we want to represent the time associated with traversing from port $e_i$ to $e_j$, we use $D^kF^k$, where $D^k$ signifies that the path is of length $k$. Thus, $(I-DF)^{-1} = I + DF + D^2F^2 + D^3F^3 + ...$ is the impulse response of the network with delay. An example of $(I-DF)^{-1}$ for the example network in Figure \ref{fig:newnetwork} is shown in Figure \ref{fig:DF}.

Using the system matrix $M(D)$ from Theorem \ref{thm:delay}, we can extend Theorem \ref{thm:mincut_maxflow}, Theorem \ref{thm:singlemulticast}, Theorem \ref{thm:multiplemulticast}, Theorem \ref{thm:disjoint}, and Theorem \ref{thm:twolevel} to ADT networks with cycles/delay. However, there is a minor technical change. We now operate in a different field -- instead of having coding coefficients from the finite field $\mathbb{F}_q$, the coding coefficients $\alpha_{(i, e_j)}(D)$ and $\epsilon_{((e_i, (T_j, k)))}(D)$ are now from $\mathbb{F}_q(D)$, the field of rational functions of $D$. We shall not discuss the proofs in detail; however, this is a direct application of the results in \cite{algebraic}.

\section{Code Construction for Multicast Connection}\label{sec:code}
As presented in Sections \ref{sec:singlesource}-\ref{sec:delay}, random linear network coding, which is completely distributed, requires randomization. As a result, random linear network coding achieves the capacity for the ADT network model \emph{with high probability}. Similarly, \cite{jaggi} introduced a distributed binary-vector network code, called \emph{permute-and-add}, in which each node randomly and uniformly selects a permutation matrix from all such matrices for its coding operation. In \cite{jaggi}, they show that permute-and-add is still optimal for multicast connections -- \ie achieves the capacity with high probability. Thus, network codes in $\mathbb{F}_q$, $q \geq 2$, can be converted to vector code in binary field, $(\mathbb{F}_2)^{\log_2(q)}$, without loss in performance. As a result, permute-and-add can be applied to the ADT network model. However, it is important to note that a randomized, distributed approach does not \emph{guarantee decodability} (with probability 1).

In this section, we propose an efficient code construction for multicast connection in ADT network, which \emph{guarantees all destination supernodes to decode if the connection is feasible}. Furthermore, we only require that there be some \emph{local} coordination among neighboring supernodes (more precisely, among supernodes within a layer), as we shall discuss in Section \ref{code_cons_input}.

Given an acyclic ADT network $G = (\m{V}, \m{E})$ and a multicast connection $\m{C} = \{(S, T_i, \m{X}(S)) | T_i \in\m{T}\}$, we consider the problem of efficiently finding an assignment for $\alpha_{(i, e_j)}, \epsilon_{(e_i, (D_j, k))}$, and $\beta_{(e_i, e_j)}$ such that the system matrix $M$ is invertible in $\mathbb{F}_q$. Note that we assume that $(G, \m{C})$ is solvable -- \ie $mincut(S, T_i) \geq R$ for all $i$, where $R = |\m{X}(S)|$ is the multicast rate. We shall consider $R = \min_{T_i \in \m{T}} mincut(S, T_i)$.

For the given multicast rate $R$, we define the set
\begin{equation}
\label{vhmin} \m{V}(R)\defined\{V \in \m{V}|mincut(S, V)\geq R\}.
\end{equation}
A property of our code construction is that all supernodes in $\m{V}(R)$ will be able to decode the data, including those that are not in the set of destination supernodes $\m{T}$. We also note that the code designer may be oblivious of the exact location of the nodes in $T$ or $\m{V}(R)$.

We assume that each supernode $V \in \m{V}$ contains $n$ input ports and $n$ output ports, where
\begin{equation}
n = \left\lceil \frac{1}{2} \log \max_{(V_i, V_j) \in \m{E}} SNR_{(V_i, V_j)}\right\rceil.
\end{equation}
Therefore, we assume that $|I(V)|=|O(V)|=n$. For ease of notation, we distinguish the input ports of supernode $V$ by
\begin{equation}
\label{gamma_def} I(V) \defined\{x^V_1,\cdots,x^V_n\},
\end{equation}
and the output ports by
\begin{equation}
\label{gamma_def_out}
O(V)\defined\{y^V_1,\cdots,y^V_n\}.
\end{equation}

The network is assumed to have $\lambda$ layers, where all links are from layer $l$ to layer $l+1$ for $l = 1, ..., \lambda - 1$. The source is at layer 1. We assume that there are at most $N_{layer}$ nodes at any layer $l \in [1, \lambda]$. Since the network is acyclic, we can arrange all the ports in a topological order. The input ports of a certain supernode always precede the output ports of the same supernode. In addition, we adopt the convention that ports of supernodes in layer $l$ precede all the ports of supernodes in layer $l+1$, for $l=1,\cdots,\lambda-1$. We make the assumption that within a single layer, the supernodes are ordered from top to bottom. Also, within each supernode, ports are arranged from top to bottom.

We denote $P(x^V_i)$ to be the set of output ports that have links incoming into the input port $x^V_i$ of a supernode $V$ in layer $l$. By assumption, $0 \leq |P(x^V_i)| \leq N_{layer}$ -- \ie there are at most $N_{layer}$ edges incoming to $x^V_i$ from output ports in layer $l-1$. We denote $C(y^V_j)$ to be the set of input ports that have links outgoing to the output port $y^V_j$ of a supernode $V$ in layer $l$. Note that $0 \leq |C(y^V_j)| \leq N_{layer}$ since $y^V_j$ may be adjacent to input ports of different supernodes in layer $l+1$, but may not be adjacent to more than one input port per supernode in layer $l+1$.


\begin{figure}[tbp]
  \begin{center}
    \leavevmode
    \epsfig{file=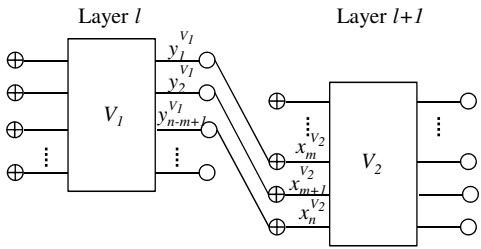,width=6.5cm}
        \caption{Supernodes $V_1$ and $V_2$.}
    \label{supernode2}
  \end{center}
\end{figure}

\subsection{Regular Sets and Virtual Sinks}
For the algorithm we describe in the following sections, we will use an invariant which will be maintained throughout the algorithm. Prior to defining this invariant, we need to introduce the concepts of regular sets and virtual sinks.

Consider supernodes at a certain layer $l$. We consider a set $W$ of $R$ ports, where for any supernode $V$ the set $W$ contains a subset of $V$'s output ports, or a subset of $V$'s input ports, but not both, or no ports of $V$. The set $W$ may contain ports of several supernodes. Now consider the following. If $W$ contains output ports of supernode $V$ then we connect each of the output ports to a ``virtual sink'' $T(W)$, which is a supernode consisting of its own ports. If $W$ contains $p$ input ports of supernode $V$, then we disconnect all the input ports of supernode $V$ that are not in $W$. We connect the $p$ upper output ports of $V$ to the virtual sink $T(W)$. The order in which the output ports are connected to $T(W)$ is not important. For consistency, however, we assume that the output ports of layer $l$ that are connected to sink $T(W)$, are connected to the input ports of $T(W)$ from top to bottom, each output port to a different input port of $T(W)$. See Figure \ref{regular_set} for an illustration.

\begin{definition}[\textbf{Regular set}]\label{den_regular} The set $W$ is said to be regular if $mincut(S, T(W)) = R$.
\end{definition}

\begin{figure}[tbp]
  \begin{center}
    \leavevmode
    \epsfig{file=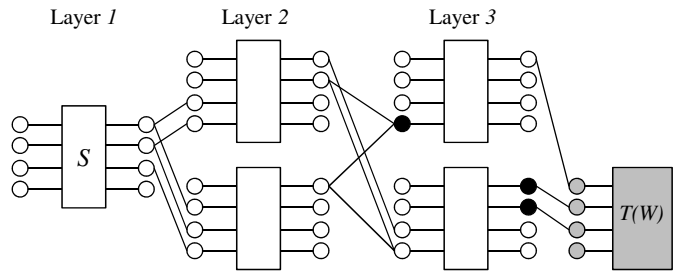,width=9cm}
        \caption{A set $W$ of input/output ports (in black) where $R=3$. The supernode and the ports shaded in grey represent the virtual sink $T(W)$. Note that $W$ is \emph{not} regular.}
    \label{regular_set}
  \end{center}
\end{figure}

In Figure \ref{regular_set}, we show an example of $W$ where $R = 3$, and a virtual sink $T(W)$. The set $W$ is not regular since $T(W)$ receives only rate of $2$.

The important property of regular sets, as we shall observe, is that there exists a network code such that the coding vectors of the regular sets are linearly independent. We shall exploit this property for our code construction.

\subsection{Overview of Coding Scheme}
We proceed through the ports in the topological order, and for each port we reach, we choose the coding coefficients, taken from
$\mathbb{F}_q$, where $q$ is the field size to be determined. Each port has a coding vector associated with it, where coding vector represents the coding coefficients used to generate the linear combination. Denote the coding vectors of the input ports of supernode $V$ by
\begin{equation}
\mathbf{x}^{I(V)}=\{\mathbf{x}^V_1,\cdots,\mathbf{x}^V_n\},
\end{equation}
where $\mathbf{x}^V_i \in {\mathbb F}_q^{R}$ is the coding vector associated with the input port $x^V_i$. Similarly, we denote the coding vectors of the output ports of supernode $V$ by
\begin{equation}
\mathbf{y}^{O(V)} = \{\mathbf{y}^V_1, \cdots, \mathbf{y}^V_n\},
\end{equation}
where $\mathbf{y}^V_j \in {\mathbb F}_q^{R}$ is the coding vector associated with the output port $y^V_j$. The coding vector $\mathbf{y}^V_j$ is
\begin{equation}
\label{yrn_def} {\bf y}^V_j=\sum_{i=1}^n \beta_{(x^V_i,y^V_j)}{\bf x}^V_i,\ 1\leq j\leq n,
\end{equation}
where $\beta_{(x^V_i,y^V_j)}\in {\mathbb F}_q$ are the supernode internal coding coefficients, as described in Section \ref{sec:algebraic}.

Although we introduce new notation for the coding vectors for convenience, these vectors can still be captured by the system matrix $M = A (I-F)^{-1}B^T$. Consider an input port $x^V_i$. Assume that $x^V_i$ is the $p$th port in the topological order. Then, the coding vector $\mathbf{x}^V_i$ at this input port is $p$th column of the matrix $A(I-F)^{-1}$. This is because what input port $x^V_i$ receives depends on two things: first, the sources encoding operations, represented by $A$; second, the network connectivity and coding operations performed within the network, represented by $(I-F)^{-1}$. The same argument applies to the output ports $y^V_j$. Thus, the columns of $A(I-F)^{-1}$ is equal to the coding vectors at the corresponding input/output ports in the network.

We refer to this coding operation in Equation (\ref{yrn_def}) as ``forward coding". Once the coding vector $\mathbf{y}^V_j$ of the output port is determined, we can multiply it by the coding coefficient $k_j$. We refer to this step as ``virtual coding". The ``virtual coding" can be incorporated into the ``forward coding". However, we separate the coding into two distinct phases for purposes of presentation.

For an input port $x^V_i$, let $\mathbf{y}_1,\cdots,\mathbf{y}_p$ be the coding vectors of the output ports in the set $P(x^V_i)$. Then, the coding vector of the input port $x^V_i$ is given by
\begin{equation}
\mathbf{x}^V_i=\sum_{j=1}^p k_j \mathbf{y}_j,
\end{equation}
where $k_{j}\in {\mathbb F}_q$ are the virtual coding coefficients. Note that \emph{only a single coefficient $k_j$ is chosen for all ports in $C(y_j)$}, which requires that the supernodes in the same layer coordinate locally when determining $k_j$'s.

\begin{definition}[\textbf{Cut of ports}] Consider the coding scheme, which assigns coding coefficients in a topological order, as mentioned above. Let $t$ be the time index. We denote $\m{C}_t = (\hat{S}, \hat{S}^c)$ to be the current cut of the algorithm, where $\hat{S}$ is the set of ports whose coding coefficients have been determined, and $\hat{S}^c$ the set of remaining ports. An output port $y^V_j$ is in $\hat{S}$ if the coding coefficient $\beta_{(x^V_i, y^V_j)}$ of its supernode $V$ have already been determined. An input port $x^V_i$ is in $\hat{S}$ if \emph{all} of the virtual coefficients $k_j$ of output ports in $P(x_i^V)$ have been determined. The input ports of the source are in $\hat{S}$.
\end{definition}

A cut of ports ${\cal C}_t$ is not necessarily a cut of supernodes. In a cut of ports, ports of the same supernode can be in two different parts of the cut $\m{C}_t$. We do, however, restrict ourselves to a specific type of cuts of ports -- all the input ports of a specific supernode are on the same side of the cut, and all the output ports are on the same side of the cut.

\begin{definition}[\textbf{Boundary set}] A subset of ports $\m{Q}_{\m{C}_t}$ is a boundary set if
\begin{align*}
{\cal Q}_{{\cal C}_t} = &\ \{\textit{Input ports in }\hat{S}\textit{
with outputs }\textit{in
}\hat{S}^c\}\ \cup\\
&\ \{\textit{Output ports in }\hat{S}\textit{ with edges
outgoing to }\hat{S}^c\}.
\end{align*}
\end{definition}

Since the topological order proceeds from top to bottom at each layer, if the boundary set contains the input ports of a certain supernode $V$ at layer $l$, then it will also contain the input ports of the supernodes that are above supernode $V$ at layer $l$. Similarly, if the boundary set contains the output ports of a certain supernode $V$ then it will also contain the output ports of the supernodes that are above it at layer $l$. Figure \ref{exm3} shows an example of a boundary set, ${\cal Q}_{{\cal C}_t}$.

\begin{figure}[tbp]
  \begin{center}
    \leavevmode
    \epsfig{file=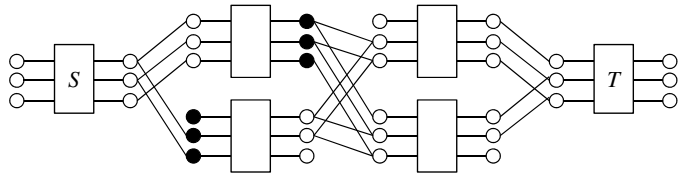,width=9cm}
        \caption{An example of ${\cal Q}_{{\cal C}_t}$ (in black).}
    \label{exm3}
  \end{center}
\end{figure}

\begin{lemma}
$R\leq n\leq |{\cal Q}_{{\cal C}_t}|\leq nN_{layer}$.
\end{lemma}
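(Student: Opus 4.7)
The plan is to establish the three inequalities in sequence, since they have quite different flavours.

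First I would handle $R \leq n$. Because the multicast connection is assumed feasible, $R \leq mincut(S,T_i)$ for every sink $T_i$, and by Definition \ref{def:mincut} this is the minimum over all cuts of the rank of the incidence matrix. Taking the trivial cut $\Omega = \{S\}$ gives an incidence matrix with only $|O(S)| = n$ rows, so its rank is at most $n$. Hence $mincut(S,T_i) \leq n$ and so $R \leq n$. This step is essentially immediate from the definitions.

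For the middle inequality $n \leq |\mathcal{Q}_{\mathcal{C}_t}|$, my plan is to exploit the structural restriction that all input ports of any supernode (and, separately, all output ports) lie on the same side of $\mathcal{C}_t$. Let $V$ be the last supernode, in the topological order, that has at least one port in $\hat{S}$. I would split into two cases. If $V$'s output ports are still in $\hat{S}^c$, then by the same-side restriction all $n$ of $V$'s input ports are in $\hat{S}$, and each qualifies as a boundary element since its supernode's outputs lie in $\hat{S}^c$. Otherwise $V$'s outputs are already in $\hat{S}$; but then, by the choice of $V$ as the \emph{latest} such supernode, every destination of every output port of $V$ (which must lie in some strictly later supernode) is in $\hat{S}^c$, so all $n$ output ports of $V$ belong to $\mathcal{Q}_{\mathcal{C}_t}$. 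Either way, $|\mathcal{Q}_{\mathcal{C}_t}| \geq n$.

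The main obstacle is the rightmost inequality $|\mathcal{Q}_{\mathcal{C}_t}| \leq nN_{layer}$. My plan here is to localize the cut by identifying the \emph{active layer} $l^\star$, namely the smallest layer index still containing a port in $\hat{S}^c$; every layer strictly before $l^\star$ is fully processed and every layer strictly after $l^\star$ is entirely in $\hat{S}^c$. The layered architecture, together with the per-supernode bound of $n$ input and $n$ output ports and the per-layer bound of $N_{layer}$ supernodes, should then confine boundary ports to an interface associated with layer $l^\star$, delivering the bound $nN_{layer}$. I expect this to be the hardest step because naive bookkeeping permits the boundary to transiently contain output ports of layer $l^\star-1$ whose destinations in layer $l^\star$ are not yet all in $\hat{S}$, together with input or output ports of partially processed supernodes at layer $l^\star$ itself; showing these contributions collapse to a single layer's worth of ports will require a careful case analysis on the position within the topological order and on whether the currently active supernode has had its inputs or its outputs processed most recently.
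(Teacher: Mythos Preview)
Your approach is the same as the paper's in outline: treat the three inequalities separately, use the cut $\Omega=\{S\}$ for $R\le n$, use the same-side restriction on a supernode's ports for the middle inequality, and use the layered structure together with the per-layer bound $N_{layer}$ for the upper bound. Your write-up for the first two inequalities is actually more careful than the paper's, which dispatches both in a sentence each.

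Where you diverge is in your assessment of the third inequality. You flag it as the hardest step and anticipate a delicate case analysis to rule out a boundary that straddles layers $l^\star-1$ and $l^\star$. The paper treats it as immediate, and in fact the argument is short once you make two observations. First, a single supernode can contribute only input ports \emph{or} output ports to $\mathcal{Q}_{\mathcal{C}_t}$, never both: input ports of $V$ lie in the boundary only when $O(V)\subseteq\hat S^c$, while output ports of $V$ lie in the boundary only when $O(V)\subseteq\hat S$. Hence every contributing supernode supplies at most $n$ ports. Second, the feared overlap between layers does not occur under the algorithm's processing order: the virtual coefficients $k_j$ for all output ports of layer $l^\star-1$ are assigned jointly (see the ``Coding for Input Ports'' stage), so every input port of layer $l^\star$ enters $\hat S$ at the same instant, and precisely at that instant every output port of layer $l^\star-1$ leaves the boundary. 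Thus at any cut $\mathcal{C}_t$ the contributing supernodes all sit in a single layer, giving $|\mathcal{Q}_{\mathcal{C}_t}|\le nN_{layer}$ directly. Your plan would reach the same conclusion, but via unnecessary casework.
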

\begin{proof}
Consider a cut of supernodes $\Omega$ where source supernode $\Omega = \{S\}$ and $\Omega^c = \m{V}\setminus \{S\}$. Since $S$ has $n$ output ports, we conclude that $R \leq n$ since $R$ is upper bounded by the min-cut. By definition, the output ports of a supernode $V$ are restricted to be on the same side of the cut ${\cal C}_t$. The same is true for the input ports of a supernode. It follows that there are at least $n$ ports in ${\cal Q}_{{\cal C}_t}$ -- \ie $n\leq |{\cal Q}_{{\cal C}_t}|$. Since the network is assumed to be layered, and the maximal number of supernodes in a layer is $N_{layer}$, it follows from the definition of ${\cal Q}_{{\cal C}_t}$ that $|{\cal Q}_{{\cal
C}_t}|\leq nN_{layer}$.
\end{proof}

The code construction considers each subset of $R$ ports in ${\cal Q}_{{\cal C}_t}$. Some of the subsets we shall consider are regular sets. Define ${\cal L}_t$ by
\begin{eqnarray}
{\cal L}_t=\{\textit{Regular subsets of }{\cal Q}_{{\cal C}_t}
\textit{ of size } R\}.
\end{eqnarray}
The number of subsets in ${\cal L}_t$ is upper bounded by
\begin{equation}
|{\cal L}_t|\leq {|\m{Q}_{\m{C}_t}|\choose R} \leq {nN_{layer}\choose R}.
\end{equation}

\noindent\textbf{\textit{Code Invariant:}} \emph{Ensure that at each stage of the code
construction, each subset in ${\cal L}_t$ is associated with
linearly independent coding vectors.}

\begin{lemma}
Maintaining the invariant of the algorithm is sufficient to ensure the decodability of the code at rate $R$.
\end{lemma}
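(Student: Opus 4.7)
The plan is to show that for any destination $T_i \in \mathcal{T}$ with $mincut(S, T_i) \geq R$, the invariant guarantees the existence of $R$ linearly independent coding vectors among the input ports of $T_i$. Since each coding vector lies in $\mathbb{F}_q^{R}$, having $R$ linearly independent ones means they form a basis, so $T_i$ can choose its decoding matrix $B$ to invert the resulting $R \times R$ matrix and recover $\mathcal{X}(S)$.

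First I would fix a destination $T_i$ and consider the stage $t_0$ of the algorithm at which all of $T_i$'s input ports have just been assigned coding vectors (i.e., all of $I(T_i)$ lies in $\hat S$ while none of $O(T_i)$ does). At this stage the boundary set $\mathcal{Q}_{\mathcal{C}_{t_0}}$ contains the input ports of $T_i$ (together with, possibly, input/output ports of other supernodes in the same layer). By definition, $\mathcal{L}_{t_0}$ contains every \emph{regular} $R$-subset of $\mathcal{Q}_{\mathcal{C}_{t_0}}$, and by the invariant, the coding vectors of any such subset are linearly independent.

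Next I would establish the key structural fact: there exists a subset $W \subseteq I(T_i)$ with $|W| = R$ such that $W$ is regular, i.e., $mincut(S, T(W)) = R$. To see this, recall that building $T(W)$ from a set $W$ of input ports of $T_i$ amounts to restricting $T_i$ to only those $R$ input ports and piping the corresponding top $R$ outputs to a virtual sink. Because $mincut(S, T_i) \geq R$, Theorem \ref{thm:mincut} says there is some assignment making the system matrix to $T_i$ have rank $\geq R$; equivalently, the $|I(T_i)|$ coding vectors delivered to $T_i$ can be made to span a space of dimension $\geq R$. Extracting $R$ ports corresponding to a basis of this space yields a subset whose restricted min-cut to $T(W)$ is still $R$, so $W$ is regular. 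This argument is morally the same matroidal/Mengerian exchange argument that underlies the Min-cut Max-flow theorem for ADT networks \cite{adt1}\cite{adt2}, and is where I expect the main obstacle: one must verify that the min-cut value survives the virtual-sink surgery (disconnecting the unused input ports of $T_i$ and routing only $W$ to $T(W)$), rather than collapsing to something strictly smaller.

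Finally, with $W \in \mathcal{L}_{t_0}$ in hand, the invariant yields $R$ linearly independent coding vectors $\{\mathbf{x}^{T_i}_{j} : x^{T_i}_j \in W\} \subset \mathbb{F}_q^{R}$. These vectors constitute the relevant $R$ columns of $A(I-F)^{-1}$ (see Section \ref{sec:algebraic}); choosing the decoding matrix $B$ at $T_i$ to pick out exactly these $R$ input ports produces an $R \times R$ system submatrix $M_i$ with nonzero determinant over $\mathbb{F}_q$. By Theorem \ref{thm:mincut_maxflow}, this is equivalent to feasibility of the connection to $T_i$ at rate $R$. Repeating the argument for each $T_i \in \mathcal{T}$ (and using that the same code is employed throughout), we conclude that all destinations decode simultaneously, establishing decodability of the multicast connection at rate $R$.
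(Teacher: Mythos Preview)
Your proposal is correct and follows the same overall strategy as the paper---exhibit a regular $R$-subset associated with each destination and invoke the invariant---but the paper makes one simplifying choice that dissolves the obstacle you flagged. Rather than searching for a regular subset among the \emph{input} ports of $T_j$, the paper takes $W$ to be the $R$ \emph{upper output} ports of $T_j$ and connects them directly to a virtual sink $T(W)$. Regularity of this $W$ is then immediate: since $mincut(S,T_j)\geq R$ and the supernode $T_j$ may perform arbitrary internal linear operations, any rate-$R$ code to $T_j$ can be routed onto its top $R$ outputs, giving $mincut(S,T(W))=R$ with no matroidal exchange or LI-path argument required.

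Your input-port route also works: once \emph{some} assignment places $R$ independent coding vectors on a set $W\subseteq I(T_i)$, disconnecting the remaining input ports of $T_i$ leaves the upstream network untouched, so that same assignment certifies $mincut(S,T(W))\geq R$ and hence $W$ is regular. This is sound, but it costs you the extra existence step you identified as the ``main obstacle''; the paper's output-port choice reaches the same conclusion in one line and sidesteps that surgery-survival issue entirely.
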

\begin{proof}For a (non-virtual) sink $T_j \in \m{T}$, let $W \subseteq O(T_j)$ be the $R$ upper output ports of $T_j$. By the definition of
$R$, we have $mincut(S, T_j) \geq R$. We connect the ports in $W$ to a virtual sink $T(W)$ with $R$ edges, where the $i$th output port of $T_j$ is connected to the $i$th input port of $T(W)$. It follows that  $R=mincut(S, T(W))$. Therefore, the set $W$ is regular.

The invariant ensures that the set $W$ will be associated with linearly independent coding vectors. It follows that $T_j$ will be able to decode the data, as required by the code. The same argument can be applied to all nodes in $\m{V}(R)$. The linearly independent vectors of the regular sets can be used to reconstruct the data of the source by matrix inversion \footnote{The coding vectors at the edges incoming into each sink can be made known to the sink by the source transmitting the identity matrix. Thus, the sink is informed of
the matrix which it must invert for decoding. This idea is similar to the one used for network coding \cite{chou03}.}.
\end{proof}

As the algorithm proceeds, ports may leave or enter ${\cal Q}_{{\cal C}_t}$. The list ${\cal L}_t$ is then updated accordingly, as we shall discuss in the following sections.

\subsection{Algorithm Description}
The algorithm starts from the upper $R$ input ports of the source $S$ with the standard basis as their coding vectors. The lower input ports of the source have the zero vector as their coding vectors. The source bits are mapped into the source symbols in the field ${\mathbb F}_q$, where $q=2^k$ for some $k$. The transmission is over block length $k=\log|{\mathbb F}_q|$. The vector of source symbols is $\mathbf{\mathbf{u}}=(u_1,\cdots,u_{R})^T$, where $u_i\in {\mathbb F}_q$. The symbol received by a port with coding vector $\mathbf{x} \in {\mathbb F}_q^R$ is $\mathbf{x}^T\mathbf{u}$.

Trivially, the invariant of the algorithm is maintained for the source $S$. At each step of the algorithm, we proceed to the next port in the topological order, determine and update the followings.
\begin{enumerate}
\item The coding coefficients for the new port (and thus the coding vectors).
\item The updated list ${\cal L}_t$ according to the new cut ${\cal C}_t$.
\end{enumerate}
To do so, we shall treat the input and the output ports separately, as discussed in the subsequent sections. For each layer, the coding coefficients for the input ports have to be determined before the coding coefficients for the output ports. We start by considering
coding for the output ports, assuming that the coding vectors at the input ports are given.

\subsubsection{\textbf{Coding for Output Ports}}\label{code_cons_sec}

Assume that at time $t$, the topological order has reached supernode $V$. For the output ports, we assume that in the topological order, $y^V_j$ precedes $y^V_k$ if $j\leq k$. Consider a certain subset $W \in {\cal L}_t$. Some of the ports in this subset can be input ports and some of them output ports, as is the case in Figure \ref{exm3}. This can occur if the topological order has already reached the output ports of several supernodes in the layer, while other output ports at the same layer have not yet been reached. Suppose that the ports in a subset of the list are given by $W=\{w_1,\cdots,w_{R}\}$ and their coding vectors are given by
\begin{equation}
\label{wdef}
\textbf{W}=\{\textbf{w}_1,\cdots,\textbf{w}_{R}\}.
\end{equation}

If the set $W$ contains $p\geq 1$ input ports from $I(V)$, then the subset $W$ has to be updated. After the coding of the output ports of supernode $V$, the input ports in $I(V)$ will be replaced by $p$ of the output ports from $O(V)$. Without loss of generality, assume that the $p$ ports in $W$ that are also in $I(V)$ are $\{w_1,\cdots,w_{p}\}=\{x^V_1,\cdots,x^V_{p}\}$. We choose a set of size $p$ from $O(V)$ and denote the set by $\{w'_1,\cdots,w'_{p}\}$. There are $n \choose p$ such possible sets.

We now update list $\m{L}_t$ to $\m{L}_{t+1}$. In ${\cal L}_{t+1}$, we replace $W = \{w_1, \cdots, w_R\}$ with $W' = \{w'_1, \cdots w'_p, w_{p+1}, \cdots, w_{R}\}$ for all $n \choose p$ choices of  $\{w'_1,\cdots,w'_{p}\}$. In order for the invariant to be maintained, we require the coding vectors of all $n \choose p$ new subsets to be simultaneously linearly independent.

\begin{lemma}
Consider a subset $W \in {\cal L}_{t}$, which contains $p\geq 1$ input ports from $I(V)$. If the field size $q\geq 2$, then there exists a set of coding coefficients $\beta_{(x_i^V,y_j^V)}\in {\mathbb F}_q$ for $1\leq i\leq n,1\leq j\leq n$ such that the coding vectors of a subset $W' \in {\cal L}_{t+1}$ are linearly independent.
\end{lemma}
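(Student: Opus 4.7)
My plan is to prove the lemma constructively by exhibiting an explicit choice of both the $\beta$ coefficients at supernode $V$ and the new subset $W' \in \m{L}_{t+1}$, reducing everything to the inductive hypothesis that the invariant held at time $t$. Relabel the input ports of $V$ so that the $p$ input ports of $V$ appearing in $W$ are $x_1^V, \ldots, x_p^V$. The invariant at step $t$ gives that $\mathbf{w}_1, \ldots, \mathbf{w}_R$ are linearly independent in $\mathbb{F}_q^R$, so in particular $\mathbf{x}_1^V, \ldots, \mathbf{x}_p^V, \mathbf{w}_{p+1}, \ldots, \mathbf{w}_R$ are linearly independent. I would take $W' = \{y_1^V, \ldots, y_p^V, w_{p+1}, \ldots, w_R\}$, which is one of the $\binom{n}{p}$ admissible replacements.

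To make the linear algebra match, I would set $\beta_{(x_i^V,y_i^V)} = 1$ for $1 \leq i \leq p$ and $\beta_{(x_k^V, y_i^V)} = 0$ for $k \neq i$, $1 \leq i \leq p$; the coefficients feeding $y_{p+1}^V, \ldots, y_n^V$ may be set arbitrarily, say to zero. All entries lie in $\{0,1\} \subseteq \mathbb{F}_q$, so any field with $q \geq 2$ suffices. Under this assignment, Equation~(\ref{yrn_def}) yields $\mathbf{y}_i^V = \mathbf{x}_i^V$ for $1 \leq i \leq p$, so the coding vectors associated with $W'$ are exactly $\{\mathbf{x}_1^V, \ldots, \mathbf{x}_p^V, \mathbf{w}_{p+1}, \ldots, \mathbf{w}_R\}$, a subset (after the substitution $\mathbf{x}_i^V = \mathbf{w}_i$) of the linearly independent set guaranteed by the invariant.

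The only substantive step is checking that this $W'$ actually lies in $\m{L}_{t+1}$, i.e., that $W'$ is regular. The virtual-sink construction for $W'$ attaches $y_1^V, \ldots, y_p^V$ and $w_{p+1}, \ldots, w_R$ to $T(W')$, and since $W'$ contains no input ports of $V$, nothing gets disconnected. The construction for $W$ routes the same outputs $y_1^V, \ldots, y_p^V$ to $T(W)$ but additionally disconnects the non-$W$ input ports of $V$. Hence the network defining $mincut(S, T(W'))$ is obtained from the one defining $mincut(S, T(W))$ by reinstating those incoming edges, so $mincut(S, T(W')) \geq mincut(S, T(W)) = R$. The trivial upper bound $mincut(S, T(W')) \leq |I(T(W'))| = R$ forces equality, so $W'$ is regular and lies in $\m{L}_{t+1}$.

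I expect the regularity verification to be the only nontrivial step; the linear-algebra part is immediate from the invariant at time $t$, and the field-size bound $q \geq 2$ comes for free since the chosen $\beta$ entries are binary. I note that the lemma is only a per-subset statement and does not by itself establish simultaneous linear independence for all $\binom{n}{p}$ replacements demanded by the full invariant; I expect the stronger simultaneous version to be handled in a subsequent lemma via a polynomial-non-vanishing argument at the cost of a larger field size $q$.
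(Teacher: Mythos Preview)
Your argument is correct, but it proceeds quite differently from the paper's. You read the lemma existentially (produce \emph{some} $W'\in\mathcal{L}_{t+1}$ and some $\beta$'s), and handle it by the explicit identity-like assignment $\beta_{(x_i^V,y_i^V)}=1$, all others zero, together with a direct min-cut comparison to check that your chosen $W'$ is regular. This is valid, and with the obvious relabeling it also covers the paper's intended reading in which $W'$ is \emph{given}: for $W'=\{y_{i_1}^V,\ldots,y_{i_p}^V,w_{p+1},\ldots,w_R\}$ one sets $\beta_{(x_j^V,y_{i_j}^V)}=1$ and zeros elsewhere, obtaining $\mathbf{y}_{i_j}^V=\mathbf{x}_j^V$ and invoking the invariant exactly as you do.

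The paper instead leaves the contributions $\mathbf{v}_j$ from $I(V)\setminus W$ \emph{fixed but arbitrary}, reduces independence of $\mathbf{W}'$ to nonsingularity of a $p\times p$ matrix with entries $\beta_{x_i^V,w_j'}+\gamma_{i,j}$, and then \emph{counts}: at least $q^{p^2}(1-1/q)^p>0$ choices are nonsingular once $q\ge 2$. The point of this detour is that the count (not just existence) is what drives the next lemma: the bound ``at most $nq^{n^2-1}$ bad choices per subset'' is combined with a union bound over the at most $\binom{nN_{\mathrm{layer}}}{R}$ subsets in $\mathcal{L}_{t+1}$ to produce the field-size requirement $q>n\binom{nN_{\mathrm{layer}}}{R}$ for simultaneous independence. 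Your constructive proof supplies no such count, so your anticipated ``polynomial-non-vanishing'' step would have to rebuild this quantitative input from scratch rather than quoting the present lemma. Your regularity check is correct and cleanly argued, but under the paper's ``given $W'$'' reading it is unnecessary, since membership in $\mathcal{L}_{t+1}$ already encodes regularity.
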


\begin{proof}
The coding vectors of $W'$ are
\begin{align*}
\mathbf{W}'= \biggl\{ \sum_{i=1}^p \beta_{x_i^V,y_1^V}&\mathbf{x}^V_i+\mathbf{v}_1,\cdots, \sum_{i=1}^p  \beta_{x_i^V,y_p^V}\mathbf{x}^V_i+\mathbf{v}_p,\\
& \mathbf{w}_{p+1},\cdots,\mathbf{w}_{R} \biggr\},
\end{align*}
where $\mathbf{v}_i$, $i=1,\cdots,p$, are the contributions of the coding vectors of the input ports in $I(V) \setminus W$. The $\mathbf{v}_i$'s are assumed to be fixed. Since $W \in{\cal L}_t$, invoking the inductive hypothesis, the set of vectors
$\mathbf{W}=\{\mathbf{x}^V_1,\cdots,\mathbf{x}^V_{p},\mathbf{w}_{p+1},\cdots,\mathbf{w}_{R}\}$
is a basis. We need to determine under which conditions the subset
$\bf{W}'$ is also a basis.

Consider the equation
\begin{align}\label{arel}
\varphi_1&\left(\sum_i \beta_{x_i^V,y_1^V}\mathbf{x}^V_i+\mathbf{v}_1\right)+\cdots+\varphi_p\left(\sum_i \beta_{x_i^V,y_p^V}\mathbf{x}^V_i+\mathbf{v}_p\right)\nonumber\\
&+\varphi_{p+1}\mathbf{w}_{p+1}+\cdots+\varphi_{R}\mathbf{w}_{R}=0.
\end{align}
In order for $\bf{W}'$ to be a basis, we find a sufficient condition for $\varphi_1=\varphi_2 = \cdots=\varphi_{R}=0$ to be the only
solution to (\ref{arel}). First, we express $\mathbf{v}_i$ in the basis $\mathbf{W}$ as
\[
\mathbf{v}_i = \gamma_{1,i}\mathbf{x}^V_1+\cdots+\gamma_{p,i}\mathbf{x}^V_p+\gamma_{p+1,i}\mathbf{w}_{p+1}+\cdots+\gamma_{R,i}\mathbf{w}_{R},
\]
where $\gamma$ are not all zeros. Substituting and rearranging the terms of (\ref{arel}) yields
\begin{align}
&\left[ \varphi_1(\beta_{x^V_1,y^V_1}+\gamma_{1,1})+\cdots+\varphi_p(\beta_{x_1^V,y_p^V}+\gamma_{1,p})\right]\mathbf{x}^V_1+\cdots \nonumber \\
&\hspace*{.2cm}+\left[\varphi_1(\beta_{x^V_p,y^V_1}+\gamma_{p,1})+\cdots+\varphi_p(\beta_{x^V_p,y^V_p}+\gamma_{p,p})\right]\mathbf{x}^V_p\nonumber\\
&\hspace*{.2cm}+\left[\varphi_1\gamma_{p+1,1}+\cdots+\varphi_p\gamma_{p+1,p}+\varphi_{p+1}\right]\mathbf{w}_{p+1}+\cdots\nonumber\\
&\hspace*{.2cm}+\left[\varphi_1\gamma_{R,1}+\cdots+\varphi_p\gamma_{R,p}+\varphi_{R}\right]\mathbf{w}_{R}=0.
\end{align}
Since $\bf{W}$ is a basis, it follows that
\begin{eqnarray}
\varphi_1(\beta_{x^V_1,y^V_1}+\gamma_{1,1})+\cdots+\varphi_p(\beta_{x^V_1,y^V_p}+\gamma_{1,p})&=0\nonumber\\
\vdots& \nonumber\\
\varphi_1(\beta_{x^V_p,y^V_1}+\gamma_{p,1})+\cdots+\varphi_p(\beta_{x^V_p,y^V_p}+\gamma_{p,p})&=0
\end{eqnarray}
This can be written in matrix form as
\begin{eqnarray}
\label{mat_rel}
\begin{pmatrix}
\beta_{x^V_1,y^V_1}+\gamma_{1,1}&\cdots&\beta_{x^V_1,y^V_p}+\gamma_{1,p}\\
\vdots     &\ddots& \vdots\\
\beta_{x^V_p,y^V_1}+\gamma_{p,1}&\cdots&\beta_{x^V_p,y^V_p}+\gamma_{p,p}\\
\end{pmatrix}
\begin{pmatrix}
\varphi_1\\
\vdots \\
\varphi_p
\end{pmatrix}
=\begin{pmatrix}
0\\
\vdots \\
0
\end{pmatrix}.
\end{eqnarray}

We note that $\varphi_1=\cdots=\varphi_p=0$ is the only solution of (\ref{arel}) if and only if the matrix in Equation (\ref{mat_rel}) is non-singular. For a $p\times p$ matrix over a field $\mathbb{F}_q$, the total number of matrices is $q^{p^2}$. Using a combinatorial argument, the number of non-singular matrices is
\begin{align}
\label{lem1eq}
(&q^p-1)(q^p-q)(q^p-q^2)\cdots(q^p-q^{p-1})\\
&=q^{p^2}\left(1-\frac{1}{q^p}\right)\left(1-\frac{1}{q^{p-1}}\right)\cdots\left(1-\frac{1}{q}\right)
 \geq  q^{p^2}(1-\frac{1}{q})^p. \nonumber
\end{align}
Equation (\ref{lem1eq}) can be explained as follows. For the first column of the matrix, we can choose any vector, except the zero vector. There are $q^p-1$ such vectors. For the second column, we can choose any vector except any multiple of the first column (which includes the zero vector). Thus, there are $q^p-q$ choices. In general, there are $q^p-q^{i-1}$ choices for the $i$th column.

So far, we have shown the conditions for $\varphi_1=\cdots=\varphi_p=0$ to be the only solution to (\ref{mat_rel}). If these conditions are maintained, then (\ref{arel}) becomes
\begin{eqnarray}
\varphi_{p+1}\mathbf{w}_{p+1}+\cdots+\varphi_{R}\mathbf{w}_{R}=0.
\end{eqnarray}
The only solution to this relation is $\varphi_{p+1}=\cdots=\varphi_{R}=0$ since the vectors $\mathbf{w}_{p+1},\cdots,\mathbf{w}_{R}$ are in the basis $W$ and are therefore linearly independent.

We conclude that if the matrix in (\ref{mat_rel}) is non-singular, then the vectors in $\mathbf{W'}$ are linearly independent. If $q\ge 2$, then the number of non-singular matrices is positive, and we can choose the set of coding coefficients $\beta_{x^V_i,y^V_j}$, for $1\leq i\leq n,1\leq j\leq n$, such that the matrix is non-singular.
\end{proof}

\begin{lemma}
\label{lem3} If alphabet size $q>n{nN_{layer} \choose R}$, then there exists a set of coding coefficients $\beta_{x^V_i,y^V_j}\in {\mathbb F}_q$ for $1\leq i\leq n,1\leq j\leq n$, such that all the subsets in ${\cal L}_{t+1}$ have linearly independent coding vectors simultaneously.
\end{lemma}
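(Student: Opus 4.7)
The plan is to promote the single-subset existence result of the previous lemma into a simultaneous existence result via a polynomial / Schwartz--Zippel style argument. Concretely, I would view the entries of the matrix in Equation (\ref{mat_rel}) (and its analogues for each $W' \in \mathcal{L}_{t+1}$) as polynomials in the unknowns $\beta_{(x^V_i, y^V_j)}$, and then reduce the lemma to finding a common non-root of a single product polynomial over $\mathbb{F}_q$.

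First I would fix an arbitrary $W' \in \mathcal{L}_{t+1}$ and revisit the previous lemma's analysis. The condition that $\mathbf{W}'$ is linearly independent is equivalent to a certain $p \times p$ minor determinant being nonzero, where $p \leq n$ is the number of output ports of $V$ appearing in $W'$. Viewed as a function of the unknowns $\{\beta_{(x^V_i, y^V_j)}\}_{i,j}$, this determinant $\Delta_{W'}(\beta)$ is a polynomial of total degree at most $p \leq n$. The previous lemma guarantees that $\Delta_{W'}$ is not the zero polynomial, since there exists at least one assignment over any $\mathbb{F}_q$ with $q \geq 2$ for which it does not vanish. Next, form the product polynomial
\begin{equation*}
\Phi(\beta) \;=\; \prod_{W' \in \mathcal{L}_{t+1}} \Delta_{W'}(\beta).
\end{equation*}
Since each factor is a nonzero polynomial and $\mathbb{F}_q$ is an integral domain, $\Phi$ itself is a nonzero polynomial. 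Its total degree is at most $n \cdot |\mathcal{L}_{t+1}| \leq n \binom{n N_{layer}}{R}$.

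Then I would invoke the Schwartz--Zippel lemma (equivalently, the standard fact that a nonzero multivariate polynomial over $\mathbb{F}_q$ of total degree $d$ has at most $d q^{k-1}$ roots in $\mathbb{F}_q^k$). Under the hypothesis $q > n \binom{n N_{layer}}{R} \geq \deg \Phi$, there must exist an assignment $\{\beta_{(x^V_i, y^V_j)}\} \in \mathbb{F}_q^{n^2}$ at which $\Phi$ does not vanish. Any such assignment simultaneously makes every $\Delta_{W'}$ nonzero, which by the argument of the previous lemma is exactly the condition that every subset in $\mathcal{L}_{t+1}$ has linearly independent coding vectors.

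The main obstacle I anticipate is bookkeeping the degree bound carefully: one must check that the polynomial $\Delta_{W'}$ truly has degree at most $n$ (bounded by the size $p$ of the submatrix) rather than blowing up with the ambient dimension $R$, and that $\Phi$ is indeed not identically zero (the previous lemma provides this because each $\Delta_{W'}$ has a witness of non-vanishing). Once these two points are verified, the Schwartz--Zippel step is standard and yields the stated field-size bound $q > n \binom{n N_{layer}}{R}$.
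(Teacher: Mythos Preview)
Your proposal is correct and follows essentially the same route as the paper: bound, for each $W'\in\mathcal{L}_{t+1}$, the fraction of assignments of the $\beta$'s that make the relevant $p\times p$ determinant vanish, then take a union bound (your product polynomial) over all at most $\binom{nN_{layer}}{R}$ such subsets. The only cosmetic difference is that the paper obtains the per-subset bound by directly counting singular $p\times p$ matrices over $\mathbb{F}_q$ (using the exact formula $\prod_{i=0}^{p-1}(q^p-q^i)$ and Bernoulli's inequality) rather than invoking Schwartz--Zippel on a degree-$p$ determinant; both yield the same bound $nq^{n^2-1}$ on bad assignments and hence the same field-size condition $q>n\binom{nN_{layer}}{R}$.
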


\begin{proof}
The subset $W \in {\cal L}_t$ contains $p\geq 1$ input ports from $I(V)$. From (\ref{lem1eq}), it follows that for a specific subset $W'$ in ${\cal L}_{t+1}$, the number of non-singular matrices is at least
\begin{eqnarray}
q^{p^2}\left(1-\frac{1}{q}\right)^p\geq
q^{p^2}\left(1-\frac{p}{q}\right),
\end{eqnarray}
where the last inequality follows from Bernoulli inequality which holds when $p>0$ and $q\geq 1$. Thus, the number of singular matrices is at most
\begin{eqnarray}
q^{p^2}-q^{p^2}\left(1-\frac{p}{q}\right)=pq^{p^2-1}\leq
nq^{n^2-1}.
\end{eqnarray}
In ${\cal L}_{t+1}$, there are at most $nN_{layer} \choose R$ newly added subsets. For each subset, there are at most $nq^{n^2-1}$ choices of a set of coding coefficients $\beta_{x^V_i,y^V_j}$, for $1\leq i\leq n,1\leq j\leq n$, such that the coding vector associated with the subset is linearly dependent. Therefore by the union bound there are at most ${nN_{layer} \choose R}nq^{n^2-1}$ choices of sets of coding coefficients such that at least one of the subsets in ${\cal L}_{t+1}$ can have dependent coding vectors. The total number of choices of coding coefficients is $q^{n^2}$. Therefore, if $q>n{nN_{layer} \choose R}$, then we will have at least a single set of coding coefficients such that all the subsets in ${\cal L}_{t+1}$ have linearly independent coding vectors simultaneously.
\end{proof}

We note that for each supernode, the coding vectors of the output ports can be viewed as columns of a parity check matrix of a Maximum Distance Separable (MDS) code with parameters $(n,k=p)$.

\begin{theorem}
The invariant of the algorithm is maintained for the output ports.
\end{theorem}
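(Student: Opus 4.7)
The plan is to argue by induction on the topological order in which supernodes have their output port coefficients assigned, using Lemma~\ref{lem3} (restated just above) as the workhorse for the inductive step. At time $t$, the inductive hypothesis is that every subset in $\m{L}_t$ is associated with linearly independent coding vectors; I want to show that after choosing the output coefficients $\beta_{(x_i^V, y_j^V)}$ for the supernode $V$ currently being processed, the same property holds for $\m{L}_{t+1}$.

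First, I would partition $\m{L}_{t+1}$ into two pieces. The first piece consists of those regular $R$-subsets $W' \subseteq \m{Q}_{\m{C}_{t+1}}$ that contain no output port of $V$; for each such $W'$, none of its ports changed between times $t$ and $t+1$, so $W'$ was already in $\m{L}_t$ and its linear independence is immediate from the inductive hypothesis. The second piece consists of $R$-subsets $W'$ that contain $p \geq 1$ output ports of $V$. Here the update rule gives an obvious bijection-like correspondence: take the regular set $W \in \m{L}_t$ obtained from $W'$ by replacing those $p$ output ports of $V$ with the top $p$ input ports of $V$ (the remaining $R-p$ ports of $W'$ lie in layers already finalized, hence belong to $W$ as well). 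One needs to verify that this $W$ is indeed regular, but this follows from the fact that the virtual sink $T(W)$ for the input-port version and $T(W')$ for the output-port version have the same min-cut from $S$: a cut separating $S$ from $T(W')$ either cuts above $V$ (giving the same value as for $T(W)$) or cuts between the input and output ports of $V$ (no smaller than the boundary value), and the min-cut from the source dominates regardless.

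Given this correspondence, the inductive hypothesis says that for every such $W'$, the ``parent'' set $W \in \m{L}_t$ has linearly independent coding vectors; this is exactly the setting of Lemma~\ref{lem3}. Applying Lemma~\ref{lem3} with field size $q > n \binom{nN_{layer}}{R}$ then yields a single assignment of the internal coding coefficients $\{\beta_{(x_i^V, y_j^V)} : 1 \leq i,j \leq n\}$ that \emph{simultaneously} makes the coding vectors of every $W' \in \m{L}_{t+1}$ of this second type linearly independent. Combined with the first piece, this establishes the invariant at time $t+1$.

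The main obstacle — and the step worth being careful about — is the set-level correspondence between $\m{L}_t$ and $\m{L}_{t+1}$: one must confirm that every newly appearing regular subset in $\m{L}_{t+1}$ arises from a regular subset already in $\m{L}_t$ (so that the inductive hypothesis actually applies to the appropriate ``parent''), and that no subset is lost from consideration without its linear independence being inherited. Once this bookkeeping is in place, the theorem is a direct application of Lemma~\ref{lem3}.
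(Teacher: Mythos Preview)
Your overall strategy—induction along the topological order with Lemma~\ref{lem3} supplying the inductive step—is exactly the paper's approach; the paper's own proof is in fact a two-sentence appeal to induction and Lemma~\ref{lem3}, without spelling out the partition of $\m{L}_{t+1}$ or the parent--child correspondence you describe.

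However, the specific correspondence you propose is incorrect. You claim that the parent $W \in \m{L}_t$ of a regular $W' \in \m{L}_{t+1}$ is obtained by replacing the $p$ output ports of $V$ in $W'$ with the \emph{top} $p$ input ports of $V$, and that $T(W)$ and $T(W')$ then have the same min-cut from $S$. This fails in general. Take $R=2$, $n=3$, and suppose $x^V_1$ has no incoming edge while $x^V_2$ and $x^V_3$ each receive an independent symbol from layer $l-1$. Then $W' = \{y^V_1, y^V_2\}$ is regular (the supernode $V$ can internally route $x^V_2, x^V_3$ onto $y^V_1, y^V_2$), but your proposed parent $W = \{x^V_1, x^V_2\}$ satisfies $mincut(S, T(W)) = 1$: in forming $T(W)$ one disconnects $x^V_3$, and $x^V_1$ contributes nothing. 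The correct regular parent here is $\{x^V_2, x^V_3\}$. The paper's phrase ``without loss of generality the $p$ input ports are $x^V_1,\ldots,x^V_p$'' is a relabeling convenience inside the proof of the preceding lemma, not a structural fact you can invoke when inverting the update.

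What \emph{is} true—and what closes the gap you correctly flagged as ``the step worth being careful about''—is that \emph{some} regular parent always exists: an underlying-flow argument in the spirit of Lemma~\ref{conv} picks out, among the $R$ linearly independent paths to $T(W')$, the $p$ specific input ports of $V$ that those paths traverse, and replacing the $p$ output ports of $V$ by those input ports yields a regular $W \in \m{L}_t$. Once you substitute this existence argument for your fixed ``top $p$'' map, the rest of your plan goes through and matches the paper.
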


\begin{proof}
By assumption, the invariant is maintained for the set ${\cal Q}_{\m{C}_{t(I)}}$, which contains the input ports of the supernodes in layer $l$. We need to show that the invariant is maintained for the set ${\cal Q}_{t(O)}$, which contains the output ports of the supernodes in layer $l$, where $t(I)<t(O)$. This follows by induction from Lemma \ref{lem3}.
\end{proof}

The average complexity of this stage is computed using arguments similar to those in \cite{jaggi05} for the network code construction. According to Lemma \ref{lem3}, we can choose the alphabet size at this stage to be $q=2n{nN_{layer}  \choose R}$. It follows from the proof of Lemma \ref{lem3} that the probability of failure when the coding coefficients are chosen randomly is upper bounded by
\begin{equation}
P_{f}\leq\frac{{nN_{layer} \choose  R}nq^{n^2-1}}{q^{n^2}}=\frac{{nN_{layer} \choose R}n}{q}=\frac{1}{2}.
\end{equation}
Therefore, the expected number of trials until the vectors in $\mathbf{W'}$ form a basis is at most $2$. A single layer has at most $N_{layer}$ supernodes. The total number of edges connecting input and output ports of a certain supernode is $n^2$. It follows that the total number of edges at the layer is bounded by $N_{layer}n^2$. In our case, the equivalent to the number of sinks $|\m{T}|$ in \cite{jaggi05} is the size of ${\cal L}_t$, which is at most ${nN_{layer} \choose R}$. Therefore, similarly to the complexity in \cite{jaggi05} for network coding, the average complexity for a layer is $O({nN_{layer} \choose R}N_{layer}n^2 R)$. If the total number of layers is $\lambda$, then the total average complexity of finding the coding coefficients of the output ports is $O({nN_{layer} \choose R}N_{layer}n^2R\lambda)$.

\subsubsection{\textbf{Coding for Input Ports}}\label{code_cons_input}
The coding for the input ports is performed jointly over all supernodes in the same layer. Assume that the coding coefficients
of the output ports of layer $l$ have already all been updated according to Section \ref{code_cons_sec}. We need to update the
coding coefficients of the input ports of layer $l+1$. The list ${\cal L}_t$ contains ports from layer $l$ only. From the list $\m{L}_t$,
we choose an arbitrary subset
\begin{equation}
W=\{w_1,\cdots,w_{R}\} \in \m{L}_t.
\end{equation}
The set $W'$ is a subset of $R$ input ports at layer $l+1$. Consider the bipartite network that consists of $W$ and the input
ports of layer $l+1$, with edges from ports in $W$ to ports in layer $l+1$. Let $\m{B}(W, W')$ denote the bipartite subgraph with vertices $(W, W')$ and edges from ports in $W$ to ports in $W'$. Let $G_{(W, W')}$ denote the incidence matrix for ${\cal B}(W,W')$. If $G_{(W,W')}$ is full rank, then we shall show that we can find coding coefficients such that the coding vectors of the ports in $W'$ are linearly independent. If we cannot find a set $W'$ such that $G_{(W, W')}$ is full rank, then we remove $W$ from the list ${\cal L}_{t}$ and do not replace it with a new set $W'$. Nevertheless, we shall show that whenever $W$ is regular, we can always find a regular $W'$ such that $G_{(W,W')}$ is full rank.

In Figure \ref{exm4}, we see the sets $W,W'$. It can be verified that $R=3$; however, the rank of the incidence matrix, $\text{rank}(G_{(W, W')})=2$. The set $W'$ is not regular according to Definition \ref{den_regular} since the upper and the lower ports in $W'$ always receive the same symbol.

\begin{figure}[tbp]
  \begin{center}
    \leavevmode
    \epsfig{file=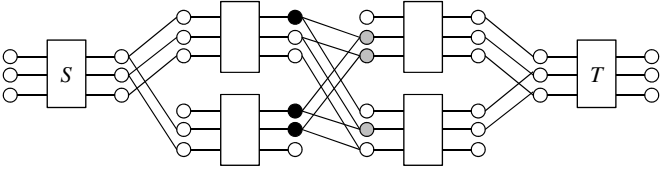,width=8.8cm}
        \caption{Example of a non-regular $W'$. Ports in black are in $W$, and ports in grey are in $W'$.}
    \label{exm4}
  \end{center}
\end{figure}

\begin{lemma}
\label{conv} If $W^{\prime }$ is a regular set containing $R$ input ports of supernodes at layer $l+1$ for some $1\leq l\leq
\lambda -1$, then there exists a regular set $W$ containing $R$ output ports on supernodes at layer $l$ such that the incidence matrix $G_{(W,W')}$ is full rank.
\end{lemma}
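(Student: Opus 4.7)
The plan is to prove the lemma algebraically by exploiting the fact that regularity has a clean linear-algebraic characterization. By the Min-cut Max-flow Theorem proved earlier in the paper, an $R$-port set $\tilde{W}$ is regular exactly when there exists an assignment of the coding coefficients that makes the $R$ coding vectors sitting on $\tilde{W}$ linearly independent in $\mathbb{F}_q^R$. So I will translate both the hypothesis on $W'$ and the desired conclusion on $W$ into statements about simultaneous invertibility of two $R \times R$ submatrices, and then connect them via the Cauchy--Binet formula.

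Stack the coding vectors at $O_l$ (the output ports at layer $l$) as rows of an $|O_l| \times R$ matrix $Y$, and the coding vectors at $W'$ as rows of an $R \times R$ matrix $Y'$. The additive MAC superposition at each input port in $W'$, together with the virtual coding coefficients $k_j$ attached to the output ports at layer $l$, gives the linear relation
\[
Y' \;=\; G^{T} D \, Y,
\]
where $G = G_{(O_l, W')}$ is the $0$-$1$ biadjacency matrix from $O_l$ to $W'$, and $D = \operatorname{diag}(k_1, \dots, k_{|O_l|})$ is diagonal (it is precisely the paper's requirement that a common $k_j$ be shared across all inputs in $C(y_j)$ which makes $D$ diagonal). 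Applying Cauchy--Binet to this product of an $R \times |O_l|$ matrix and an $|O_l| \times R$ matrix yields
\[
\det Y' \;=\; \sum_{\substack{W \subseteq O_l \\ |W| = R}} \Big(\prod_{j \in W} k_j \Big)\, \det G_{(W, W')}\, \det Y|_W,
\]
where $Y|_W$ denotes the $R \times R$ submatrix of $Y$ formed by the rows indexed by $W$.

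Now I invoke the hypothesis that $W'$ is regular. By the algebraic characterization of regularity, there is a joint choice of coding coefficients in the layers up to $l$ and of virtual coefficients at the inputs of $W'$ for which $\det Y' \neq 0$. Hence at least one term of the Cauchy--Binet expansion is nonzero, producing an $R$-subset $W \subseteq O_l$ with both $\det G_{(W, W')} \neq 0$ and $\det Y|_W \neq 0$. The first inequality is exactly the desired conclusion that $G_{(W, W')}$ has full rank. The second says that, under this same coding, the coding vectors sitting on $W$ are already linearly independent; so by the algebraic characterization applied in the other direction, $W$ is a regular set, which delivers the required $W$.

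The one place that requires care is the derivation of the structural identity $Y' = G^{T} D Y$: I must verify that it simultaneously captures the additive-MAC superposition at each port of $W'$ and the sharing of a common virtual coefficient $k_j$ across all input ports in $C(y_j)$, so that $D$ really is diagonal and Cauchy--Binet applies in the clean form above. Once that identity is justified, Cauchy--Binet plus the two uses of the algebraic characterization of regularity (once forward, once backward) finishes the proof without further work.
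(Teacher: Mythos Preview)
Your proof is correct and takes a genuinely different route from the paper's. The paper argues combinatorially: it invokes the unicast path-decomposition results of Amaudruz--Fragouli (and the related matroid/flow results cited alongside) to exhibit $R$ linearly independent (LI) paths from $S$ to the virtual sink $T(W')$; the output ports at layer $l$ lying on these paths constitute $W$, and both the regularity of $W$ and the full rank of $G_{(W,W')}$ are then read off directly from the defining property of LI paths. Your argument, by contrast, stays entirely within the paper's own algebraic framework: you use the Min-cut Max-flow characterization of regularity together with Cauchy--Binet applied to the factorization $Y' = G^{T} D\, Y$, and extract $W$ as a surviving term of the expansion. Your approach is more self-contained---it needs no appeal to an external flow construction---and is in keeping with the algebraic viewpoint the paper develops; the paper's approach is somewhat more constructive, in that running the LI-path algorithm actually outputs a specific $W$, whereas your Cauchy--Binet term is purely existential. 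For the use this lemma is put to downstream (certifying that a certain polynomial is not identically zero), either proof suffices.
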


\begin{proof}
The result follows from \cite{sadegh09}\cite{goemans}\cite{amaudruz09}\footnote{These works consider the problem of unicast communication. We can use their result for the proof of Lemma \ref{conv} since we are interested here in a single set $W'$ with a corresponding virtual sink $T(W')$.}. Specifically, in \cite{amaudruz09}, a path is defined as a disjoint set of edges $(e_1,\cdots,e_{\mu})$ where $e_1$ starts from the source, $e_{\mu}$ enters a certain sink, and $e_i$ enters the same supernode from which $e_{i+1}$ emerges. In this proof, we consider the virtual sink $T(W')$ as our sink.

In \cite{amaudruz09}, linearly independent (LI) paths are defined. Consider the subgraph $G'$ of the network $G$ consisting of $K$ paths from source $S$ to $T(W')$. The paths are LI if in $G'$ the rank of the incidence matrix of any cut is exactly $K$. We call a set of $R$ LI paths the \textit{underlying flow} $F_u$. It has been shown in \cite{amaudruz09} that such an underlying flow $F_u$ exists. In $F_u$, we can consider the $R$ output ports of layer $l$, one from each path. This set of $R$ ports is guaranteed to be regular, by definition of the underlying flow. This set of output ports will be chosen as $W$. The $\text{rank}(G_{(W, W')})$ is full rank, again by definition of linearly independent paths. Therefore, the two properties of the lemma are maintained.
\end{proof}

We note that in our construction we do not need to find the edges in $F_u$. The concept of the underlying flow was introduced only for
the proof of Lemma \ref{conv}.

\begin{lemma}
\label{dir} For a regular set $W'$, we can find coding coefficients such that the coding vectors $\mathbf{W'}$ are linearly independent simultaneously if the alphabet size $q>2{nN_{layer}\choose R}$.
\end{lemma}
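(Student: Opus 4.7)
The plan is to prove this by a polynomial-identity/counting argument, applied uniformly to every regular subset $W'$ of $R$ input ports at layer $l+1$, and invoking Lemma \ref{conv} to convert the regularity of $W'$ into a full-rank incidence condition at layer $l$.

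First I would fix an arbitrary regular $W'$. By Lemma \ref{conv}, there exists a regular set $W=\{y_1,\ldots,y_R\}$ of output ports at layer $l$ such that the bipartite incidence matrix $G_{(W,W')}$ has full rank. By the inductive invariant, the coding vectors $\mathbf{W}=\{\mathbf{y}_1,\ldots,\mathbf{y}_R\}$ are linearly independent in $\mathbb{F}_q^R$. Each input port $x\in W'$ carries the coding vector $\mathbf{x}=\sum_{y\in P(x)} k_y \mathbf{y}$, which is a linear form in the virtual coding coefficients $k_y$ at the output ports of layer $l$. Assembling these $R$ vectors as the columns of an $R\times R$ matrix $M_{W'}$, its determinant is a polynomial $P_{W'}(\{k_y\})$ of total degree at most $R$ in the $k_y$ variables.

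Second, I would show that $P_{W'}$ is not the zero polynomial by exhibiting a point at which it does not vanish. Set $k_y=1$ for $y\in W$ and $k_y=0$ for every other output port of layer $l$. Under this evaluation, the coding vector of each $x\in W'$ collapses to $\sum_{y\in W\cap P(x)} \mathbf{y}$, so $M_{W'}$ factors as $Y_W\cdot G_{(W,W')}^T$, where $Y_W$ is the $R\times R$ matrix with columns $\mathbf{y}_1,\ldots,\mathbf{y}_R$. The first factor is invertible by the inductive invariant, and the second is invertible by Lemma \ref{conv}; therefore $P_{W'}$ is a nonzero polynomial.

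Third, I would close the argument with a union bound. Since $P_{W'}$ has total degree at most $R$ and is linear (in fact squarefree and multilinear, because repeated indices force columns to collapse) in each $k_y$, a Schwartz–Zippel style estimate bounds the fraction of $k$-assignments in $\mathbb{F}_q^m$ at which $P_{W'}$ vanishes. The number of regular subsets $W'$ at layer $l+1$ is at most $\binom{nN_{layer}}{R}$, since the layer contains at most $nN_{layer}$ input ports. A union bound over all such $W'$ then shows that, provided $q>2\binom{nN_{layer}}{R}$, there remains at least one joint assignment of the $k_y$'s for which every regular $W'$ at layer $l+1$ has linearly independent coding vectors simultaneously. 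The main obstacle is to push the counting tight enough to get the constant factor $2$ rather than the looser $R$ that Schwartz–Zippel supplies directly; the trick is to exploit the multilinear structure of $\det M_{W'}$ and count singular coefficient patterns via the same combinatorial identity used in Equation (\ref{lem1eq}) of Lemma \ref{lem3}, yielding the stated bound.
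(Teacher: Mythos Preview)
Your overall setup is sound and mirrors the paper: invoke Lemma \ref{conv} to obtain a full-rank incidence matrix $G_{(W,W')}$, show that the determinant $P_{W'}$ is a nonzero polynomial in the virtual coefficients $k_y$ (your substitution $k_y=1$ on $W$, $k_y=0$ off $W$ is equivalent to the paper's leading-term analysis), and correctly observe that $P_{W'}$ has degree at most $1$ in each $k_y$. The gap is in your last paragraph. Schwartz--Zippel together with a union bound over the at most ${nN_{layer}\choose R}$ regular sets gives only $q>R{nN_{layer}\choose R}$, as you note. Your proposed repair---importing the nonsingular-matrix count (\ref{lem1eq}) from Lemma \ref{lem3}---does not apply here. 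In that lemma the matrix entries $\beta_{x_i^V,y_j^V}+\gamma_{i,j}$ are independent, one free variable per entry, so the matrix is uniformly distributed over $\mathbb{F}_q^{p\times p}$ and (\ref{lem1eq}) counts exactly. In the input-port case the entries of $M_{W'}$ (equivalently the paper's matrix $H$ in (\ref{mat_rel5})) are linear forms that \emph{share} variables: each $k_j$ multiplies an entire row of a fixed $\{0,1\}$ incidence pattern, so the matrix is far from uniform and (\ref{lem1eq}) gives no count of singular assignments.

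The paper closes the gap differently, using precisely the multilinearity you already identified but deploying it through the \emph{per-variable} degree rather than the total degree. It forms the single product polynomial $P=\prod_{W'}\Delta_{W,W'}$; since each factor has degree at most $1$ in each $k_j$, the product has degree at most ${nN_{layer}\choose R}$ in each $k_j$. It then applies the deterministic coordinate-wise assignment of Algorithm \ref{alg:poly}, which requires only that the field size exceed the maximal per-variable degree; hence $q>{nN_{layer}\choose R}$ suffices, and the factor $2$ appears only from rounding the field size up to the next power of two. In short: replace Schwartz--Zippel plus union bound by the product-polynomial per-variable-degree argument and Algorithm \ref{alg:poly}, and the stated bound follows.
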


\begin{proof}
By Lemma \ref{conv}, given a regular set $W'$, there exists a regular set $W$ containing $R$ output ports of supernodes in layer $l$ such that the incidence matrix $G_{(W,W')}$ is full rank. The coding vectors of the output ports in $W$ are given by
\begin{equation}
\mathbf{W}=\{\mathbf{w}_1,\cdots,\mathbf{w}_{R}\}.
\end{equation}
The coding vectors in the input ports in $W'$ are given by
\begin{equation}
\mathbf{W'}=\{\mathbf{w}'_1,\cdots,\mathbf{w}'_{R}\}.
\end{equation}
The vector $\mathbf{w}_i$ is in the form
\begin{equation}
\label{vieq}
\mathbf{w}'_i=\sum_{j=1}^{R}g_{i,j}k_{j}\mathbf{w}_j+\mathbf{\tilde{w}_i}
\end{equation}
where $k_{j}$ are the virtual coefficients, and $\mathbf{\tilde{w}}_i$ is the contribution of output ports of layer $l$ that are not in $W$. The binary coefficient $g_{i,j}$ is the $(i,j)$th element of matrix $G_{(W,W')}$.

We need to find the conditions on the coefficient $k_j$ under which the ports in $W'$ have coding vectors which are linearly
independent. Consider the equation
\begin{equation}
\varphi_1\mathbf{w}'_1+\cdots+\varphi_{R}\mathbf{w}'_{R}=0.
\end{equation}
Combining with (\ref{vieq}), and rearranging,
\begin{align}
\label{via} &\varphi_1\left(\sum_{j=1}^R g_{1,j}k_{j}\mathbf{w}_j\right)+\cdots+\varphi_{R}\left(\sum_{j=1}^R g_{R,j}k_{j}\mathbf{w}_j\right)\nonumber\\
&=-\alpha_1\mathbf{\tilde{w}_1}-\cdots-\alpha_{h_{min}}\mathbf{\tilde{w}_{h_{min}}}.
\end{align}
We can represent vector $\mathbf{\tilde{w}}_i$ in the basis $\mathbf{W}$ as
\begin{equation}
\mathbf{\tilde{w}_i}=\gamma_{1,i}\mathbf{w}_1+\cdots+\gamma_{R,i}\mathbf{w}_{R}.
\end{equation}
Combining with (\ref{via}) and rearranging terms,
\begin{align}
\label{via2}
(\varphi_1 \gamma_{1,1}+&\cdots+\varphi_{R}\gamma_{1,R}+\varphi_1 g_{1,1}k_{1} +\cdots+ \varphi_{R}g_{R,1}k_{1})\mathbf{w}_1\nonumber\\
+\cdots&+ (\varphi_1\gamma_{R,1}+\cdots+\varphi_{R}\gamma_{R,R}\nonumber\\
&+\varphi_1h_{1,R}k_{R}+\cdots+\varphi_{R}g_{R,R}k_{R})\mathbf{w}_{R}=0.
\end{align}
Since $W$ is a basis, it follows that
\begin{eqnarray}
\varphi_1\gamma_{1,1}+\cdots+\varphi_{R}\gamma_{1,R}+\gamma_1g_{1,1}k_{1}+\cdots+\varphi_{R}g_{R,1}k_{1}&=0\nonumber\\
\vdots &\nonumber\\
\varphi_1\gamma_{R,1}+\cdots+\varphi_{R}\gamma_{R,R}\hspace*{4.3cm}& \nonumber \\
+\varphi_1 g_{1,R}k_{R}+\cdots+\varphi_{R}g_{R,R,}k_{R}&=0,\nonumber
\end{eqnarray}
or in matrix notation,
\begin{small}
\begin{eqnarray}
\label{mat_rel5}
\begin{pmatrix}
\gamma_{1,1}+g_{1,1}k_{1}&\cdots&\gamma_{1,R}+g_{R,1}k_{1}\\
\vdots     &\ddots& \vdots\\
\gamma_{R,1}+g_{1,R}k_{R}&\cdots&\gamma_{R,R}+g_{R,R}k_{R}\\
\end{pmatrix}\begin{pmatrix}
\varphi_1\\
\vdots \\
\varphi_{R}
\end{pmatrix}=\begin{pmatrix}
0\\
\vdots \\
0
\end{pmatrix}
\end{eqnarray}
\end{small}

Let $H$ denote the matrix on the left-hand side of (\ref{mat_rel5}). The zero vector is the only solution to (\ref{mat_rel5}) if and only if the matrix $H$ is full rank. The determinant of the matrix $H$ is a polynomial in the parameters $\{\gamma_{i,j},k_{j},g_{i,j}$, for $1\leq
i,j\leq R\}$. Denote the polynomial by $\Delta_{W,W'}(\beta_{i,j},k_{j},h_{i,j})$. When all $\gamma_{i,j}=0$, row $i$ of $H$ is equal to the $i$th row of $G_{(W, W')}$ multiplied by $k_i$. Therefore, the polynomial $\Delta_{W,W'}$ is of the form:
\begin{align}
\label{delww} \Delta_{W,W'}(\gamma_{i,j},k_{j},g_{i,j})=\text{det}(G_{(W, W')})\prod_{j=1}^R k_j+\delta(\gamma_{i,j},k_{j},g_{i,j})
\end{align}
where $\text{det}(G_{(W, W')})\neq 0$ is the determinant of (non-singular) matrix $G_{W,W'}$, and
$\delta(\cdot)$ is a polynomial such that the sum of the degrees of all the parameters $k_{j},1\leq j\leq R$, is smaller than
$R$. It follows that for constant $\gamma_{i,j},g_{i,j}$, $1\leq i,j\leq R$, $\Delta_{W, W'}$ is not the zero polynomial.

The polynomial $\Delta_{W,W'}$ corresponds to the pair of regular subsets $(W,W')$. We need to find the corresponding polynomials
for all regular $W'$. Let $\m{P}_t$ denote the set of all these pairs $(W,W')$. In order for all such sets $W'$ to have independent coding vectors, we need to assign the coding coefficients such that the following polynomial does not vanish to zero.
\begin{equation}
\label{delw}P=\prod_{W':\exists(W,W')\in {\cal P}_t} \Delta_{W,W'}.
\end{equation}
By definition, $P$ is not the zero polynomial since it is a product of nonzero polynomials. Hence, there is a set of coefficients
$k_{j}$ such that the polynomial does not vanish to zero.

Next, we discuss how such coefficients $k_{j}$'s can be found. Reference \cite{algebraic} proposes an algorithm to find a vector $\underline{a}$ such that a given polynomial $P$ evaluated at $\underline{a}$ is not equal to zero, \ie $P(\underline{a}) \ne 0$. We reproduce this algorithm from \cite{algebraic} in Algorithm \ref{alg:poly} for completeness.

\begin{algorithm}[h!]
\SetKwInOut{innnn}{Input}\SetKwInOut{Output}{Output}
\SetKwInOut{one}{Step 1}\SetKwInOut{two}{Step 2}\SetKwInOut{three}{Step 3}
\innnn{A polynomial $P$ in variables $\xi_1, \xi_2, \cdots, \xi_n$; integers $i = 1$, $t=1$}
\Output{$\underline{a} = (a_1, a_2, ..., a_n)$}
\one{Find the maximal degree $d$ of $P$ in any variable $\xi_i$, and let $i$ be the smallest number such that $2^i > d$\;}
\two{Find an element $a_t$ in $\mathbb{F}_{2^i}$ such that $P(\xi)|_{\xi_t = a_t} \ne 0$ and let $P \leftarrow P(\xi)|_{\xi_t = a_t}$\;}
\three {If {$t = n$, then halt; otherwise, $t \leftarrow t+1$ and go to \textbf{Step 2}\;}}\vspace*{.3cm}
\caption{Algorithm to find $\underline{a}$ such that $P(\underline{a}) \ne 0$ \cite{algebraic}.}\label{alg:poly}
\end{algorithm}

In our scenario, the maximal degree of each variable in $\Delta_{W,W'}$ is $1$ because of the structure of the matrix. It
follows that the maximal degree of each variable in $P$ is at most ${nN_{layer} \choose R}$. Therefore, $d={nN_{layer} \choose
R}$ and we can always choose $i=\lceil\log{nN_{layer} \choose R}\rceil$. It follows that an alphabet larger than $2^{\lceil\log{nN_{layer} \choose R}\rceil}\leq 2{nN_{layer} \choose R}$ will ensure that there exists coding coefficients such that $\mathbf{W'}$ are linearly independent.
\end{proof}

\begin{theorem}
The invariant of the algorithm is maintained for input ports.
\end{theorem}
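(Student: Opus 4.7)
The plan is to proceed by induction on the layer index $l$, with the base case being trivial at the source (layer $1$), where the upper $R$ input ports carry the standard basis vectors and $\mathcal{L}_t$ is maintained by construction. For the inductive step, I would assume that the invariant has been established up through the output ports of layer $l$ (which is exactly the content of the preceding theorem for output ports), and show that after the virtual coding coefficients $k_j$ for the edges crossing into layer $l+1$ are chosen, every regular subset $W'$ of $R$ input ports at layer $l+1$ that appears in $\mathcal{L}_{t+1}$ carries linearly independent coding vectors.

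First I would describe how $\mathcal{L}_{t+1}$ is obtained from $\mathcal{L}_t$. After the output-port stage is complete, $\mathcal{L}_t$ consists of regular subsets of $R$ output ports at layer $l$. Crossing to layer $l+1$ requires removing each $W\in\mathcal{L}_t$ and inserting in its place those regular subsets $W'$ of $R$ input ports at layer $l+1$ that are reachable from $W$ via the bipartite incidence structure. By Lemma~\ref{conv}, \emph{every} regular $W'$ at layer $l+1$ admits at least one regular $W$ at layer $l$ with $G_{(W,W')}$ full rank; hence the update rule does not omit any regular set at layer $l+1$ that ought to be tracked. Conversely, if no $W'$ reachable from $W$ is regular (equivalently, no full-rank $G_{(W,W')}$ exists with $W'$ regular), then $W$ is simply dropped, which cannot violate the invariant since the invariant only constrains sets in $\mathcal{L}_{t+1}$.

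Next I would invoke Lemma~\ref{dir} to choose the $k_j$'s. For each pair $(W,W')$ with $W\in\mathcal{L}_t$ and $W'$ a regular subset of layer-$(l+1)$ input ports reachable from $W$, the determinant polynomial $\Delta_{W,W'}$ constructed in the proof of Lemma~\ref{dir} is a nonzero polynomial in the variables $k_1,\ldots,k_R$, because its leading term $\det(G_{(W,W')})\prod_j k_j$ has a nonzero coefficient by the full-rank property of $G_{(W,W')}$. Taking the product
\begin{equation*}
P \;=\; \prod_{W' : \exists\,(W,W')\in\mathcal{P}_t} \Delta_{W,W'}
\end{equation*}
yields a nonzero polynomial, since a product of nonzero polynomials over a field is nonzero. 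Applying Algorithm~\ref{alg:poly} (the Koetter--M\'edard polynomial-evaluation procedure) with field size $q > 2\binom{nN_{layer}}{R}$ produces an assignment of the $k_j$'s with $P\neq 0$, and hence $\Delta_{W,W'}\neq 0$ for every tracked pair. By Equation~(\ref{mat_rel5}), this is exactly the condition that the coding vectors of $W'$ are linearly independent.

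The main obstacle, and the step I would scrutinize most carefully, is ensuring that the \emph{same} choice of virtual coefficients $k_1,\ldots,k_R$ works \emph{simultaneously} for all regular $W'$ in the updated list. This is the reason one must package all $\Delta_{W,W'}$ into a single product $P$ and apply a large-field argument; individual full-rank-ness of $G_{(W,W')}$ is not enough, since different $W'$'s share the same $k_j$ variables through the bipartite structure. Once Lemma~\ref{dir} is invoked with the field-size bound, the inductive step closes, and the invariant is preserved across the input ports of layer $l+1$, completing the induction.
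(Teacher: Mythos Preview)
Your proposal is correct and follows essentially the same approach as the paper: induction on the layer index, with the base case handled by the standard basis at the source and the inductive step closed by invoking Lemma~\ref{conv} (to guarantee that every regular $W'$ at layer $l+1$ is reached from some regular $W$ at layer $l$ with full-rank $G_{(W,W')}$) together with Lemma~\ref{dir} (to produce a single assignment of the virtual coefficients making all such $W'$ carry independent coding vectors simultaneously). The paper's proof is terser and folds the role of Lemma~\ref{conv} into the statement of Lemma~\ref{dir}, whereas you make the list-update and simultaneity issues explicit, but the skeleton is identical.
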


\begin{proof}
We prove the theorem by induction. For the base case, consider the $R$ upper output ports of the source $S$. We assign the standard basis as coding vectors for these $R$ output ports. We then apply Lemmas \ref{conv} and \ref{dir} to the first layer.

For the inductive step, assume that the statement holds for $\m{Q}_t$, where $\m{Q}_t$ contains the output ports of the supernodes in layer $l$. Now, we show that the invariant is maintained for ${\cal Q}_{t+1}$, which contains the input ports of the supernodes in layer $l+1$. If $W'$ is a regular subset, then by Lemma \ref{dir}, there is a coding assignment (according to the code construction presented) such that vectors in $\mathbf{W'}$  are independent. Therefore, the invariant is maintained also for layer $l+1$.
\end{proof}

We now analyze the complexity. For each pair of sets $(W,W')$, we need to verify whether the incidence matrix $G_{(W,W')}$ of the bipartite graph ${\cal B}(W,W')$ is full rank. This can be performed by determinant computation in complexity $O(R^3)$. Therefore, the total complexity of this stage for a single layer is $O(R^3{nN_{layer} \choose R})$.

For each variable of $P$, we require an iteration of the Algorithm \ref{alg:poly}. Each iteration of Algorithm \ref{alg:poly} takes at most $2^{\lceil\log{n N_{layer} \choose R}\rceil}$ assignments need to be verified. There are at most $N_{layer}$ supernodes at each layer. Therefore, the maximal number of output ports, which is also the number of variables $k_j$, is $nN_{layer}$. It follows that the complexity of the coding for input ports for a single layer is $O({nN_{layer} \choose R}(nN_{layer}+R^3)$. Therefore, if we consider all $\lambda$ layers, the complexity for the coding for the input ports is $O({nN_{layer} \choose R}\lambda(nN_{layer}+R^3))$. Combining the complexity for both input and output ports, it follows that the total complexity of the algorithm is $O({nN_{layer} \choose R}N_{layer}n^2 \lambda R)$. 

\section{Conclusions}\label{sec:conclusions}

%
%
ADT networks \cite{adt1}\cite{adt2} have drawn considerable attention for their potential to approximate the capacity of wireless relay networks. In this paper, we showed that the ADT network can be described well within the algebraic network coding framework \cite{algebraic}. This connection between ADT network and algebraic network coding allows the use of results on network coding to understand better the ADT networks. 

In this paper, we derived an algebraic definition of min-cut for the ADT networks, and provided an algebraic interpretation of the Min-cut Max-flow theorem for a single unicast/mulciast connection in ADT networks. Furthermore, by taking advantage of the algebraic structure, we have shown feasibility conditions for a variety of set of connections $\m{C}$, such as multiple multicast, disjoint multicast, and two-level multicast. We also showed optimality of linear operations for the connections listed above in the ADT networks, and showed that random linear network coding achieves the capacity. Furthermore, we extended the capacity characterization to networks with cycles and random erasures/failures. We proved the optimality of linear operations (as well as random linear network coding) for multicast connections in ADT networks with cycles. By incorporating random erasures into the ADT network model, we showed that random linear network coding is robust against failures and erasures.

Taking advantage of this insight, we proposed an efficient linear code construction for multicasting in ADT networks while guaranteeing decodability. The average complexity of the construction is $O({nN_{layer} \choose R}N_{layer} n^2 \lambda R)$. The required field size is at most $q=n {nN_{layer} \choose R}$; thus, the block length required to represent a symbol is at most $k=\log(n{N_{layer}n \choose
R})$. Our code construction does not require finding network flows or knowing the exact location of the sinks. When normalized by the number of sinks, our code construction has a complexity which is comparable to those of previous coding schemes for a single sink. A possible direction for future research is to use our construction to find new coding schemes for practical multiuser networks with receiver noise.

\bibliographystyle{IEEEtran}
\bibliography{References}

\end{document}